\numberwithin{equation}{section}
\definecolor{darkblue}{rgb}{0.,0.,0.4}
\definecolor{darkred}{rgb}{0.5,0.,0.}
\definecolor{BlueViolet}{RGB}{138,43,226}
\definecolor{SkyBlue}{RGB}{30,144,255}
\definecolor{DarkGreen}{RGB}{0,100,0}
\newcounter{mainthm}
\newtheorem{maintheorem}[mainthm]{Theorem}
\newtheorem{thm}[equation]{Theorem}
\newtheorem{conjecture}[equation]{Conjecture}
\newtheorem{lem}[equation]{Lemma}
\newtheorem{prop}[equation]{Proposition}
\newtheorem{cor}[equation]{Corollary}
\newtheorem{ques}[equation]{Question}
\newtheorem*{thm*}{Theorem}
\theoremstyle{definition}
\newtheorem{defn}[equation]{Definition}
\newtheorem{fakedefn}[equation]{``Definition''}
\newtheorem{example}[equation]{Example}
\theoremstyle{remark}
\newtheorem{rem}[equation]{Remark}
\newtheorem{Disclaimer}[equation]{Disclaimer}
\newtheorem{ansatz}[equation]{Ansatz}
\crefname{thm}{Theorem}{Theorems}
\crefname{maintheorem}{Theorem}{Theorems}
\crefname{prop}{Proposition}{Propositions}
\crefname{lem}{Lemma}{Lemmas}
\crefname{defn}{Definition}{Definitions}
\crefname{cor}{Corollary}{Corollaries}
\newcommand{\beq}{\begin{eqnarray}}
	\newcommand{\eeq}{\end{eqnarray}}
\newcommand{\bsp}{\begin{aligned}}
	\newcommand{\esp}{\end{aligned}}
\DeclarePairedDelimiter{\set}{\{}{\}}
\newcommand{\Z}{\mathbb{Z}}
\newcommand{\inj}{\hookrightarrow}
\def\instring#1#2{TT\fi\begingroup
  \edef\x{\endgroup\noexpand\in@{#1}{#2}}\x\ifin@}
\def\isuppercase#1{%
  \instring{#1}{ABCDEFGHIJKLMNOPQRSTUVWXYZ}%
}%
\newcommand{\C@lIfUpper}[1]{
 \if\isuppercase{#1}\mathscr{#1}%
 \else #1%
 \fi
}
\newcommand{\cat}[1]{\mathit{\@tfor\next:=#1\do{\C@lIfUpper{\next}}}}
\newcommand{\bZ}{\mathbb Z}
\newcommand{\cZ}{\mathcal Z}
\newcommand{\C}{\mathbb C}
\newcommand{\U}{\mathrm U}
\newcommand{\Spin}{\mathrm{Spin}}
\newcommand{\MTSpin}{\mathit{MTSpin}}
\newcommand{\ko}{\mathit{ko}}
\newcommand{\Sq}{\mathrm{Sq}}
\newcommand{\abs}[1]{\lvert #1 \rvert}
\newcommand{\pt}{\mathrm{pt}}
\newcommand{\SO}{\mathrm{SO}}
\newcommand{\GL}{\mathrm{GL}}
\renewcommand{\O}{\mathrm O}
\newcommand{\id}{\mathrm{id}}
\newcommand{\CP}{\mathbb{CP}}
\newcommand{\bl}{\text{--}}
\newcommand{\R}{\mathbb R}
\newcommand{\Pin}{\mathrm{Pin}}
\newcommand{\SH}{\mathit{SH}}
\newcommand{\Vect}{\mathbf{Vect}}
\newcommand{\tVect}{\mathbf{2Vect}}
\newcommand{\tsVect}{\mathbf{2sVect}}
\newcommand{\Mod}{\mathbf{Mod}}
\newcommand{\Bimod}{\mathbf{Bimod}}
\DeclareRobustCommand*{\RaiseBoxByDepth}{%
    \raisebox{\depth}%
}
\newcommand{\uQ}{\RaiseBoxByDepth{\protect\rotatebox{180}{$Q$}}}
\newcommand{\KO}{\mathit{KO}}
\newcommand{\Sph}{\mathbb S}
\newcommand{\term}{\emph}
\newcommand{\cA}{\mathcal A}
\newcommand{\cB}{\mathcal B}
\newcommand{\fB}{\mathfrak B}
\newcommand{\fC}{\mathfrak C}
\newcommand{\SW}{\mathscr{S}\mathscr{W}}
\newcommand{\Ext}{\mathrm{Ext}}
\newcommand{\Hom}{\mathrm{Hom}}
\newcommand{\vp}{\varphi}
\newcommand{\MTSO}{\mathit{MSO}}
\newcommand{\sWitt}{s\mathscr{W}itt}
\newcommand{\sW}{s\mathcal{W}}
\newcommand{\MSpin}{\MTSpin}
\newcommand{\ku}{\mathit{ku}}
\newcommand{\sAut}{\mathscr{A}ut}
\newcommand{\sPic}{\mathscr{P}ic}
\newcommand{\sVect}{\mathbf{sVect}}
\newcommand{\alignedintertext}[1]{%
  \noalign{%
    \vskip\belowdisplayshortskip
    \vtop{\hsize=0.75\linewidth#1\par
    \expandafter}%
    \expandafter\prevdepth\the\prevdepth
  }%
}
\DeclareDocumentCommand{\shortexact}{s O{} O{} mmmm}{
\IfBooleanTF{#1}{ 
\begin{tikzcd}[ampersand replacement=\&]
	{1} \& {#4} \& {#5} \& {#6} \& {1#7}
	\arrow[from=1-1, to=1-2]
	\arrow["#2", from=1-2, to=1-3]
	\arrow["#3", from=1-3, to=1-4]
	\arrow[from=1-4, to=1-5]
\end{tikzcd}
}{ 
\begin{tikzcd}[ampersand replacement=\&]
	{0} \& {#4} \& {#5} \& {#6} \& {0#7}
	\arrow[from=1-1, to=1-2]
	\arrow["#2", from=1-2, to=1-3]
	\arrow["#3", from=1-3, to=1-4]
	\arrow[from=1-4, to=1-5]
\end{tikzcd}
}}
\DeclareDocumentCommand{\rightexact}{s O{} O{} mmmm}{
\IfBooleanTF{#1}{ 
\begin{tikzcd}[ampersand replacement=\&]
	{#4} \& {#5} \& {#6} \& {1#7}
	\arrow["#2", from=1-1, to=1-2]
	\arrow["#3", from=1-2, to=1-3]
	\arrow[from=1-3, to=1-4]
\end{tikzcd}
}{ 
\begin{tikzcd}[ampersand replacement=\&]
	{#4} \& {#5} \& {#6} \& {0#7}
	\arrow["#2", from=1-1, to=1-2]
	\arrow["#3", from=1-2, to=1-3]
	\arrow[from=1-3, to=1-4]
\end{tikzcd}
}}
\newcommand{\AdamsTower}[1]{\DoUntilOutOfBounds{
	\class[#1](\lastx, \lasty+1)
	\structline[#1]
}}
\newcommand{\pinp}{pin\textsuperscript{$+$}\xspace}
\newcommand{\pinm}{pin\textsuperscript{$-$}\xspace}
\newcommand{\Aut}{\mathrm{Aut}}
\newcommand\MAILTO[1]{\href{mailto:#1}{\nolinkurl{#1}}}
\begin{document}

\title{How to Build Anomalous (3+1)-dimensional Topological Quantum Field Theories}

\author{Arun Debray}
\address{Department of Mathematics, University of Kentucky, 719 Patterson Office Tower, Lexington, KY 40506-0027}
\email{a.debray@uky.edu}

\author{Weicheng Ye}
\address{Department of Physics and Astronomy, and Stewart Blusson Quantum Matter Institute, University of British Columbia, Vancouver, BC, Canada V6T 1Z1}
\email{victoryeofphysics@gmail.com}

\author{Matthew Yu}
\address{Mathematical Institute, University of Oxford,  Woodstock Road, Oxford, UK}
\email{yumatthew70@gmail.com}

\begin{abstract}
We develop a systematic framework for constructing (3+1)-dimensional topological orders or topological quantum field theories (TQFTs)
that realize specified anomalies of finite symmetries, as encountered in gauge theories with fermions or in fermionic
lattice systems. Our approach generalizes the symmetry-extension construction to the fermionic
setting, and is grounded in recent advances in the categorical classification of anomalous TQFTs in (3+1)d. In this framework, symmetry-extension data of a \textit{supercohomology} theory are translated into a fusion 2-category, on which the anomalous TQFT is built. Building on this machinery, we demonstrate explicit calculations for various symmetry groups and their associated anomalies, with the help of a \textit{hastened Adams spectral sequence} for computing supercohomology groups which we will detail in a planned sequel. Finally, we prove that all supercohomology anomalies can be realized by fermionic topological orders, whereas beyond-supercohomology anomalies cannot, resolving a question of Córdova--Ohmori for fermionic (3+1)d systems with finite symmetries.
\end{abstract}

\maketitle

\tableofcontents

\section{Introduction}
A central question in high-energy and condensed matter physics is whether a given phase can be realized as the infrared (IR) description of some ultraviolet (UV) theory, where the UV theory may be either a weakly coupled gauge theory or a lattice model. In particular, when a UV theory carries a specified anomaly, the corresponding IR dynamics are forbidden to be \textit{trivial} and difficult to analyze directly. Anomaly matching has proven to be a powerful tool in addressing this question. In (2+1)d, anomaly matching for ordinary symmetries and categorical symmetries, has played a central role in mapping out phase diagrams \cite{Gomis:2017ixy,Cordova:2017vab,Cordova:2017kue,Choi:2018tuh,2021PhRvX..11c1043Z,Antinucci:2025fjp}, conjecturing dualities among distinct UV gauge theories with matter \cite{Cordova:2018qvg,2016AnPhy.374..395S,2016JHEP...09..095H,Turner:2019wnh}, and providing a complete classification of the possible lattice realizations of a given UV theory \cite{2021PhRvX..11c1043Z,2022ScPP...13...66Y,Kobayashi:2024dqj,2024PhRvX..14b1053Y,2025ScPP...18..161L,Feng:2025qgg}. This success naturally motivates extending the perspective to higher-dimensional theories. 

In this paper we focus on symmetries of theories in (3+1)d. As a concrete example of a theory that exhibits some of the symmetries we consider, consider a gauge theory with $N_f$ left-handed chiral fermions transforming in some representation $R$ of the gauge group. Classically, there is a chiral $\U(1)$ symmetry rotating the fermions, but the ABJ anomaly reduces this symmetry to
\begin{equation}
    \U(1) \rightarrow \Z/({2 N_f \cdot T(R)})
\end{equation}
where $T(R)$ is the Dynkin index of the representation. Given the strongly-coupled nature of such theories, it is natural to ask whether they flow to a nontrivial TQFT in the IR. Questions of this kind frequently arise in the study of the dynamics of $(3+1)$d gauge theories \cite{Hsi18,CO2} and, notably, in understanding aspects of the Standard Model \cite{Wang:2020mra,2020arXiv200616996W,Wang:2025oow,Cheng:2024awi}.

Assuming that the symmetry remains unbroken in the IR, which can be justified either numerically or through theoretical results such as the Vafa--Witten theorem \cite{PhysRevLett.53.535,Vafa:1983tf}, there is a purely mathematical way to investigate the question in the previous paragraph: letting $\alpha$ be the anomaly of the gauge theory, which we model as a reflection-positive invertible field theory (IFT). If the gauge theory flows to a TQFT with unbroken symmetry in the IR, the anomaly of that TQFT must be deformation-equivalent to $\alpha$. So a good first question is, \emph{is} there a TQFT with anomaly deformation-equivalent to $\alpha$?

The \term{symmetry extension procedure} is an approach to constructing candidate IR symmetry-enriched topological orders (SETs) or topological quantum field theories (TQFTs)\footnote{Categorically, fermionic TQFTs in (3+1)d are classified by certain fusion 2-categories, whereas a fermionic topological order is classified by fusion 2-categories with extra structure including unitarity. Since a mathematical understanding of these extra structures is still underway (see \cite{ferrer2024dagger,MS23,SS24}), we will mostly use the terminology TQFT. Nevertheless, the finite gauge theories we construct using the method of symmetry extension are all fermionic topological orders. Moreover, we need the extra assumption in \Cref{mainthmB} that the TQFT we analyzed assigns a 1-dimensional Hilbert space on $S^3$, which is supposed to be satisfied by topological orders.} realizing a given anomaly $\alpha$ through explicit constructions realized by gauge theories. This procedure was first studied in special cases by Kapustin--Thorngren~\cite{2014arXiv1404.3230K,2014PhRvL.112w1602K} and Thorngren--von Keyserlingk~\cite{Thorngren:2015gtw}, then in generality by Wang--Wen--Witten~\cite{Wang:2017loc} and Tachikawa~\cite{Tachikawa:2017gyf} (see also \cite{Witten:2016cio}) for reflection-positive IFTs whose partition functions are integrals of group cohomology classes. For applications to fermionic theories,\footnote{In this paper, a fermionic theory refers to a field theory -- topological or non-topological -- whose definition requires a \textit{twisted} spin structure. This includes the trivial twist, for which one gets a spin structure in the usual sense.} one would like to generalize the symmetry extension procedure to a broader class of anomalies, and there are some recent works aimed at constructing fermionic topological orders with a given anomaly in $(2+1)$d, including \cite{KOT20,Thorngren:2020aph,2021PhRvR...3a3056C,Loo:2025}, but a higher-dimensional version of the symmetry extension procedure for fermionic theories is completely open.
In this paper, we focus on (3+1)d fermionic theories and ask: 
\begin{ques}\label{question:main}
    How can one construct a (3+1)d fermionic topological order or TQFT that saturates a given anomaly associated with some finite symmetry?
\end{ques}

In this work, we incorporate the categorical formulation of (3+1)d TQFTs into the symmetry extension procedure to answer Question \ref{question:main} in full generality. The outline of the symmetry extension procedure is the same as in the prior work mentioned above, except replacing ordinary cohomology with a generalized cohomology theory called \term{(extended) supercohomology}~\cite{Wang:2017moj, KT17, PhysRevX.10.031055}, though the details involved in making this work are nontrivial. This data is then fed into the machinery of~\cite{DY2025}, and we are thus able to construct a candidate (3+1)d fermionic TQFT with the given anomaly.

Not every (4+1)d reflection-positive IFT of twisted spin manifolds can be written using supercohomology, and our symmetry extension procedure does not apply to these ``beyond-supercohomology'' IFTs. It is thus natural to wonder whether they could be incorporated into a more general symmetry extension procedure. However, we prove that this is impossible: a (4+1)d twisted spin reflection-positive IFT $\alpha$, where the twist is associated to a finite group, admits a topological order as a boundary theory if and only if the partition function of $\alpha$ equals the integral of a supercohomology class.  This nonexistence result causes \textit{symmetry-enforced gaplessness}. This generalizes work of Córdova--Ohmori~\cite{CO2}, who restricted to even-order cyclic groups, and answers a question they raise in \textit{loc.\ cit.} (see also Brennan~\cite{Bre23}).

\subsection{Fermionic symmetries and supercohomology anomalies}\label{subsection:fermionicsupercoh}
Notably, the classification of (3+1)d TQFTs uses the data of \textit{supercohomology},\footnote{Confusingly, there are two closely related generalized cohomology theories called ``supercohomology:'' the \term{restricted supercohomology} of~\cite{Fre08, PhysRevB.90.115141}, and the \term{extended supercohomology} of~\cite{Wang:2017moj, KT17, PhysRevX.10.031055}. In this paper we will exclusively use $\SH$ to denote the latter, and use $\mathit{rSH}$ for the former.}, which we denote as $\SH^5(BG,s,\omega)$, whose precise meaning will be clear later. Before we state the main results of our paper, we first review mathematical formulations of fermionic symmetries and supercohomology, and quickly comment on the reason for the natural appearance of supercohomology in our context.

A fermionic symmetry~\cite[\S 7]{Ben88} is given by a symmetry group $G_f$,\footnote{Unless stated otherwise, all symmetry groups in this paper are finite groups.} and two additional pieces of data: (1) a map $\rho\colon G_f\rightarrow \Z/2$ such that the symmetry element is antiunitary or unitary if the image under $\rho$ is $1$ or $0$, respectively, and (2) a central $\Z/2$ subgroup $\langle (-1)^F\rangle \subset G_f$ in the kernel of $\rho$ generated by fermion parity. This motivates describing the fermionic symmetry using the following three pieces of data: 
\begin{itemize}
\item a (bosonic) symmetry group $G\coloneqq G_f/\langle(-1)^F\rangle$;
\item a class $s \in H^1(BG;\Z/2)$, corresponding to $\rho$;
\item a class $\omega \in H^2(BG; \Z/2)$, classifying the extension of $G$ by the $\Z/2$ subgroup $\langle(-1)^F\rangle$ to get $G_f$.
\end{itemize}

The corresponding supercohomology $\SH^n(BG,s,\omega)$ is a generalized cohomology theory first proposed in \cite{Wang:2017moj, KT17} for classifying reflection-positive fermionic invertible field theories (IFTs) or fermionic symmetry-protected topological orders (SPTs)\footnote{We will use (reflection-positive) fermionic IFTs and fermionic SPTs interchangeably in this paper.} in a restricted setting. In~\cite{Wang:2017moj}, supercohomology is defined as the cohomology of an explicit cochain complex: the $n$-cochains are triples $(a, b, c)$ as follows:
  \begin{itemize}
      \item a cochain $a  \in C^{n-2}(BG;\Z/2)$, called the Majorana layer.
       \item a cochain $b \in C^{n-1}(BG;\Z/2)$, called the Gu--Wen layer.
       \item a cochain $c \in C^n(BG;\mathbb{C}^\times)$, called the Dijkgraaf--Witten layer.
  \end{itemize}
The differential mixes together information from different layers, so that cocycles satisfy certain equations relating $a$, $b$, and $c$. These equations were derived in \cite{Wang:2017moj, KT17}, and we review them in Appendix \ref{subsection:twistedSH}. There we also discuss twisted supercohomology, introduced by~\cite{PhysRevX.10.031055}, which incorporates the data of $(s, \omega)$ into those equations. 

For any fermionic symmetry $(G, s, \omega)$, it is possible to choose a set of generators of $\SH^n(BG,s,\omega)$, such that each generator has a cocycle representative with exactly one of $a$, $b$, or $c$ nonzero. Accordingly, we will say that the generator is in the Majorana ($a \ne 0$), Gu--Wen ($b\ne 0$), or Dijkgraaf--Witten ($c\ne 0$) layer as part of our descriptions of supercohomology groups in the main results section.

We can compare supercohomology with the full classification of \textit{'t Hooft anomalies} of continuum field theories. The 't Hooft anomaly is usually defined as the ``obstructions to gauging'' for $G$-symmetry and classified by the cobordism group $I_{\Z}\MSpin^{n+1}(BG,s,\omega)$ \cite{FH}, the Anderson dual of (twisted) spin bordism\footnote{Because we are interested in RG invariants and anomaly matching, we study anomalies as \emph{deformation} classes of reflection-positive invertible field theories (IFTs) with (twisted) spin structure. Freed--Hopkins~\cite[\S 5.4]{FH} and Grady~\cite{Gra23} show that deformation classes of these IFTs are classified by how that deformation classes are described by the Anderson dual $\Sigma I_\Z(\text{--})$ of spin bordism. However, classifications of fusion $2$-categories most naturally use the Pontryagin dual $I_{\C^\times}(\text{--})$, e.g.\ in~\cite{JFY2,JFR,D11,DY23a,Decoppet:2024htz, Teixeira:2025qsg}, and so supercohomology is built using $I_{\C^\times}$ as well. Though the distinction between $I_{\C^\times}$ and $\Sigma I_\Z$ is conceptually important in general, it does not come into play in this paper: for anomalies of finite-group symmetries of $4$-dimensional theories, the Pontryagin-to-Anderson map is an isomorphism. We will therefore not dwell on this difference.}
, which we also denote as $\mho_\Spin^n(BG,s,\omega)$. 
The missing piece is the so-called $p+ip$ layer represented by a cochain in $C^{n-3}(BG;\Z)$. Moreover, there is a natural map from supercohomology to the classification of 't Hooft anomalies,
\begin{equation}\label{eq:supercohomologycomparison}
   SH^n(BG, s, \omega) \rightarrow \mho^{n+1}(BG, s, \omega).
\end{equation} 
Thus, we can think of supercohomology as an ``approximation'' of the classification of fermionic SPTs without taking into account the effect of the $p+ip$ layer. Even more interestingly, a particular anomaly with a nontrivial element in this layer is identified as an obstruction to the construction of fermionic topological orders saturating the given anomaly in recent works \cite{CO2}. This piece also does not appear explicitly in the higher-categorical framework of (3+1)d TQFTs \cite{Decoppet:2024htz,DY2025}. We will formulate these observations into a theorem in the next section, showing that if an SPT valued in $\mho_\Spin^5(BG,s,\omega)$ has a contribution from the $p+ip$ layer, then there is no (3+1)d fermionic topological order which can live at its boundary without breaking the symmetry. In effect, the boundary for this SPT is gapless.

Motivated by this result, we decompose elements in the classification of 't Hooft anomalies into two classes: those that lie in the image of \cref{eq:supercohomologycomparison},
which we refer to as \emph{supercohomology anomalies} and label them just as elements in $SH^n(BG, s, \omega)$, and those that lie outside the image, which we call \emph{beyond-supercohomology anomalies}. These supercohomology anomalies valued in $SH^n(BG, s, \omega)$ serve as the starting point of our construction.\footnote{The map \cref{eq:supercohomologycomparison} may not be injective, meaning that certain elements in supercohomology may map to trivial element in the classification of 't Hooft anomalies. We will only consider the elements in supercohomology that remain nontrivial under \cref{eq:supercohomologycomparison}.}

\subsection{Main Results}\label{subsection:results}
In this subsection, we summarize our construction of anomalous (3+1)d fermionic TQFTs and examine several symmetry groups central to physical applications. A companion paper \cite{DYY3} provides a concise summary of our primary results tailored for a physics audience, while further exploring their broader physical implications.

As discussed in \cref{subsection:fermionicsupercoh}, we start with a fermionic symmetry, written as $(BG, s, \omega)$, and construct a (3+1)d topological order that saturates a particular anomaly valued in supercohomology $\SH^5(BG, s, \omega)$. Our construction involves the following steps. We find a group $H$, with $p\colon H \twoheadrightarrow G$, such that the generator for the group $\SH^5(BG,s,\omega)$ trivializes when pulled back to $\SH^5(BH,s',\omega')$, where $s' = p^*(s)$ and $\omega' = p^*(\omega)$. The trivialization gives a torsor over $\SH^4(BH,s',\omega')$, which, as we will review in \cref{subsection:TO}, gives rise to a (3+1)d  fermionic $G$-SET. Let $K \hookrightarrow H$ be the kernel of $p$:
\begin{equation}\label{eq:extension}
    \begin{tikzcd}
        1 \arrow[r]& K \arrow[r]& H \arrow[r] & G \arrow[r]& 1\,;
    \end{tikzcd}
\end{equation}
then by gauging $K$ we obtain a $K$-gauge theory, with the Lagrangian description given by a class in $\SH^4(BK)$, equipped with a $G$-symmetry and anomaly $\SH^5(BG,s,\omega)$. 

We note that this makes the construction parallel to the construction in the bosonic settings, with regular cohomology replaced by supercohomology, and many theorems/results in the bosonic setting will reappear in the fermionic setting. In particular, for bosonic anomalies, it was shown in \cite{Tachikawa:2017gyf} that it is always possible to trivialize an anomaly of a finite group $G$ valued in group cohomology via a finite number of group extensions. We show that an analogous result is also true for fermionic anomalies. For any supercohomology reflection-positive IFT, we prove the following theorem :
\begin{maintheorem}[\Cref{thm:always_triv}]\label{mainthmA}
For any fermionic symmetry labeled by $(BG, s, \omega)$ with finite $G$ and any supercohomology class $\alpha\in SH^n(BG,s,\omega)$ with $n\geq 5$, there is an algorithmic construction of a finite group $\widetilde G$ and a map $\rho\colon \widetilde G\to G$ such that \begin{equation}
    \rho^*(\alpha) = 0\in \SH^n(B\widetilde G, \rho^*(s), \rho^*(\omega)).
\end{equation}
\end{maintheorem}
We prove this theorem in \S\ref{section:always_works} by giving an algorithmic way of trivializing $\alpha$. Physically this has the following implication for constructing anomalous TQFTs.

\begin{cor}[\Cref{existence_and_nonexistence} part \eqref{SH:yes}]\label{cor:maincor}
    Let $\alpha\in  \mho_\Spin^5(BG, 0, \omega)$ be a reflection-positive invertible TFT. If $\alpha$ is in the image of the map $\SH^5(BG, 0, \omega)\to\mho_\Spin^5(BG, 0, \omega)$, then there is an algorithmic construction of a 4d TQFT $Z$ of $(BG, 0, \omega)$-twisted spin manifolds such that the deformation class of the anomaly of $Z$ equals $\alpha$.
\end{cor}

Moreover, for any beyond-supercohomology anomaly, meaning there is a contribution from the $p+ip$ layer, we have the following result.
\begin{maintheorem}[\Cref{existence_and_nonexistence} part \eqref{SH:no}]\label{mainthmB}
Let $\alpha\in  \mho_\Spin^5(BG, s, \omega)$ be an invertible TFT, if $\alpha$ is \emph{not} in the image of the map $\SH^5(BG, s, \omega)\to\mho_\Spin^5(BG, s, \omega)$, then there is no (3+1)d fermionic topological order whose anomaly is equal to $\alpha$.
\end{maintheorem}

Now we turn to concrete examples. For each example, we classify all possible anomalies valued in supercohomology and then provide the data of an extension $H$ required to construct the corresponding $(3+1)$d topological orders that saturate these anomalies. To compute the relevant supercohomology groups, we develop in \cite{DYY2} a \emph{hastened Adams spectral sequence} (HASS) for supercohomology and certain related generalized cohomology theories, which plays the role of the Adams spectral sequence in the computation of spin cobordism groups. This new method, combined with the more elementary \emph{Atiyah–Hirzebruch spectral sequence} (AHSS), renders our calculations tractable. We also make heavy use of the \textit{Smith long exact sequence} developed in \cref{Smith_LES,LES_is_Smith} to identify the image under the pullback. The results are summarized in \Cref{tab:results}. 

\begin{Disclaimer}
    Our objective is to construct topological quantum field theories that saturate an anomaly that a given UV theory may possess. We do not, however, assert that the theories obtained in this way necessarily arise as the IR limit of the UV theory in question. In particular, there may exist physical mechanisms -- beyond the scope of our present analysis -- that drive the IR dynamics to a gapless phase. Moreover, alternative choices in the construction we present could lead to distinct yet equally reasonable TQFTs saturating the same anomaly. In particular, we do not explicitly verify that our construction is minimal in the sense that  $|H|/|G|$ is smallest possible, although we have made effort to present $H$ in its simplest form.
\end{Disclaimer}

\begin{example}\label{ex:cyclic}
    We consider fermionic theories with a $G=\Z/n$ symmetry with no twist, i.e., both $s$ and $\omega$ are trivial, with the corresponding supercohomology group that classifies the anomalies given as follows.
    \begin{itemize}
        \item If $n$ is odd, the map from $\C^\times$-cohomology to supercohomology is an isomorphism, so $\SH^5(BG) \cong \Z/n$.
        \item If $n = 2$, by \Cref{prop:Z2notwist}, $\SH^5(B\Z/2) = 0$.
        \item If $n=2^k$ and $k\geq 2$, by \Cref{prop:2knotwist}, $\SH^5(BG) \cong \Z/2^{k-1}$. 
    \end{itemize}
    
For $n$ odd and $n=2^k,k\geq 2$, where the corresponding supercohomology group is nontrivial, we write down $H$, such that the generator for each supercohomology group is trivialized when pulled back to $\SH^5(BH)$. 
    
    \begin{thm}\label{thm:Znsymmetry}
    \hfill 
   \begin{enumerate}
       \item When $n=p^k$ with $p$ an odd prime, the short exact sequence
       \begin{equation}\label{eq:subZ2}
         1 \longrightarrow \Z/p\longrightarrow \Z/p^{k+1} \xrightarrow{p} \Z/p^k\longrightarrow 1,
    \end{equation}
    induces a trivial pullback $p^*\colon\SH^5(B\Z/p^k)\to\SH^5(B\Z/p^{k+1})$.
    \item\label{2k_untwist} For $n=2^k$ and $m =\lceil \frac{k-1}{2} \rceil$, the short exact sequence
    \begin{equation}\label{eq:subZm}
         1 \longrightarrow \Z/2^m\longrightarrow \Z/2^{k+m} \xrightarrow{p} \Z/2^k \longrightarrow 1,
    \end{equation}
    induces a trivial pullback $p^*\colon\SH^5(B\Z/2^k)\to \SH^5(B\Z/2^{k+m})$.
   \end{enumerate}
    \end{thm}
    According to the general procedure, by gauging the $\Z/n$-subgroup of $\Z/n^2$ for $n$ odd, or the $\Z/2^m$-subgroup of $\Z/2^{k+m}$, we see that: 
\begin{cor}\label{cor:Z2SET}
    For cyclic groups $G=\Z/n$ with $n$ odd or $n=2^k$, any reflection-positive IFT given by a class in $\SH^5(BG)$ can be realized as the anomaly of a (3+1)d gauge theory by gauging the $\Z/n$, resp.\ $\Z/2^m$ subgroups of a $\Z/n^2$, resp.\ $\Z/2^{k+m}$ symmetric state as in~\cref{eq:subZ2,eq:subZm}, where $m=\lceil \frac{k-1}{2}\rceil$.
\end{cor}

Other cyclic groups can be understood by localizing to all the prime factors, as discussed in \S\ref{subsub:cyclic}.

\end{example}

\begin{example}\label{ex:cyclicfermion}
     We now consider $G=\Z/n$ unitary symmetry with $n$ even, $s=0$ but $\omega$ nontrivial. Namely, the corresponding fermionic symmetry algebra is given by:
     \begin{equation}
         g^n = (-1)^F\,,
     \end{equation}
     where $g$ is the generator for $G$, and $(-1)^F$ is fermion parity.
     \begin{itemize}
         \item If $n=2$, by \Cref{prop:Z2x2twist}, $\SH^5(B\Z/2,0,x^2) \cong \Z/8$, where $x\in H^1(B\Z/2;\Z/2)$, and the generator of this $\Z/8$ resides in the Majorana layer.  
         \item If $n=2^k$ and $k\geq 2$, by \cref{prop:2kytwist}, $\SH^5(B\Z/2^k,0,y) \cong \Z/2 \oplus \Z/2^{k+1}$, where $y\in H^2(B\Z/2^k;\Z/2)$. This isomorphism may be chosen so that a generator of $\Z/2^{k+1}$ is in the Gu--Wen layer, and the generator for $\Z/2$ is in the Majorana layer.
     \end{itemize}
\end{example}

\begin{thm}\label{thm:trivializedGen}
\hfill
\begin{enumerate}
    \item \label{exp2_2}
For $n=2$, the short exact sequence
\begin{equation}\label{eq:seq3}
            1 \longrightarrow \Z/4\longrightarrow \Z/8 \xrightarrow{p} \Z/2\longrightarrow 1,
        \end{equation}
        induces a trivial pullback $p^*\colon\SH^5(B\Z/2, 0, x^2)\to\SH^5(B\Z/8, 0, 0)$.
       \item\label{its_m} For $n=2^k,k\geq 2$ and $m = \lceil \frac{k}{2} \rceil$, the short exact sequence
       \begin{equation}\label{eq:seq3firstcase}
         1 \longrightarrow \Z/2^{m}\longrightarrow \Z/2^{k+m} \xrightarrow{p} \Z/2^k \longrightarrow 1,
    \end{equation}
        induces the pullback $p^*\colon\SH^5(B\Z/2^k,0,y)\to\SH^5(B\Z/2^{k+m}, 0, 0)$ such that its action on the generator of $\Z/2^{k+1}$ in the Gu-Wen layer is zero. The short exact sequence
       \begin{equation}\label{eq:seq3secondcase}
         1 \longrightarrow \Z/4 \longrightarrow \Z/2^{k+2} \xrightarrow{p} \Z/2^k \longrightarrow 1,
    \end{equation}
        induces the pullback $p^*\colon\SH^5(B\Z/2^k,0,y)\to\SH^5(B\Z/2^{k+2}, 0, 0)$ such that its action on the generator of $\Z/2$ in the Majorana layer is zero. 
       \end{enumerate}
\end{thm}

\begin{cor}\label{x2_TFTs}
    For cyclic groups $G=\Z/2$,  any reflection-positive IFT given by a class in $\SH^5(B\Z/2, 0, x^2)$ can be realized as the anomaly of a (3+1)d gauge theory (up to deformation equivalence) by gauging the $\Z/4$ subgroup of a $\Z/8$ symmetric state as in~\cref{eq:seq3}.
\end{cor}

\begin{cor}\label{y_TFTs}
    For cyclic groups $G=\Z/2^k,k\geq 2$, the deformation class of reflection-positive IFTs corresponding to any of the classes in $\SH^5(B\Z/2^k, 0, y)$ generating the $\Z/2^{k+1}$ factor can be realized as the anomaly of a (3+1)d gauge theory by gauging the $\Z/2^{m}$ subgroup of a $\Z/2^{k+m}$ symmetric state as in~\cref{eq:seq3firstcase}, where $m=\lceil \frac{k}{2} \rceil$. The IFT corresponding to the class in $\SH^5(B\Z/2^k, 0, y)$ generating the $\Z/2$ factor can be realized as the anomaly of a (3+1)d gauge theory by gauging the $\Z/4$ subgroup of a $\Z/2^{k+2}$ symmetric state as in~\cref{eq:seq3secondcase}.
\end{cor}

Finally, we also would like to extend our constructions to fermionic symmetries involving time-reversal symmetries, or nontrivial $s$-twist. Unfortunately, the necessary fusion 2-categorical framework for such an extension is not yet fully developed. Nevertheless, it is straightforward to generalize the construction at a formal level: one can still write down an extension of the form \cref{eq:extension} that trivializes the given anomaly upon pullback, and we conjecture that the resulting $K$-gauge theory is the desired TQFT that realizes the specified anomaly.

 \begin{example}\label{ex:timerev}
     We give one example in \Cref{appendix:timerev} that involves time-reversal, fermion parity, and chiral symmetry. Let $T$ be the generator of a time-reversal symmetry, and $g$ be the generator of a unitary symmetry. We consider the following symmetry algebra, where $k\ge 2$:
\begin{equation}\label{eq:GandT}
    g^{2^k}=T^2 = (-1)^F.
\end{equation}
This corresponds to the twist $s = x_1$, $\omega = y$ for the group $\Z/2\times\Z/2^k$, where $x_1$ generates $H^1(B\Z/2;\Z/2)$ and $y$ generates $H^2(B\Z/2^k;\Z/2)$. We compute the corresponding twisted supercohomology group in \cref{prop:22kxy}:
\begin{equation}
\label{intro224}
    \SH^5(B\Z/2\times B\Z/2^k,x_1,y) \cong \Z/2 \oplus \Z/2 \oplus \Z/4\,,
\end{equation}
We also describe how to choose this isomorphism such that the classes $\alpha_{\mathrm{Maj}}$, $\alpha_{\mathrm{DW}}$, and $\alpha_{\mathrm{GW}}$ mapping to $(1, 0, 0)$, $(0, 1, 0)$, and $(0, 0, 1)$ under~\eqref{intro224}, respectively, are in the Majorana, Dijkgraaf--Witten, and Gu--Wen layers respectively, and we show that the kernel of the map to spin bordism is the subgroup generated by $\alpha_{\mathrm{Maj}}$.
 \end{example}

\begin{rem}
If $x$ denotes the generator of $H^1(B\Z/2^k;\Z/2)$, then the twists $(x_1, y)$ and $(x_1, x^2+y)$ over $B\Z/2\times B\Z/2^k$ are equivalent: as we describe in \cref{appendix:timerev}, they are exchanged by an automorphism of $B\Z/2\times B\Z/2^k$. This corresponds to redefining the generator $T$ by replacing it with $T\cdot g^{2^{k-1}}$. We will work with $(x_1,y)$ in this paper.
\end{rem}

\begin{thm}\label{thm:timerev}
If $p$ denotes the map in the short exact sequence
\begin{equation}
   \begin{tikzcd}
     1 \to \Z/4\times\Z/2\to \Z/8 \times \Z/2^{k+1} \xrightarrow{p} \Z/2 \times \Z/2^k \to 1,
    \end{tikzcd}
    \end{equation}
then $\alpha_{\mathrm{GW}}$ and $\alpha_{\mathrm{DW}}$ are in the kernel of
\begin{equation}
\begin{aligned}
    p^*\colon\SH^5(B\Z/2\times &B\Z/2^k, x_1, y) \longrightarrow\SH^5(B\Z/8\times B\Z/2^{k+1}, x_1, 0).
\end{aligned}
\end{equation}
\end{thm}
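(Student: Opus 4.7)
The plan is to compute $p^*$ on explicit cocycle representatives of $\alpha_{\mathrm{DW}}$ and $\alpha_{\mathrm{GW}}$ and exhibit trivializing $4$-cochains in the twisted supercohomology complex of $BH := B\Z/8\times B\Z/2^{k+1}$. The key naturality input -- and the reason this particular extension is chosen -- is the identity $p^*y = 0$ in $H^2(B\Z/2^{k+1};\Z/2)$: since $y$ is the mod-$2$ reduction of the integral generator $u\in H^2(B\Z/2^k;\Z)\cong \Z/2^k$, and the surjection $\Z/2^{k+1}\twoheadrightarrow \Z/2^k$ satisfies $p^*u = 2u'$, we obtain $p^*y = 2u'\bmod 2 = 0$. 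This simultaneously confirms that the pullback of the twist $(x_1,y)$ equals $(x_1,0)$, so that $p^*$ is indeed a well-defined map between the two twisted supercohomology groups in the theorem statement.

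Next, I would use \cref{prop:22kxy} to fix cocycle representatives $\alpha_{\mathrm{DW}} = (0,0,c_{\mathrm{DW}})$ and $\alpha_{\mathrm{GW}} = (0,b_{\mathrm{GW}}, c_{\mathrm{GW}})$ as polynomials in $x_1$, $x$, $y$, and their integral lifts. The DW generator, being of order two in $\SH^5$ and distinguished from the Majorana generator in the layer filtration, should be representable by a $5$-cocycle involving the integral generator $u$ of the $\Z/2^k$-factor; its pullback is then $2$-divisible in $\U(1)$-coefficients on $BH$, hence a $\U(1)$-coboundary. The GW generator, likewise distinguished from $\alpha_{\mathrm{Maj}}$, should have $b_{\mathrm{GW}}$ containing a factor of $y$, so that $p^*b_{\mathrm{GW}} = 0$ already at the cochain level. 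In both cases the residual DW-layer mismatch $p^*c_{\mathrm{DW}}$ or $p^*c_{\mathrm{GW}}$ must then be absorbed by choosing a $4$-cochain $c'\in C^4(BH;\C^\times)$ whose $\U(1)$-coboundary equals it, which exists precisely because of the $2$-divisibility and the vanishing of the $y$-factor under $p^*$.

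The main obstacle is promoting these ordinary-cohomology trivializations to trivializations in the full twisted $\SH$ complex on $BH$, whose differential couples the Majorana, Gu--Wen, and Dijkgraaf--Witten layers through terms depending on $s=x_1$ as recorded in \S\ref{subsection:twistedSH}. Concretely, one must verify that the candidate trivializing triple $(a',b',c')\in C^4(BH)$ does not generate a spurious Majorana obstruction under the $x_1$-twisted mixing terms; if it did the trivialization would fail, which matters because the theorem explicitly excludes $\alpha_{\mathrm{Maj}}$ from the kernel claim. When such feedback appears it can be absorbed into $b'$ or $a'$ by using the enlarged mod-$2$ cohomology rings $H^*(B\Z/8;\Z/2)$ and $H^*(B\Z/2^{k+1};\Z/2)$, which carry additional low-degree generators not present in $H^*(BG;\Z/2)$. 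The computation is routine once the explicit twisted cocycle formulas are in hand, but requires careful layered bookkeeping rather than a single cohomological cancellation.
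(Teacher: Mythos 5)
There is a genuine gap, concentrated in two places. First, a factual error: your claim that the Gu--Wen representative $b_{\mathrm{GW}}$ ``should contain a factor of $y$'' is contradicted by the paper's own AHSS computation, where the total-degree-$5$ class in the Gu--Wen layer is $x_1^3x\in E_\infty^{4,1}$ (see~\eqref{eq:E3timerev}); it involves $x_1$ and $x$ but not $y$, and its mod-$2$ pullback dies because $x_1^2\mapsto 0$ on $B\Z/8$, not for the reason you give. Second, and more seriously, killing the leading term of $\alpha_{\mathrm{GW}}$ in the associated-graded does \emph{not} kill $\alpha_{\mathrm{GW}}$: the pullback can drop filtration and land as a nonzero class in the Dijkgraaf--Witten layer. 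You do acknowledge a ``residual DW-layer mismatch,'' but your assertion that it is automatically a $\C^\times$-coboundary ``because of the $2$-divisibility'' is a non-sequitur -- $2$-divisible cohomology classes need not vanish. The paper closes exactly this gap with two lemmas you would need to supply: (i) the $x_1$-twisted $\C^\times$-cohomology of $B\Z/2^p\times B\Z/2^q$ with $p,q\ge 2$ is killed by $2$ (\cref{2is0}), so that vanishing of the mod-$2$ reduction suffices; and (ii) every degree-$6$ class on $B\Z/4\times B\Z/2^k$ in the image of twisted mod-$2$ reduction is a multiple of $y_1$ or $y_2$ and hence pulls back to zero on $B\Z/8\times B\Z/2^{k+1}$ (\cref{mod_2_twisted}). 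This forces the two-stage pullback through $B\Z/4\times B\Z/2^k$ that the paper uses; your one-shot cochain construction has no mechanism replacing it.

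Two further points. The paper does not actually know which of $(-1)^{x_1^4x}$ and $(-1)^{xy^2}$ represents $\alpha_{\mathrm{DW}}$, so it trivializes all candidates and their linear combinations; your proposal fixes a single representative ``involving the integral generator $u$'' without justifying that choice. Finally, your last paragraph -- absorbing possible Majorana-layer feedback from the $x_1$-twisted differential into $a'$ and $b'$ -- is precisely the kind of layered cochain bookkeeping that the paper's argument avoids entirely by staying at the level of the AHSS filtration and naturality of pullback; as written it is an unverified assertion, not a proof step. Your opening observation that $p^*y=0$ and hence that the twist pulls back to $(x_1,0)$ is correct and is indeed the reason this extension is chosen, but the rest of the argument needs to be rebuilt around the filtration-jump issue.
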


\begin{conjecture}
If $\alpha\in\SH^5(B\Z/2\times B\Z/2^k, x_1, y)$ is in the subgroup spanned by $\alpha_{\mathrm{DW}}$ and $\alpha_{\mathrm{GW}}$, there is a (3+1)d $\Z/4\times\Z/2$ gauge theory with anomaly $\alpha$ obtained by generalizing the Wang--Wen--Witten symmetry extension construction to anti-unitary symmetries.
\end{conjecture}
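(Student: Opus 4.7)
The plan is to follow the fermionic Wang--Wen--Witten template that underlies \cref{cor:Z2SET,x2_TFTs}, feeding in the extension data supplied by \cref{thm:timerev}. Concretely, take $H = \Z/8\times\Z/2^{k+1}$, $G = \Z/2\times\Z/2^k$, and $K = \ker p = \Z/4\times\Z/2$. For any $\beta$ in the subgroup spanned by $\alpha_{\mathrm{DW}}$ and $\alpha_{\mathrm{GW}}$, \cref{thm:timerev} gives $p^*\beta = 0$ in $\SH^5(BH, x_1, 0)$; a choice of cochain-level trivialization of $p^*\beta$ then exhibits $\beta$ as the obstruction class of a torsor over $\SH^4(BH, x_1, 0)$, exactly as in the construction reviewed before \cref{thm:Znsymmetry}.

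Next I would restrict the twists on $H$ to the subgroup $K$ and gauge $K$. A direct check shows $(s',\omega')|_K = (0,0)$: the pullback of $x_1$ along $\Z/8\twoheadrightarrow\Z/2$ is the nontrivial class in $H^1(B\Z/8;\Z/2)$, but its further restriction along the ``multiplication by $2$'' inclusion $\Z/4\hookrightarrow\Z/8$ vanishes, and $\omega'$ is already zero. Consequently $K$ acts unitarily and without a central $(-1)^F$ extension, so gauging $K$ inside the ambient fermionic theory is well-defined within the $\tsVect$-enriched nondegenerate braided fusion $2$-category framework of \cite{DY2025}. Applying this $K$-gauging to the torsor datum produces a $(3{+}1)$d $\Z/4\times\Z/2$ gauge theory carrying a residual $G$-symmetry whose categorical obstruction matches $\beta$ at the level of the supercohomology approximation of \S\ref{subsec:supercohomology}.

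The main obstacle, and the reason for stating the result only as a conjecture, is precisely that the residual $\Z/2$ factor in $G$ must act \emph{antiunitarily}: the twist $s = x_1$ encodes time reversal. The fusion $2$-categories of \cite{DY2025} have not been endowed with a higher dagger structure in the sense of \cite{ferrer2024dagger}, and no systematic theory of $G$-crossed extensions for unitary fusion $2$-categories currently exists, so as stated the construction cannot distinguish an antiunitary from a unitary $G$-action. A rigorous proof therefore requires (i) lifting the relevant $\tsVect$-enriched fusion $2$-categories to objects in the $(2,2)$-category of higher Hilbert spaces of \cite{chen2024manifestly}, (ii) formulating the Wang--Wen--Witten extension procedure in that unitary setting so that antiunitary $G$-symmetries are meaningful, and (iii) verifying that the categorical $G$-obstruction arising from the torsor is detected by $\beta \in \SH^5(BG, x_1, y)$ through the approximation map discussed in \S\ref{subsec:supercohomology}. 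Granting these foundations, the remaining cocycle bookkeeping is a direct translation of the arguments behind \cref{cor:Z2SET,x2_TFTs}, and I do not anticipate further obstructions specific to this symmetry type.
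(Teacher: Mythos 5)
Your proposal matches the paper's treatment: the statement is left as a conjecture there for exactly the reason you identify, namely that the fusion $2$-categories of \cite{DY2025} have not been shown to carry a higher dagger structure, so antiunitary $G$-actions (forced by the twist $s = x_1$) are not yet meaningful in that framework, and the intended construction is precisely the symmetry extension of \cref{thm:timerev} with $K = \Z/4\times\Z/2$ gauged as in \cref{cor:Z2SET,x2_TFTs}. Your outline (including the check that the twist restricts trivially to $K$) is consistent with the paper's discussion and adds no step the paper would consider a complete proof, nor does it miss one.
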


\begin{table*}[ht]
\centering
\renewcommand{\arraystretch}{1.2}
\resizebox{\textwidth}{!}{
\begin{tabular}
{c | c c c | c|| c c c | c|| c }
\hline \hline 
$G_f$ & $G$ & $s$ & $\omega$ & $\SH^5$ & $H$ & $s$ & $\omega$ & $\SH^5$ & $K$ \\ \hline\hline 
$\Z/p^k\times \Z/2^F$ & $\Z/p^k$ & 0 &0 & $(\Z/p^k,\text{DW})$ &$\Z/p^{k+1}$ & 0 & 0& $(\Z/p^{k+1},\text{DW})$ & $\Z/p$\\ \hline 
{$\Z/2^k\times \Z/2^F,k\geq 2$} & {$\Z/2^k$} & 0 & 0 &$(\Z/2^{k-1},\text{DW})$ & {$\Z/2^{k+m},m=\lceil \frac{k-1}{2}\rceil$} & $0$ &0 & $(\Z/2^{k+m-1},\text{DW})$ & $\Z/2^m$ \\ \hline 
$\Z/4^F$ & $\Z/2$ & $0$ & $x^2$ & $(\Z/8,\text{Maj})$& $\Z/8$ &0 &0 & $
(\Z/2$, DW)  & $\Z/4$\\ \hline 
\multirow{2}{*}{$\Z/(2^{k+1})^F,k\geq 2$} & \multirow{2}{*}{$\Z/2^k$} & \multirow{2}{*}{0}& \multirow{2}{*}{$y$} &$(\Z/2^{k+1},\text{GW})$ &  {$\Z/2^{k+m},m=\lceil \frac{k}{2}\rceil$} & $0$ & $0$ & {$(\Z/2^{k+m-1},\text{DW})$} & $\Z/2^m$\\ 
& & & & $\oplus \,({\Z/2},\text{Maj})$& {$\Z/2^{k+2}$}&  0 & 0 & {$(\Z/2^{k+1},\text{DW})$}  & $\Z/4$ \\
\hline 
\multirow{2}{*}{$\Z/(2^{k+1})^F\times \Z/2^T,k\geq 2$} & \multirow{2}{*}{$\Z/2\times \Z/2^k$} & \multirow{2}{*}{$x_1$} & \multirow{2}{*}{$y/y + x_1^2$} & $\textcolor{green}{\Z/2} \oplus (\Z/2,\text{DW})$ &$\Z/2\times \Z/2^{k+1}$ &$x_1$ &0 & $\Z/2 \oplus \Z/2 \oplus \Z/4$  & $\Z/2$ \\ 
& &  &   & $(\Z/4,\text{GW})$  &  $\Z/8\times  \Z/2^{k+1}$ & $x_1$ & $y$ & \text{Order at Most} 32 &$\Z/4\times \Z/2$\\
\hline  
\hline
\end{tabular}}
\caption{On the left, we present the fermionic groups $G_f$ we consider and their associated bosonic groups $G$ with twists $s\in H^1(BG;\Z/2)$ and $\omega\in H^2(BG;\mathbb{Z}/2)$. 
In the column titled ``$\SH^5$'', each item within a box gives the direct summands of the full group $\SH^5(X,s,\omega)$. Alongside each summand we provide the layer in which the generator for that group resides: either the Majorana (Maj), Gu--Wen (GW), or Dijkgraaf--Witten (DW) layer. The generator in green indicates that its image in the twisted spin bordism is zero.
On the right, we present the extended group $H$ with associated twists, such that the corresponding generators of $\SH^5$ on the left is trivialized, as well as the group $K$ indicating the TQFT we construct using the symmetry extension procedure. In the first line, $p$ is an odd prime number.}
\label{tab:results}
\end{table*}

\subsection{Outline}

The structure of this paper is as follows. In \S\ref{subsection:symmetryextension} we review the symmetry extension construction in the bosonic setting, and explain in \S\ref{subsection:TO} how the data of the symmetry extension construction, with regular cohomology replaced by supercohomology, naturally fits into the data of the classification of fermionic anomalous SETs using the language of fusion 2-categories. We prove the main theorems \cref{mainthmA,mainthmB} in \S\ref{section:always_works}. In 
\S\ref{section:construction}, we go to specific symmetries and anomalies listed in \S\ref{subsection:results}, show how to trivialize the generators for the anomalies and identify the anomalous TQFTs that saturate these anomalies. We conclude in \S\ref{section:conclusion}.

Various appendices provide additional background information and technical supercohomology computations that complement the main text. In Appendix~\ref{subsec:generalized_coho}, we review the basics of generalized cohomology theories, which will be used in the next two appendices. In Appendix \ref{subsection:twistedSH}, we explain two equivalent definitions of supercohomology and how to realize the twists of supercohomology corresponding to $s$ and $\omega$. In Appendix~\ref{section:anomaliesObstructions}, we explain how to obtain the anomalies of topological orders from obstruction theory, and its relationship to both supercohomology and the regular 't Hooft anomaly.  
In Appendix~\ref{appendix:B}, we fill in the technical details of supercohomology computations needed for \S\ref{subsection:results}. In Appendix~\ref{appendix:timerev}, we provide one extra example involving time-reversal symmetry, following the analogous computations for unitary symmetries.

\section{Symmetry extension construction of fermionic anomalous topological orders}\label{section:preliminaries}

In this section, we detail our construction of fermionic anomalous topological orders, which combines the symmetry-extension procedure with the structure of fusion 2-categories.

\subsection{Symmetry extension construction}\label{subsection:symmetryextension}
We now review the symmetry extension procedure of~\cite{2014arXiv1404.3230K,2014PhRvL.112w1602K,Thorngren:2015gtw,Wang:2017loc,Tachikawa:2017gyf}, who construct a bosonic $n$-dimensional TQFT with $G$-symmetry that realizes the anomaly labeled by a class $\alpha \in H^{n+1}(BG;\C^\times)$.\footnote{While the classification of bosonic invertible field theories goes beyond simply group cohomology, we will only consider those that are classified by cohomology in this review.} In particular, Wang–Wen–Witten describe how to construct a $K$-gauge theory equipped with an anomalous $G$-symmetry, provided that we have
a short exact sequence 
\begin{equation}\label{eq:SES}
    1 \longrightarrow K \longrightarrow H \xrightarrow{p} G \longrightarrow 1
\end{equation}
such that the pullback $p^*\alpha$ is trivial in $H^{n+1}(BH;\C^\times)$.

Specifically, the construction goes as follows. We denote by $\widetilde\alpha \in Z^{n+1}(BG;\mathbb{C}^\times)$ a cocycle lift of $\alpha$. The first step is to choose $H$ such that $p^*\widetilde \alpha = d \widetilde \varphi$, with $\widetilde \varphi \in C^{n}(BH;\mathbb{C}^\times)$, i.e.\ $\alpha$ is trivializable upon pulling back. Let $N$ be an $(n+1)$-dimensional manifold with $\partial N = M$, and $P\to N$ be a principal $G$-bundle with the classifying map $g\colon N\rightarrow BG$ whose restriction to $M$ lifts to a principal $H$-bundle $Q\to M$ with the classifying map $h\colon M\rightarrow BH$. The action for the invertible TQFT on $N$ corresponding to $\alpha$ can be written as 
\begin{equation}
\cZ(N) = \exp\left(2\pi i \int_N g^*\widetilde{\alpha}\right).
\end{equation}
Since $g^* \widetilde\alpha  = d (h^* \widetilde \varphi)$
is satisfied on the boundary $M$, we can construct a boundary theory with the partition function 
\begin{equation}\label{eq:boundaryaction}
    \cZ(M)= \frac{1}{\mathrm{Aut}(P)}.\sum_{P\in\pi_0\cat{Bun}_H(M)} \exp\left(-2\pi i \int_M h^*\widetilde \varphi \right).
\end{equation}
Here $\widetilde \varphi(P)$ denotes the pullback of $\widetilde \varphi$ by the classifying map of $P$.

The boundary theory couples to the bulk theory, i.e.\ the boundary theory has $G$-anomaly labelled by $\alpha$, because
\begin{equation}\label{eq:bosonictrivialize}
    \int_N g^* \widetilde \alpha - \int_M h^* \widetilde \varphi
\end{equation}
is invariant upon the $G$-gauge transformations. 
By taking the restriction in $K$, we find that $d\widetilde \varphi|_K = p^*\widetilde \alpha|_K = 0$. Hence, $\widetilde \varphi |_K$ represents an element in $H^n(BK;\C^\times)$ and the boundary theory is a $K$-gague theory with the Dijkgraaf-Witten twist given by $[\widetilde \varphi |_K]$. The class $[\widetilde \varphi |_K]$ thus obtianed is called the \textit{transgression} of $\alpha$ under the short exact sequence \cref{eq:SES} \cite{2014arXiv1404.3230K}. We see that this construction gives a TQFT realizing the given anomaly.

For bosonic anomalies $\alpha$ valued in regular cohomology and $n\geq 3$, this construction is always possible.
Let $\widehat{K} = \mathrm{Hom}(K, \mathrm U(1))$; according to \cite[\S 2.7]{Tachikawa:2017gyf}, it is always possible to choose an abelian group $K$ such that $\widetilde \alpha = \widetilde e\cup \widetilde z$, where $\widetilde e\in Z^2(BG;K)$ and $\widetilde z\in Z^{n-1}(BG;\widehat{K})$.\footnote{Given a $K$-valued $m$-cochain $\omega$ and a $\widehat K$-valued $n$-cochain $\theta$, we will let $\omega\cup\theta$ denote the \emph{logarithm} of the Pontryagin pairing of $\omega$ and $\theta$, so that it is an element of $C^{m+n}(\text{--};\R/\Z)$.} The cocycle $\widetilde e$ gives rise to an extension
\begin{equation}
    \shortexact*[][p]{K}{H}{G},
\end{equation}
and vacuously $p^*(\widetilde e)$ is a coboundary. Thus $p^*(\widetilde\alpha)$ is also a coboundary, so $p^*(\alpha) = 0$.
Therefore, we obtain the desired construction.  

A natural next step is to generalize the symmetry extension construction to the fermionic case, to produce fermionic TQFTs that saturate a given anomaly.  
In \cite{Kobayashi:2019lep}, Kobayashi--Ohmori--Tachikawa make progress in this direction by writing down a path integral for a boundary fermionic TQFT, for which the bulk SPT is classified by a cocycle pair  $(b,c)$ with $c \in C^{d+1}(BG;\mathbb{C}^\times)$ in the Dijkgraaf-Witten layer and  $b \in C^{d}(BG;\Z/2)$ in the Gu-Wen layer.  As we mentioned in the introduction, the full obstruction should have contributions in supercohomology which notably includes a third layer, the \term{Majorana layer}. However, a path integral description of the $K$-gauge theory that generalizes \cite{Kobayashi:2019lep} to include the third layer remains elusive.

Therefore it remains unclear what the symmetry extension construction using supercohomology, i.e.\ pulling back a supercohomology class in such a way that it trivializes on a larger group, actually yields in the fermionic setting. A priori it is only a formal manipulation. To ameliorate this situation, we will explain in \S\ref{subsection:TO} how a fermionic symmetry extension construction naturally arises when axiomatizing anomalous (3+1)d fermionic theories with fusion 2-categories. Specifically, we review the classification (3+1)d symmetry-enriched topological orders (SETs), which  uses fusion 2-categories and twisted supercohomology. The data required to implement the symmetry extension construction is precisely what the classification of (3+1)d $G$-SETs provides. This offers a conceptual foundation for why even without a path integral presentation in terms of cocycles, it is possible to construct a well-defined (3+1)d fermionic topological quantum field theory that is the boundary for an invertible fermionic topological field theory.

\subsection{Anomalous topological orders from fusion 2-categories}\label{subsection:TO}
We now summarize how the classification of (3+1)d fermionic $G$-SETs, as well as their anomalies, is formulated in terms of fusion 2-categories and twisted supercohomology. We then explain how this classification naturally integrates into the framework of the fermionic symmetry extension construction.

Let $G$ be a finite group. The classification of (3+1)d fermionic $G$-SETs is conducted by first starting with the classification of (3+1)d fermionic topological order without any symmetry, and then enriching it with a $G$-symmetry. This enrichment is made precise categorically in \cite{DY2025}, and physically it amounts to adding in $G$-symmetry defects into the theory.
Let $\cZ(\fC)$ denote the Drinfeld center of a fusion 2-category $\fC$~\cite{Kong:2019brm}, which is a nondegenerate braided fusion 2-category.
Topological orders in (3+1)d were classified in \cite{Lan_2018,Lan_2019,JF}, and separated into three cases: 
\begin{enumerate}
    \item When all the excitations are bosonic, the category that describes the topological order takes the form $\cZ(\tVect^\pi_K)$, where $\tVect^\pi_K$ denotes the fusion $2$-category of $K$-graded 2-vector spaces with pentagonator twisted by a class $\pi \in H^4(BK;\mathbb{C}^\times)$~\cite[Construction 2.1.16]{douglas2018fusion}.\footnote{To define an actual fusion $2$-category, one must choose a cocycle representative of $\pi$, but the Morita class, and therefore the topological order, does not depend on this choice.}
    \item When the spectrum contains an emergent fermion, the category that describes the topological order takes the form $\cZ(\tsVect^\varphi_K)$, where $\tsVect^\varphi_K$ is the fusion $2$-category of $K$-graded $2$-super vector spaces with pentagonator twisted by (a cocycle representative of) $\varphi \in \SH^{4}(BK,\omega)$ \cite{Lan_2019}.
    \item When the spectrum contains a local fermion, so that the theory couples to spin structure, the topological orders are classified by a gauge group $K$ and a class in $\SH^4(BK)$ \cite[Corollary V.5]{JF}.
\end{enumerate}

As explained in Appendix~\ref{subsection:twistedSH}, the appearance of supercohomology is essential, as it precisely corresponds to the Picard 2-groupoid $\tsVect$. This correspondence underlies the use of supercohomology in the classification of both $(3+1)$d topological orders and $(3+1)$d (anomalous) SETs.

Analogously to bosonic $G$-SETs in (2+1)d,  bosonic (3+1)d  $G$-SETs are, categorically, nondegenerate  faithfully graded $G$-crossed braided fusion $2$-categories. The work of \cite{DY2025} shows that in the case when the SET has a local fermion they are described and parametrized as follows.
\begin{thm}[{\cite[Proposition 4.4]{DY2025}}]\label{thm:localfermionicSET}
    (3+1)d $G$-SETs with local fermions are equivalent to nondegenerate  $\tsVect$-enriched braided fusion 2-categories with a fully faithful braided 2-functor from $\mathbf{2Rep}(G)$.
\end{thm}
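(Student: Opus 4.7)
The plan is to construct the equivalence as a pair of adjoint functors — higher-categorical gauging and ungauging — and then verify the unit and counit are isomorphisms. This is the $(3+1)$d fermionic analogue of the classical Drinfeld--Müger--Kirillov correspondence between $G$-crossed braided fusion $1$-categories and braided fusion $1$-categories containing $\mathbf{Rep}(G)$. So conceptually the argument has two inputs: the general equivariantization package for fusion $2$-categories, and a check that the $\tsVect$-enrichment (encoding the local fermion) is compatible with it.

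First I would unpack what a $(3+1)$d $G$-SET with local fermions is categorically: a $\tsVect$-enriched $G$-crossed braided fusion $2$-category, i.e.\ a faithful $G$-grading $\fC = \bigoplus_{g\in G} \fC_g$ with a compatible $G$-action and braiding, where the trivial component $\fC_e$ is $\tsVect$-enriched (equivalently, contains the transparent local fermion). Then I would define the gauging functor $\fC \mapsto \fC^G$, the $G$-equivariantization. The key claim is that $\fC^G$ is again $\tsVect$-enriched and nondegenerate braided. The $\tsVect$-structure survives gauging because the local fermion is transparent and hence canonically $G$-invariant, so it descends to $\fC^G$. Nondegeneracy after equivariantization follows from nondegeneracy of $\fC$ together with the standard fact that equivariantization by a finite group preserves the Müger-style nondegeneracy condition for braided fusion $2$-categories. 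Canonically, one gets a fully faithful braided $2$-functor $\mathbf{2Rep}(G) \hookrightarrow \fC^G$ sending a $G$-representation $V$ to the equivariant object $(\mathbf{1}_{\fC}, \rho_V)$; this is fully faithful because $\mathrm{End}_{\fC^G}(\mathbf{1}_{\fC}, \rho_V)$ recovers the corresponding $2$-vector space of $G$-maps.

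For the reverse direction, I would construct the ungauging (de-equivariantization) functor: given $(\fD, \iota)$ with $\iota\colon \mathbf{2Rep}(G) \hookrightarrow \fD$ fully faithful braided, form $\fD_G$ whose objects are objects of $\fD$ equipped with a trivialization of the induced $\mathbf{2Rep}(G)$-module structure on their endomorphism algebras. The $G$-grading on $\fD_G$ arises from the action of the simple objects of $\mathbf{2Rep}(G)$ corresponding to $G$-orbits, and the $G$-crossed braiding is induced from the braiding of $\fD$. The $\tsVect$-enrichment is preserved because the transparent fermion in $\fD$ commutes with the $\mathbf{2Rep}(G)$ embedding (both lie in the braided center), so it descends to $(\fD_G)_e$. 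Finally, the unit and counit equivalences $\fC \simeq (\fC^G)_G$ and $\fD \simeq (\fD_G)^G$ are an instance of the higher categorical version of the equivariantization--de-equivariantization duality, which for fusion $2$-categories has been worked out in the bosonic setting.

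The main obstacle is the higher-categorical coherence bookkeeping combined with the Koszul signs in $\tsVect$. In the bosonic $(3+1)$d case, the analogous equivalence follows directly from the general theory of $G$-crossed extensions of braided fusion $2$-categories and the $\mathbf{2Rep}(G)$-de-equivariantization functor; the substantive new content here is that all of this interacts correctly with $\tsVect$-enrichment. Concretely, one must check that the equivariant structures, associators, and braidings respect the sign rule inherited from $\tsVect$, and that the local fermion never picks up a nontrivial $G$-twist, even at the level of $2$-morphism coherences. This is precisely the content of \cite[Proposition 4.4]{DY2025}, so the proof strategy is to cite the equivariantization framework developed there and verify that each construction above is functorial in the $\tsVect$-enriched sense, reducing the claim to the bosonic equivalence applied to the underlying plain fusion $2$-categories.
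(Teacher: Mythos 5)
The paper does not prove this statement: it is imported verbatim as \cite[Proposition 4.4]{DY2025}, and the surrounding text only records the analogy with the bosonic $(2+1)$d case. So there is no internal proof to compare your argument against; the only fair comparison is with what the cited proposition must itself establish.

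Your outline --- gauging/ungauging as a higher-categorical equivariantization/de-equivariantization adjunction, plus a check that the $\tsVect$-enrichment descends because the local fermion is transparent and hence fixed by any $\tsVect$-linear $G$-action --- is the right conceptual shape, and it is the expected route (the $(3+1)$d analogue of the Kirillov--M\"uger--DGNO correspondence). The problem is that, as written, the proposal is circular exactly where the work lies. You conclude by saying that the compatibility of all of this with the enrichment ``is precisely the content of \cite[Proposition 4.4]{DY2025}'' and that the strategy is to ``cite the equivariantization framework developed there'' --- but Proposition 4.4 is the statement being proved, so it cannot be an input. The substantive claims you defer are precisely the ones a proof must supply: (i) that equivariantization and de-equivariantization of braided fusion $2$-categories are mutually inverse (the unit and counit equivalences are not formal in the $2$-categorical setting and require the condensation/Karoubi-completion machinery); (ii) that equivariantization of a faithfully graded nondegenerate $G$-crossed braided fusion $2$-category is again nondegenerate, and conversely that de-equivariantizing along a fully faithful $\mathbf{2Rep}(G)$ recovers a faithful grading --- neither is ``standard'' at the $2$-categorical level and both must be proved; and (iii) that a fully faithful braided $\mathbf{2Rep}(G)\hookrightarrow\fD$ automatically lands where de-equivariantization applies (in the $1$-categorical story this is the Tannakian condition, and its $2$-categorical surrogate needs to be identified, not assumed). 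Your description of the de-equivariantized category via ``trivializations of the $\mathbf{2Rep}(G)$-module structure on endomorphism algebras'' is also loose; the construction is a relative tensor product over $\mathbf{2Rep}(G)$, and the $G$-grading comes from the regular algebra, not from $G$-orbits. None of this makes the approach wrong, but as a proof it reduces to the citation, which is what the paper already does.
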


The data of an enrichment of a braided fusion 2-category $\fB$ over $\tsVect$ is a sylleptic functor $\tsVect \to \cZ_2(\fB)$. The notation $\cZ_2(\fB)$ refers to the sylleptic center of the braided fusion 2-category $\fB$, as defined in \cite{Cr}. The objects in the sylleptic center are those braid trivially with all other objects in $\fB$, and thus the functor picks out the local fermion. 
By unpacking this theorem, $\tsVect$-enriched nondegenerate fermionic braided fusion 2-categories
with a fully faithful braided 2-functor from $\mathbf{2Rep}(G)$ are classified by the following data:
\begin{itemize}
    \item A group $H$ fitting into a short exact sequence,
    \begin{equation}\label{eq:symextnF}
       \shortexact*{K}{H}{G}.
    \end{equation}
    \item A class $\varphi'\in\SH^{4}(BH)$.
\end{itemize}
The classification data in 
\cref{thm:localfermionicSET} gives rise to a fermionic symmetry extension construction; the extension $H\to G$ in question is exactly the one appearing in~\eqref{eq:symextnF}.
This gives a  categorically precise explanation of why the following ansatz can be used to construct an anomalous (3+1)d topological order.
\begin{ansatz}\label{ansatz:fernionicTQFT}
 A (3+1)d fermionic topological order with $G$-symmetry and $\alpha\in \SH^{5}(BG,0, \omega)$ obstruction is realized as a $K$-gauge theory from the following data:
\begin{itemize}
    \item A group $H$ such that
    \begin{equation}\label{one_SES}
        1 \longrightarrow K \longrightarrow H \overset p\longrightarrow G \longrightarrow 1
    \end{equation}
    is a short exact sequence.
     \item A 4-cochain $\varphi$ for the group $H$ which trivializes the pullback of $\alpha$ to $H$. 
    \item A class in $\SH^{4}(BK)$ describing the $K$-gauge theory\footnote{If the extension~\eqref{one_SES} splits, then a cocycle representing this supercohomology class is the Lagrangian in a supercohomology version of Dijkgraaf--Witten theory. See Kim~\cite{Kim22} for more on Dijkgraaf--Witten theory in this generality and Freed--Neitzke~\cite{FN23, FN24} for a related construction of a classical Chern--Simons theory. In general, when~\eqref{one_SES} does not split, the Lagrangian description of the $K$-gauge theory is messier, and we found it easier to work with fusion $2$-categories.} obtained from gauging the non-anomalous subgroup $K$  of $H$.
\end{itemize}
\end{ansatz}
\noindent We now elaborate on how the second piece of data involving $\varphi$ is relevant for the construction viewed in the framework of fusion 2-categories. Using the symmetry extension sequence, we consider a braided fusion 2-category $\fC$ constructed from equivariantizing a $H$-action on a grouplike fusion 2-category $\fB$,\footnote{The objects of such a fusion 2-category have grouplike fusion rules for the objects and also go by the name ``strongly fusion 2-categories'', which is a term coined in \cite{Johnson-Freyd:2020ivj}.} given by  a map $BH\rightarrow B \Aut^{br}(\fB)$. See \cite[Section 3.4]{DY2025} for an explanation of how braided fusion 2-categories arise from equivariantization.  Suppose the anomaly $\alpha \in SH^5(BG,0,\omega)$ can be trivialized on $H$ so that $p^* \alpha = d \varphi$. If there is a short exact sequence with $K \hookrightarrow H$  then one can consider the restriction $\varphi |_K$. In particular one can construct the category of bimodules of $\tVect^{\varphi |_K}_K$ over $\fC$.  The resulting category of bimodules therefore yields in fact a non-degenerate braided fusion 2-category, and hence a topological order in (3+1)d. This reflects the fact that $\varphi$ trivializes on $K$, and hence the anomaly is trivializable when restricted from the general $H$-symmetry to a $K$-symmetry. In particular, we can gauge the $K$-symmetry. The categorical construction involving trivializing $\varphi$ when restricted to $K$ parallels the construction around \Cref{eq:bosonictrivialize} used for constructing bosonic theories.  

In summary, the rigorous definition of fermionic $G$-SETs ensures that the corresponding topological order can be obtained by gauging the full $H$-symmetry described in the second item above. Consequently, gauging any subgroup $K\subset H$ yields a well-defined $K$-gauge theory, even in the absence of an explicit path-integral construction. In the local fermion case, the data is essentially the same, except we only consider supercohomology and not twisted supercohomology whenever it appears.

We note that since the obstruction $\varphi$ is valued in twisted supercohomology with $\omega \in H^2(BG;\Z/2)$, if one wants to trivialize the pullback $p^*\varphi$ then it is necessary to pick $\omega' \in H^2(BH;\Z/2)$ such that it is equivalent to the pullback of  $\omega$ to $H$. The equation that needs to be solved to trivialize $p^*\varphi$ is given by
\begin{equation}\label{eq:matching}
    p^* \widetilde{\alpha} = d  \widetilde{\varphi}\,.
\end{equation}
Here $\widetilde{\alpha}$ is a cocycle representative of $\alpha$. Let $ \widetilde\alpha= (a,b,c)$ and $\widetilde\omega'$ be a cocycle representative of $\omega'$, from the explicit cochain formula in Appendix~\ref{subsection:twistedSH}, the equation we need to solve is 
\begin{equation}
    d \varphi = \left(da ,\,  db +(\Sq^2+\widetilde{\omega}')a,\, dc + (-1)^{(\Sq^2+\widetilde{\omega}')b}\cdot  f_{\widetilde{\omega}'}(a)\right).
\end{equation}

Furthermore, taking the cocycle $\widetilde\varphi = (\alpha,\beta,\gamma)$ we find that \Cref{eq:matching} becomes the following system of equations:
\begin{subequations}\label{cocyc_equations}
\begin{align}
    p^* \alpha & = da \\
    p^* \beta  &= db +(\Sq^2+\widetilde \omega')a \\ 
    p^* \gamma &= dc + (-1)^{(\Sq^2+\widetilde \omega')b}\cdot f_{\widetilde{\omega}'}(a)\,,
 \end{align}
 \end{subequations}
in which solving for $(a,b,c)$ would allow us to construct the theory with anomaly cocycle $\widetilde\varphi$.

In cases where the symmetry group involves time-reversal that mixes nontrivially with a finite unitary symmetry and fermion parity, one would expect to construct a boundary TQFT from a class in $\SH^5(X,s,\omega)$ with $s\ne 0$. However, the theory of fusion 2-categories so far does not accommodate twists of supercohomology that arise when $G$ has antiunitary generators, and thus the categorical description for $G$-SETs involving time reversal is not fully fledged. While a complete formulation of the corresponding category with the appropriate unitarity structures has yet to be established, we do not anticipate any fundamental obstructions to its construction.
We thus conjecture, by means of a physically reasonable extrapolation to the unitary setting, that there is an extension of the theory of fusion $2$-categories to not-necessarily-unitary twists, which agrees with the symmetry extension construction applied to $\SH^5(X,s,\omega)$.

\section{Supercohomology anomalies and beyond-supercohomology anomalies}\label{section:always_works}
In this section, we give a complete characterization of which 5d $(BG, s, \omega)$-twisted spin IFTs can be realized as anomalies of four-dimensional TFTs (assuming these TFTs have one-dimensional state spaces on $S^3$, which is the case relevant for topological order). Specifically, we prove the following theorem, which appears below 
\Cref{mainthmA} in the introduction. 
\begin{thm}
\label{existence_and_nonexistence}
Let $G$ be a finite group, $s\in H^1(BG;\Z/2)$, $\omega\in H^2(BG;\Z/2)$, and $\alpha\in\mho_\Spin^5(BG, s, \omega)$.
\begin{enumerate}
    \item\label{SH:yes} Suppose $s = 0$ and $\alpha$ is in the image of the map $\SH^5(BG, 0, \omega)\to \mho_\Spin^5(BG, 0, \omega)$. Then there is an algorithmic construction of a 4d TFT $Z$ of $(BG, 0, \omega)$-twisted spin manifolds such that the deformation class of the anomaly of $Z$ equals $\alpha$ and $Z(S^3)\cong\C$, where $S^3$ carries the $(BG,s,\omega)$-twisted spin structure induced from its unique spin structure..
    \item\label{SH:no} Suppose $\alpha$ is not in the image of $\SH^5(BG, s, \omega)\to \mho_\Spin^5(BG, s, \omega)$. Then there is no 4d TFT $Z$ of $(BG, s, \omega)$-twisted spin manifolds, which assigns a 1-dimensional Hilbert space to $S^3$, whose anomaly is deformation-equivalent to $\alpha$.
\end{enumerate}
\end{thm}
When we say ``an algorithmic construction of a 4d TFT,'' we represent this TFT as a nondegenerate braided fusion $2$-category. We assume, motivated by physics evidence, that the extra assumption, i.e., a 1-dimensional Hilbert space is assigned to $S^3$, is satisfied by fermionic topological orders.

We prove part~\eqref{SH:yes} of \cref{existence_and_nonexistence} in \S\ref{ss:yes}, and prove part~\eqref{SH:no} in \S\ref{ss:no}. We believe that, after developing the theory of unitary fusion $2$-categories, the $s = 0$ assumption in part~\eqref{SH:yes} will be able to be dropped; see \cref{antiunitary_speculation}.

\subsection{Algorithmically constructing TFTs with supercohomology anomalies}
\label{ss:yes}
The essential ingredient in our algorithmic construction is the following theorem. 
\begin{thm}
\label{thm:always_triv}
Let $G$ be a finite group, $s\in H^1(BG; \Z/2)$, $\omega\in H^2(BG; \Z/2)$, $n\ge 5$, and $\alpha\in\SH^n(BG, s, \omega)$. Then there is an algorithmic construction of a finite group $\widetilde G$ and a map $\rho\colon \widetilde G\to G$ such that \begin{equation}
    \rho^*(\alpha) = 0\in \SH^n(B\widetilde G, \rho^*(s), \rho^*(\omega)).
\end{equation}
\end{thm}
\begin{rem}
\label{WW_qn}
Wan--Wang~\cite[\S II]{WW25} asked, given $G$, $s$, $\omega$ and an $\alpha\in\mho_\Spin^5(BG, s, \omega)$, how can one find $(\widetilde G, \rho)$ trivializing $\alpha$ as in \cref{thm:always_triv} when they exist? (And, implicitly, how can one tell when no such $(\widetilde G, \rho)$ exists?) \Cref{existence_and_nonexistence,thm:always_triv} provide a complete answer to this question: when $\alpha$ is the image of a supercohomology class, one can construct $(\widetilde G, \rho)$ by the algorithm in \cref{thm:always_triv}. Conversely, if $\alpha$ is not the image of a supercohomology class, no such $(\widetilde G, \rho)$ could trivialize $\alpha$, as that would give rise to a TFT $Z$ with anomaly $\alpha$ and $\dim(Z(S^3)) = 1$, in violation of \cref{existence_and_nonexistence}, part~\eqref{SH:no}.

This answer does not close the book on Wan--Wang's question, though: the algorithm that will appear in the proof of \cref{thm:always_triv} is far from optimal, in that $\widetilde G$ is generally much larger than necessary, making it unwieldy in practice for actually constructing TFTs. This can be seen when $G\cong\Z/2^n$ and $s = 0$, where $\widetilde G\to G$ may be taken to be $\Z/2^{n+2}\to\Z/2^n$ by \cref{thm:Znsymmetry,thm:trivializedGen} and Wan--Wang~\cite[\S\S IV.B, V.A, V.C]{WW25}. It would be interesting to improve the algorithm in \cref{thm:always_triv} to produce smaller $(\widetilde G, \rho)$.
\end{rem}

The key ingredient in our proof of \cref{thm:always_triv} is \cref{trivcoh}, an analogue for ordinary cohomology. According to~\cite[\S 1]{DLM24}, \cref{trivcoh} is a folklore theorem, well-known to experts. Versions or special cases of it appear in work of Wang--Wen--Witten~\cite[\S V.A]{Wang:2017loc}, Tachikawa~\cite[\S 2.7]{Tachikawa:2017gyf},  Décoppet~\cite[Proposition 4.2.3]{D9}, and and DeLazzer Meunier~\cite[Theorem 3.4]{DLM24}.
\begin{thm}
\label{trivcoh}
Let $G$ be a finite group, $M$ be a $\Z[G]$-module, $n\ge 3$, and $\alpha\in H^n(BG; M)$. Then there is an algorithmic construction of a finite group $\widetilde G$ and a homomorphism $\rho\colon \widetilde G\to G$ such that $\rho^*(\alpha) = 0$.
\end{thm}
The references above do not state that this construction may be done algorithmically in the input data $(G, M, n, \alpha)$, but this is clear from DeLazzer Meunier's proof (\textit{ibid.}).
\begin{proof}[Proof of \cref{thm:always_triv}]
The proof proceeds by trivializing one layer at a time. Namely, first we find a group $G_1\rightarrow G$ such that the Majorana layer is trivialized, then we find a group $G_2\rightarrow G_1$ such that the Gu--Wen layer is trivialized, and finally we find a group $G_3\rightarrow G_2$ such that the Dijkgraaf--Witten layer is also trivialized. 

Since the Postnikov truncation of $\SH$ to degrees $-2$ and below has exactly one nonzero homotopy group, $\pi_{-2}(\tau_{\le -2}\SH)\cong\Z/2$, there is a canonical homotopy equivalence $\tau_{\le -2}\SH\overset\simeq\to \Sigma^{-2}H\Z/2$. The map induced by $\tau_{\le -2}$,
\begin{equation}
    (\bl)_{\mathrm{Maj}}\colon \SH^n(BG, s, \omega) \longrightarrow H^{n-2}(BG; \Z/2),
\end{equation}
sends a class $\alpha$ to its \term{Majorana layer} $\alpha_{\mathrm{Maj}}$. The fiber of $\tau_{\le -2}\colon\SH\to\Sigma^{-2} H\Z/2$ is the usual map $j\colon \mathit{rSH}\to\SH$ from restricted supercohomology to supercohomology.

Given $G$, $n$, and $\alpha$ as in the theorem statement, let $\rho_1\colon G_1\to G$ be the finite cover constructed algorithmically in \cref{trivcoh} such that $\rho_1^*(\alpha_{\mathrm{Maj}}) = 0 \in H^{n-2}(BG_1; \Z/2)$. Then we have the following commutative diagram, whose top row is exact:
\begin{equation}\begin{tikzcd}
	{\mathit{rSH}^n(BG_1, \rho_1^*(s), \rho_1^*(\omega))} & {\SH^n(BG_1, \rho_1^*(s), \rho_1^*(\omega))} & {H^{n-2}(BG_1;\Z/2)}. \\
	& {\SH^n(BG, s, \omega)} & {H^{n-2}(BG;\Z/2)}
	\arrow["j", from=1-1, to=1-2]
	\arrow["{(\bl)_{\mathrm{Maj}}}", from=1-2, to=1-3]
	\arrow["{\rho_1^*}", from=2-2, to=1-2]
	\arrow["{(\bl)_{\mathrm{Maj}}}", from=2-2, to=2-3]
	\arrow["{\rho_1^*}", from=2-3, to=1-3]
\end{tikzcd}\end{equation}
Thus $(\rho_1^*\alpha)_{\mathrm{Maj}} = \rho_1^*(\alpha_{\mathrm{Maj}}) = 0$, so by exactness there is a class $\beta\in\mathit{rSH}^n(BG_1, \rho_1^*(s), \rho_1^*(\omega))$ with $j(\beta) = \rho_1^*(\alpha)$. It therefore suffices to prove the theorem with $\mathit{rSH}$ in place of $\SH$: if we can trivialize $\beta$ after pulling back to a finite cover, the same is true for $j(\beta) = \rho_1^*(\alpha)$, and the composition of finite covers is finite.

Now do exactly the same thing with the \term{Gu--Wen layer} $\beta_{\mathrm{GW}}\in H^{n-1}(BG_1;\Z/2)$, constructed from the Postnikov truncation of $\mathit{rSH}$ to degrees $-1$ and below. We obtain a fiber sequence
\begin{equation}
    H\C^\times \overset{j'}{\longrightarrow} \mathit{rSH} \overset{(\bl)_{\mathrm{GW}}} \longrightarrow \Sigma^{-1}H\Z/2,
\end{equation}
so if $\rho_2\colon G_2\to G_1$ is the finite cover constructed algorithmically in \cref{trivcoh} such that $\rho_2^*(\beta_{\mathrm{GW}}) = 0 \in H^{n-1}(BG_2; \Z/2)$, then we have the following commutative diagram, whose top row is exact:
\begin{equation}\begin{tikzcd}
	{H^n(BG_2, \C^\times_{\rho_2^*(\rho_1^*(s))})} & {\mathit{rSH}^n(BG_2, \rho_2^*(\rho_1^*(s)), \rho_2^*(\rho_1^*(\omega)))} & {H^{n-1}(BG_2;\Z/2)}. \\
	& {\mathit{rSH}^n(BG_1, \rho_1^*(s), \rho_1^*(\omega))} & {H^{n-1}(BG_1;\Z/2).}
	\arrow["j'", from=1-1, to=1-2]
	\arrow["{(\bl)_{\mathrm{GW}}}", from=1-2, to=1-3]
	\arrow["{\rho_2^*}", from=2-2, to=1-2]
	\arrow["{(\bl)_{\mathrm{GW}}}", from=2-2, to=2-3]
	\arrow["{\rho_2^*}", from=2-3, to=1-3]
\end{tikzcd}\end{equation}
Just as before, we see $\rho_2^*(\rho_1^*(\alpha))$ is the image of a class $\gamma\in H^n(BG_2; \C^\times_{\rho_2^*(\rho_1^*(s))})$ under $j'$, and therefore it suffices to trivialize $\gamma$ by pulling back to a finite cover. For this, apply \cref{trivcoh} again.
\end{proof}
This almost implies part~\eqref{SH:yes} of \cref{existence_and_nonexistence} -- the only promise yet to fulfill is that we can construct $Z$ algorithmically. For this it suffices to produce algorithms doing the following two things for a finite group $G$ and twisting data $(s,\omega)$.
\begin{enumerate}
    \item Given $\alpha\in\SH^n(BG, s, \omega)$, find a cocycle representative $\widetilde\alpha$ for $\alpha$, using the cocycle description of supercohomology we gave in Appendix \ref{subsection:twistedSH}.
    \item Given a cocycle $\widetilde\gamma$ for supercohomology known to be exact, find a cochain $x$ solving the equations~\eqref{cocyc_equations}.
\end{enumerate}
We must take a slight detour before solving these problems.
\begin{defn}
Recall that $\tsVect^\times$ denotes the Picard $2$-groupoid of the Morita $2$-category of complex superalgebras. For any even natural number $N$, let ${}_N(\tsVect)^\times$ denote the sub-$2$-category of $\tsVect^\times$ consisting of all objects, all $1$-morphisms, and only the $2$-morphisms corresponding to $N^{\mathrm{th}}$ roots of unity $\mu_N$ inside $\C^\times$. Then, let $\SH_N$ denote $\Sigma^{-2}$ of the classifying spectrum for ${}_N(\tsVect)^\times$.
\end{defn}
$\Sigma^{-2}$ appears because $\SH$ is $\Sigma^{-2}$ of the classifying spectrum for $\tsVect^\times$. 
\begin{lem}\hfill
\label{SHN_lem}
\begin{enumerate}
    \item\label{SHN_cocyc} $\SH_N$ has a cocycle theory identical to the one for $\SH$ given in Appendix \ref{subsection:twistedSH}, except with $c\in C^n(BG; \mu_N)$ instead of $C^n(BG; \C^\times)$.
    \item\label{who_is_cof} The inclusion ${}_N(\tsVect)^\times\hookrightarrow \tsVect^\times$ induces a map of spectra $\iota_N\colon\SH_N\to\SH$ whose cofiber is $H\C^\times$.
    \item\label{mostly_isom} On homotopy groups, $\iota_N$ is an isomorphism in all degrees except $0$, where it is canonically identified with the inclusion $\mu_N\inj \C^\times$.
\end{enumerate}
\end{lem}
\begin{proof}
Part~\eqref{SHN_cocyc} is identical to the proof of the cocycle description for twisted supercohomology.
Both~\eqref{who_is_cof} and~\eqref{mostly_isom} follow from the standard description of the homotopy groups of the spectrum classifying a Picard $2$-groupoid~\cite[Lemma 3.2]{GJO17}. This description is natural in the Picard $2$-groupoid, so the fact that the objects and $1$-morphisms have not changed implies that $\iota_N$ is an isomorphism in degrees $-1$ and below; coconnectivity implies it is an isomorphism in degrees $1$ and above. In degree $0$, the map is the inclusion of $\Aut_{\Aut(1)}(1)$ from ${}_N(\tsVect)^\times$ into $\tsVect^\times$, which by construction is $\mu_N\subset\C^\times$. Thus using the long exact sequence on homotopy groups, the cofiber has a single homotopy group isomorphic to $\C^\times/\mu_N\cong\C^\times$.
\end{proof}
\begin{cor}
\label{SHN_isom}
Let $G$ be a finite group, $s\in H^1(BG;\Z/2)$, and $\omega\in H^2(BG;\Z/2)$, If $\# G\mid N$, then the map $\SH_N^*(BG,s,\omega)\to\SH^*(BG,s,\omega)$ is an isomorphism in degrees $1$ and above.
\end{cor}
\begin{proof}
By \cref{SHN_lem}, part~\eqref{who_is_cof}, the Postnikov truncation $\tau_{\le -1}$ applied to $\iota_N$ has cofiber $\tau_{\le -1}(H\C^\times) = 0$, so $\tau_{\le -1}(\iota_N)\colon\tau_{\le -1}\SH_N\to\tau_{\le -1}\SH$ is an equivalence of spectra.

Apply the fiber sequence $\tau_{\ge 0}\to \id\to\tau_{\le -1}$ to the map $\iota_N$ to obtain a commutative diagram, whose rows are fiber sequences:
\begin{equation}\begin{tikzcd}
	{H\mu_N} & {\SH_N} & {\tau_{\le -1}\SH_N} \\
	{H\C^\times} & \SH & {\tau_{\le -1}\SH}.
	\arrow["{\tau_{\ge 0}}", from=1-1, to=1-2]
	\arrow[from=1-1, to=2-1]
	\arrow["{\tau_{\le -1}}", from=1-2, to=1-3]
	\arrow["{\iota_N}", from=1-2, to=2-2]
	\arrow["\simeq", from=1-3, to=2-3]
	\arrow["{\tau_{\ge 0}}", from=2-1, to=2-2]
	\arrow["{\tau_{\le -1}}", from=2-2, to=2-3]
\end{tikzcd}\end{equation}
Map into this diagram with $BG$ to obtain a commutative diagram of long exact sequences. Since $N$ divides the order of $G$, the map $H^*(BG;\mu_N)\to H^*(BG;\C^\times)$ is an isomorphism in positive degrees, Thus two of the three maps in the diagram of long exact sequences are isomorphisms, so the third is as well by the five lemma.
\end{proof}
Now we can prove the ``existence'' part of \cref{existence_and_nonexistence}, that if $\alpha$ is a class in $\mho_\Spin^5$ realized as the image of a supercohomology class, then one can algorithmically construct a 4d TFT $Z$ whose anomaly is deformation-equivalent to $\alpha$. Here we assume the degree-$1$ part of the twist vanishes (though see \cref{antiunitary_speculation}).
\begin{proof}[Proof of \cref{existence_and_nonexistence}, part~\eqref{SH:yes}]
The symmetry extension procedure in this paper guarantees the existence of a TFT $Z$ whose anomaly's deformation class equals $\alpha$. We need to find a cocycle representative for $\alpha$ and, once we have trivialized $\alpha$ by pulling back, find a cochain solving the equations~\eqref{cocyc_equations}. The key takeaway from \cref{SHN_isom} is that we may equivalently work with $\SH_N$, which has a compatible cocycle description by \cref{SHN_lem}, part~\eqref{SHN_cocyc}. But since $G$ and $N$ are finite, this is a \emph{finite} cocycle description: in each degree $n$, there are finitely many cochains. Thus one algorithmic solution to the two questions we need to solve is to simply try each one of the cochains for $\SH_N$ in the correct degree, as \cref{thm:always_triv} guarantees that at least one will work.
\end{proof}

\begin{rem}
\label{antiunitary_speculation}
In the existence part of \cref{existence_and_nonexistence}, we assume $s = 0$. As we discussed in \cref{s_nonzero}, this is because the fusion $2$-category methods that we use in this paper have not yet been extended to antiunitary symmetries. We conjecture that, once this technology is in place, \cref{SH:yes} will generalize to arbitrary $(s,\omega)$.
\end{rem}

\subsection{The \texorpdfstring{$p+ip$}{p+ip} layer enforces gaplessness}
\label{ss:no}
In this subsection, we prove part~\eqref{SH:no} of \cref{existence_and_nonexistence}: the ``nonexistence'' part. We need to show that, if $\alpha\in\mho_\Spin^5(BG, s, \omega)$ is \emph{not} the image of a twisted supercohomology class, then no 4d TFT $Z$ with $\dim(Z(S^3)) = 1$ has anomaly deformation-equivalent to $\alpha$. This result is discussed from the symmetry extension point of view in \cite{Hsi18,Cheng:2024awi,WW25} and work in progress of Wan--Wang--Yau~\cite{ZJSNew}. They show that for beyond supercohomology anomalies, the corresponding elements cannot be trivialized by symmetry extension. This, together with the analysis in the context of fusion 2-categories, strongly suggests that beyond supercohomology anoamlies cannot be realized by four-dimensional TFTs. Our proof is inspired by an argument of Córdova--Ohmori~\cite{CO2}, and refines it: Córdova--Ohmori prove the theorem for $(G, s, \omega) = (\Z/2n, 0, y)$ (where $y$ is the nonzero element of $H^2(B\Z/2n;\Z/2)\cong\Z/2)$ using a combination of topological and index-theoretic methods; our proof replaces the index theory with an algebraic-topological argument and thus generalizes more easily to all finite groups.

First, we identify the complete obstruction to a class $\alpha\in\mho_\Spin^5(BG, s, \omega)$ being in the image of the map from $\SH^5(BG, s, \omega)$. We will call this obstruction the \term{$p+ip$ layer} $\alpha_{p+ip}\in H^1(BG; \C^\times_s)$.

The precise definition and characterization of the $p+ip$ layer (\cref{pip_defn}) is a little technical, so we give it in an appendix. As a good approximation, the reader can think of it as follows.
\begin{fakedefn}
\label{notpip}
In the Atiyah--Hirzebruch spectral sequence for $\mho_\Spin^*(BG, s, \omega)$, the submodule of the $E_\infty$-page in total degree $5$ is naturally a direct sum of four abelian groups: $E_\infty^{5,0}$, $E_\infty^{4,1}$, $E_\infty^{3,2}$, and $E_\infty^{1,4}$. For degree reasons, there are no nonzero differentials into $E_r^{1,4}$, so there is an inclusion $i\colon E_\infty^{1,4}\hookrightarrow E_2^{1,4}$.

The $p+ip$ layer is the homomorphism $\mho_\Spin^5(BG, s, \omega)\to H^1(BG; \C^\times_s)$ defined as the following composition:
\begin{equation}
    (\text{--})_{p+ip}\colon
        \mho_\Spin^5(BG, s, \omega) \to E_\infty^{\bullet, 5-\bullet} \twoheadrightarrow E_\infty^{1,4} \overset{i}{\hookrightarrow} E_2^{1,4}\cong H^1(BG; \C^\times_s).
\end{equation}
The first two arrows are the projection onto the associated graded, resp.\ onto a direct summand.
\end{fakedefn}
In \cref{pip_defns_agree}, we prove that if $(s, \omega)$ are $(w_1, w_2)$ of a vector bundle $V\to BG$, this provides a correct definition of the $p+ip$ layer, in agreement with the one in \cref{pip_defn}; however, such a bundle $V$ does not exist in general, which causes technical problems with the construction of the Atiyah--Hirzebruch spectral sequence used in \cref{notpip}.

The essential fact we use about the $p+ip$ layer is that it is the complete obstruction to realizing a 5d reflection-posiitve IFT as the image of a supercohomolgy class:
\begin{lem}
\label{cofib_pip}
There is a long exact sequence
\begin{equation}\label{pipexact}
    \dotsb \to\SH^5(BG, s, \omega) \longrightarrow
    \mho_\Spin^5(BG, s, \omega) \overset\varphi\longrightarrow
    H^2(BG;\Z_s)\to\dotsb
\end{equation}
Thus, given $\alpha\in\mho_\Spin^5(BG, s, \omega)$, $\alpha_{p+ip} = 0$ if and only if $\alpha$ is in the image of $\SH^5(BG, s, \omega) \to\mho_\Spin^5(BG, s, \omega)$.
\end{lem}
We will prove this in Appendix~\ref{s:p+ip}.

Therefore, what we will actually prove is that if $\alpha_{p+ip}\ne 0$, then there cannot be a 4d TFT $Z$ with anomaly $\alpha$. When $s = 0$, $\alpha_{p+ip}$ can be identified with a homomorphism $G\to\C^\times$, i.e.\ a group character.

The proof starts with the following observation by Córdova--Ohmori \cite{CO2}.
\begin{thm}[Córdova--Ohmori \cite{CO2}]
\label{thm:K3}
Let $Z$ be a four-dimensional unitary spin TFT such that $Z(S^3)$ is one-dimensional. Then the partition function $Z(\mathrm{K3}) = 0$.
\end{thm}
We sketch the proof by Córdova--Ohmori here for the reader's convenience.
\begin{proof}[Proof Sketch]
Give the unit disc $D^4$ its unique spin structure and regard it as a bordism $\varnothing\to S^3$, so that $Z(D^4)$ is a linear map $\C\to Z(S^3)$. Let $\lvert0\rangle \in Z(S^3)$ be the image of $1$ under this map. Since $S^4\cong D^4\cup{S^3}D_4$, the partition function $Z(S^4) = \langle 0\mid 0\rangle$, which is positive by unitarity. Moreover, \cite{CO1} proved that the partition function $Z(S^2\times S^2)\ne 0$.

Let $X$ be a closed spin $4$-manifold and $\check X\coloneqq X\setminus D$ where $D$ is a disc in $X$. Again regarding $X$ as a bordism $\varnothing\to S^3$, it defines a linear map $\C\to Z(S^3)$, which sends $1\mapsto Z(X)/Z(1)$ times $\lvert 0\rangle$. Using this, one sees that for closed spin $4$-manifolds $X$ and $Y$,
\begin{equation}\label{eq:Zconnected}
    Z(X\mathbin{\#}Y) = \frac{Z(X)Z(Y)}{Z(S^4)}~.
\end{equation}

Wall~\cite{Wal64} showed that, for any simply-connected smooth spin 4-manifold $X$, there exists an integer $\ell$ such that 
\begin{equation}\label{wall}
    X\mathbin{\#}(-X) \mathbin{\#} (S^2\times S^2)^{\#\ell}  \stackrel{\text{diff.}}{\cong}(S^2\times S^2)^{\#(\ell+\chi(X)-2)}~,
\end{equation}
where $-X$ denotes $X$ with the opposite orientation, $(S^2\times S^2)^{\#\ell}$ is the connected sum of $\ell$ copies of $S^2\times S^2$, and $\chi(X)$ is the Euler number of $X$. In general, the partition functions on $X$ and $-X$ are complex conjugates.  Hence \eqref{eq:Zconnected} and \eqref{wall} imply that for a simply connected spin manifold $X,$ the absolute value of the partition function depends only on its Euler number:
\begin{equation}\label{eq:Zsimpconn}
    |Z(X)|^2 = Z(S^2\times S^2)^{\chi(X)-2}Z(S^4)^{4-\chi(X)}.
\end{equation}
Since $Z(S^2\times S^2)$ and $S(Z^4)$ are both nonzero, as noted above, we conclude that the partition function on a simply-connected $X$ of any fermionic TQFT with a 1-dimensional Hilbert space on $S^{3}$ must satisfy
\begin{equation}
    Z(X) \neq 0~.
\end{equation}
In particular, K3 manifold should satisfy $Z(\mathrm{K3})\neq 0$.
\end{proof}
On the other hand, we want to analyze $Z(\mathrm{K3})$ in the presence of the anomaly $\alpha$. This means that we need to put $Z$ on the boundary of the IFT corresponding to the anomaly $\alpha$, i.e., $Z$ is the relative theory \cite{Freed:2012bs} relative to $\alpha$.

\begin{lem}
\label{it_is_even}
Let $X$ be a closed spin $4$-manifold and $\alpha$ be a 5d IFT of $(BG, s, \omega)$-twisted spin manifolds. Then the super vector space $\alpha(X)$ is one-dimensional, with even grading.
\end{lem}
\begin{proof}
That $\alpha(X)$ is a super vector space at all follows from the fact that our IFTs are valued in $4\sVect$, and $\Omega^4$ of this category is canonically equivalent to $\sVect$ as symmetric monoidal categories. That $\alpha(X)$ is one-dimensional is because $\alpha$ is invertible. Finally, we want to show that $\alpha(X)$ is even. Following~\cite[\S 4.3.5]{DG18}, this is equivalent to asking that that the image of $[X]\in\Omega_4^\Spin$ under the map $\pi_4(\alpha)\colon \Omega_4^\Spin\to\pi_0(\sVect^\times)\cong\Z/2$ is zero. Here, we may use spin bordism instead of twisted spin bordism because the twisted spin structure on $X$ is induced from a spin structure.

To prove this, act by $\eta\in\pi_1(\mathbb S)$ on both sides: since $\pi_4(\alpha)$ was induced from the map of spectra classifying $\alpha$, it commutes with the action by $\eta$, giving rise to a commutative diagram
\begin{equation}
\begin{tikzcd}
	{\Omega_5^\Spin} & {\pi_1(\sVect)\cong\C^\times} \\
	{\Omega_4^\Spin} & {\pi_0(\sVect)\cong\Z/2.}
	\arrow["{\pi_5(\alpha)}", from=1-1, to=1-2]
	\arrow["\eta", from=2-1, to=1-1]
	\arrow["{\pi_4(\alpha)}", from=2-1, to=2-2]
	\arrow["\eta", from=2-2, to=1-2]
\end{tikzcd}
\end{equation}
Since $\Omega_5^\Spin = 0$~\cite{Mil63} and $\eta\colon\pi_0(\sVect)\to\pi_1(\sVect)$ is isomorphic to the unique injection $\Z/2\hookrightarrow\C^\times$~\cite[\S 4.3.5]{DG18}, $\pi_4(\alpha)$ must vanish.
\end{proof}
Since $\alpha$ is an $(BG, s, \omega)$-twisted spin theory, $\alpha(\mathrm{K3})$  has a $G$-action coming from mapping cylinders of $G$ acting on the trivial principal $G$-bundle by left multiplication, as we explain below in \cref{maptorchar}. Assuming this for now, $Z(\mathrm{K3})$ is necessarily fixed under this action, so since it is nonzero, the whole 1-dimensional vector space $\alpha(\mathrm{K3})$ is invariant under the $G$-action. As such, \cref{it_is_even} prompts us to consider the partition function of $\alpha$ on mapping tori $\mathrm{K3}\times S^1$ with nontrivial holonomy around $S^1$. The invariance of $Z(K3)\neq 0$ suggests that the partition function of the mapping torus is always 1. However, we will show below that when $\alpha$ is a beyond-supercohomology IFT, the partition function can be nontrivial.

First, we reduce the case where $s=0$. In what follows, $P_{\mathrm{triv}}$ will denote a trivial principal bundle, where the group and base will be clear from context.

\begin{defn}
Let $X$ be a closed spin $4$-manifold. For any $g\in G$, let $W_{X,g}$ denote the bordism from $(X, P_{\mathrm{triv}})$ to itself which is $(X\times[0,1], P_{\mathrm{triv}})$ as a manifold with $G$-bundle, such that the incoming boundary is attached by $(\id_X, \id_{P_{\mathrm{triv}}})$, and the outgoing boundary is attached by $(\id_X, g)$.
\end{defn}
Thus $W_{X,g}$ has a canonical $(BG, 0, \omega)$-twisted spin structure for any $\omega$, and as $(BG, 0, \omega)$-structured manifolds, there is a diffeomorphism rel boundary
\begin{equation}
\label{compose_bord}
    W_{X,h}\cup_X W_{X,g}\cong W_{X,gh}.
\end{equation}
To regard $W_{X,g}$ as a bordism, we must describe how $(X, P_{\mathrm{triv}})$ is attached at the incoming and outgoing boundaries, which is more subtle than it initially appears. We will follow~\cite[\S 6.2]{DG18}. Fix a spin point $\pt$ and let $I_1$ denote the spin interval $[0,1]$, regarded as a spin bordism from $\pt$ to $\pt$, such that under the quotient $0\sim 1$, we obtain the \emph{nonbounding} spin circle. Then there is a diffeomorphism, as bordisms, $I_1\cup_{\pt} I_1\cong I_1$~\cite[Lemma 6.8]{DG18}. Thus, letting $W_{X,g}$ to be the bordism $X\times I_1$, with the attaching maps for the principal $G$-bundle described above, \eqref{compose_bord} upgrades to an identity in the bordism (higher) category:
\begin{equation}
\label{compose_bord_part_2}
    W_{X,h}\circ W_{X,g}\simeq W_{X,gh}\colon X\to X.
\end{equation}

\begin{defn}\label{maptorchar}
Let $\alpha$ be a 5d IFT of $(BG, 0, \omega)$-twisted spin manifolds and $X$ be a closed spin $4$-manifold. By \cref{it_is_even}, there is an isomorphism $\alpha(X)\cong\C$; fix one, then define the character $\chi_{\alpha,X}\in G^\vee$ by $\chi_{\alpha,X}(g) \coloneqq \alpha(W_{X,g})\colon\alpha(X)\to\alpha(X)$.
\end{defn}

\begin{prop}\label{lem_main_lemma}
When $s=0$, $\chi_{\alpha, K3} = 0$ if and only if $\alpha_{p+ip} = 0$ as characters of $G$.
\end{prop}

Assuming \cref{lem_main_lemma}, we immediately conclude that, if $\alpha_{p+ip}\ne 0$, there is nonzero vector that is invariant under $G$ action, a contradiction. Then we prove part~\eqref{SH:no} of \cref{existence_and_nonexistence} for $s=0$. 

When $s$ is nontrivial, we just need to restrict $G$ to the unitary subsymmetry $\widetilde{G}:=ker(s)$ and prove that the subsymmetry $\widetilde{G}$ is enough to obstruct the existence of a 4d TFT $Z$.


\begin{prop}
\label{first_reduction}
Suppose $s$ is nontrivial and $\alpha_{p+ip}$ is nontrivial in $H^1(BG; \mathbb{C}_s^\times)$, then $\alpha_{p+ip}|_{\widetilde{G}}$ is nontrivial in $H^1(B\widetilde{G}; \mathbb{C}_s^\times)$.
\end{prop}

\begin{proof}
Since $G$ is finite, there is a  natural isomorphism $H^1(BG;\mathbb{C}^\times_s)\cong H^2(BG;\Z_s)$ so we may as well use the image of the $p+ip$ layer in $H^2(\bl;\Z_s)$. Consider the $\Z$-coefficient Serre spectral sequence associated to the short exact sequence of groups,
\begin{equation}
    \shortexact*[][s]{\widetilde G}{G}{\Z/2}.
\end{equation}
On the $E_2$-page, we have 
\begin{sseqdata}[name=simpleserre, cohomological Serre grading, xrange={0}{3}, yrange={0}{3}, yscale=0.7, xscale=2.2, classes = {draw=none}]
\class["0"](0, 0)
\class["0"](1, 0)
\class["0"](1, 1)
\class["{H^0(B\Z/2;H^2(B \widetilde{G} ; \Z))}"](2, 0)
\class["\Z/2"](0, 1)
\class["0"](0, 2)
\end{sseqdata}
\begin{equation}\label{p2_SimpleSerre}
\begin{gathered}
\printpage[name=simpleserre, page=2]
\end{gathered}
\end{equation}
In particular, the $p = 1$ column is zero because $H^1(BG;\Z)$ is trivial for all finite groups $G$. $E_2^{0,2} = 0$ because $H^2(B\Z/2;\Z_s) = 0$ when $s$ is nontrivial~\cite[Lemma 1]{Cad99}.

Therefore, if $\alpha_{p+ip}\ne 0$, its image in the $E_\infty$-page of~\eqref{p2_SimpleSerre} must be in $E_\infty^{2,0}$, so its restriction to $\widetilde{G}$ must be nontrivial. 
\end{proof}


Now we just need to embark on the proof of \cref{lem_main_lemma}, which is the most technical part. It will be helpful to describe $\chi_{\alpha,X}$ in terms of bordism. First, recall that, thanks to the Atiyah--Hirzebruch spectral sequence,\footnote{If $\omega$ is not equal to $w_2(V)$ for a vector bundle $V\to BG$, then strictly speaking, the relevant Atiyah--Hirzebruch spectral sequence is not constructed. In this case one can instead use the \term{James spectral sequence}~\cite[Proposition 1]{Tei93}.}
there is an exact sequence
\begin{equation}\label{rtex}
    \rightexact[a][]{\Omega_1^\Spin}{\Omega_1^\Spin(BG, 0, \omega)}{H_1(BG;\Z)},
\end{equation}
where $a$ endows a closed spin $1$-manifold with the trivial $G$-bundle. Thus there is a canonical isomorphism $\psi\colon H_1(BG;\Z)\overset\cong\to\Omega_1^\Spin(BG, 0, \omega)/\mathrm{Im}(a)$. Let $h\colon G\to H_1(BG;\Z)$ be the Hurewicz map, and define
\begin{equation}
\label{char_again}
\widetilde\chi_{\alpha,X}\colon G\overset h\longrightarrow H_1(BG;\Z) \overset\psi\longrightarrow \Omega_1^\Spin(BG, 0, \omega)/\mathrm{Im}(a)\overset{\times X}{\longrightarrow} \Omega_5^{\Spin}(BG, 0, \omega) \overset{\alpha}{\longrightarrow}\C^\times.
\end{equation}
In~\eqref{char_again}, we have implicitly claimed that taking the product with $X$, as a map $\Omega_1^\Spin(BG, 0, \omega)\to \Omega_5^\Spin(BG, 0, \omega)$, factors through the quotient by $\mathrm{Im}(a)$. Well, let us explicitly claim it -- because $X$ is spin, the subgroup of $\Omega_5^\Spin(BG, 0, \omega)$ consisting of classes $C\times X$ with $C\in\mathrm{Im}(a)$ is exactly the image of $\Omega_5^\Spin$ in $\Omega_5^\Spin(BG, 0, \omega)$ -- and $\Omega_5^\Spin\cong 0$.
\begin{lem}
\label{bord_of_character}
$\chi_{\alpha,X} = \widetilde\chi_{\alpha,X}$.
\end{lem}
\begin{proof}
After unwinding the definitions of both characters, this follows from the standard fact in TFT that, for a TFT $F$, the partition function of the mapping torus $M_\varphi$ of a $\xi$-structure automorphism $\vp\colon X\to X$ is the trace of the map $F(\vp)$ induced by the mapping cylinder of $\vp$, thought of as a bordism from $X$ to $X$. Here $F = \alpha$, $\vp$ is the identity on $X$ but action by a group element on the principal $G$-bundle, so the mapping cylinder is $W_{X,g}$ and the mapping torus is $X\times S_{\mathit{nb}}^1$, with a principal $G$-bundle whose monodromy around the $S_{\mathit{nb}}^1$ factor is $g$.
\end{proof}

The first step of the proof of \cref{lem_main_lemma} is to immediately reduce to $(B\Z/p^k, 0, \omega)$.
\begin{lem}\label{is_cyclic}
A character $G^\vee$ is determined by its restriction to all cyclic subgroups of $G$.
\end{lem}

\begin{proof}
Since any element $g\in G$ of a finite group $G$ must have finite order, we can immediately conclude.
\end{proof}

So we may now assume $G\cong\Z/p^k$ for some $p$ and $k$, and that $s = 0$. This immediately brings us back to the familiar realm covered in e.g.~\cite{CO2,Cheng:2024awi,Hsi18,Wan:2025lad,Garcia-Etxebarria:2018ajm}. In this case, our options for $\omega$ are limited: unless $p = 2$, $H^2(B\Z/p^k;\Z/2) = 0$, and $H^2(B\Z/2^k;\Z/2)\cong\Z/2$; we will let $y$ denote the nonzero element. With these restrictions in hand, both of the characters in \cref{lem_main_lemma} can be described using the Atiyah--Hirzebruch spectral sequence computing $\mho_{\Spin}^*(BG, s, \omega)$.



\begin{itemize}
    \item Since $y$ equals $w_2$ of the standard representation of $\Z/p^k$ on $\C$ as the $(p^k)^{\mathrm{th}}$ roots of unity, \cref{pip_defns_agree} implies that \cref{notpip}, i.e.\ projection onto the line $q = 3$ in the AHSS, equals the $p+ip$ layer for these choices of $G$ and $\omega$.
    \item We want to compute the value of an IFT $\alpha$ on $C\times \mathrm{K3}$ for certain closed manifolds $C$ with $(BG, 0, \omega)$-twisted spin structures. As explained in TODO, this is the action of $v\coloneqq[\mathrm{K3}]\in\Omega_4^\Spin$ on $\mho_\Spin^*(BG, s, \omega)$ (ultimately coming from the $\MTSpin$-module structure on the Thom spectrum for $(BG, s, \omega)$-twisted spin bordism). The Atiyah--Hirzebruch spectral sequence is a spectral sequence of modules over $\Omega_*^\Spin$, so we can track the action of $v$ there too.
\end{itemize}
To illustrate this, we will first walk through the easiest remaining case, i.e.\ $p\ge 5$. For these $p$, and any $k$, the Atiyah--Hirzebruch spectral sequence for $\mho_\Spin^*$ collapses in degrees $7$ and below, with neither differentials nor hidden extensions, by work of Brown--Peterson~\cite{BP66} (see~\cite[\S 10.5]{DDHM24}).
\begin{prop}\label{p_is_5}
\Cref{lem_main_lemma} is true for $G = \Z/p^k$ if $p$ is a prime number greater than $3$.
\end{prop}
\begin{proof}
As noted above, $s$ and $\omega$ must both vanish.
We draw the Atiyah--Hirzebruch spectral sequence for $\mho_\Spin^*(B\Z/p^k)$; in this figure, $y$ represents the generator of $H^*(B\Z/p^k;\Z)\cong\Z[y]/(py)$, $\abs y = 2$, and $a\in\mho^3_\Spin\cong\Z$ is either of the generators, such as the one believed to represent the low-energy limit of the $p+ip$ insulator. Thus $a$ is dual to $v\in\Omega_4^\Spin$, meaning that $va$ is a generator of $\mho_\Spin^{-1}\cong\Z$. For degree reasons, $v^2a = 0$.

\begin{sseqdata}[name=largeprime, cohomological Serre grading, xrange={0}{6}, yrange={-1}{3}, scale=0.7, classes = {draw=none}]
\class["va"](0, -1)
\class["vay"](2, -1)
\class["vay^2"](4, -1)
\class["vay^3"](6, -1)
\class["a"](0, 3)
\class["ay"](2, 3)
\class["ay^2"](4, 3)
\class["ay^3"](6, 3)
\class["{\eta^2 a}"](0, 1)
\class["{\eta a}"](0, 2)
\circleclasses[rounded rectangle, gray!70!white](2, 3)(6, -1)
\end{sseqdata}
\begin{equation}\label{p5_AHSS}
\begin{gathered}
\printpage[name=largeprime, page=2]
\end{gathered}
\end{equation}
Since the spectral sequence collapses without differentials or extension problems, $\mho_\Spin^5(B\Z/p^k)\cong \Z/p^k\cdot ay \oplus \Z/p^k\cdot vay^3$~\cite[(2.22)]{Hsi18} and $\mho_\Spin^1(B\Z/p^k)\cong\Z/2\cdot\eta^2a \oplus \Z/p^k\cdot vay$. In particular, if $\alpha = \lambda_1 ay + \lambda_2 vay^3$, with $\lambda_1,\lambda_2\in\Z/p^k$, then $\alpha_{p+ip}$ is the component of $\alpha$ in $E_2^{2,3}$, namely $\lambda_1 ay$. And $v\alpha = \lambda_1 vay$. The IFT $\lambda_1 vay$ is in the image of $H^*\to\mho^*$, the ``in-cohomology'' IFTs (since this image is exactly the line $q = -1$), so its value on a closed $1$-manifold $C$ with principal $\Z/p^k$-bundle $P$ is $\int_C \lambda_1 \widetilde y(P)$, where $\widetilde y\in H^1(B\Z/p^k;\C^\times)$ is the preimage of $y$ under the Bockstein isomorphism $H^1(B\Z/p^k;\C^\times)\to H^2(B\Z/p^k;\Z)$. Thus, unwinding the definitions, $\alpha(\mathrm{K3}\times S^1, P_1)$, where $P_1$ is the bundle whose monodromy around the circle is a generator of $\Z/p^k$ is a $p^k$th root of unity to the $\lambda_1^{\mathrm{th}}$ power.

Thus, by inspection, $\alpha_{p+ip}\ne 0$ iff $\chi_{\alpha,\mathrm{K3}}\ne 0$ iff $\lambda_1 \ne 0$.
\end{proof}
The case $G = \Z/3^k$ is only a little different.
\begin{prop}\label{p_is_3}
\Cref{lem_main_lemma} is true for $G = \Z/3^k$.
\end{prop}
\begin{proof}
A lot of things are the same as for \cref{p_is_5} -- $s$ and $\omega$ are still $0$, and the Atiyah--Hirzebruch spectral sequence still lacks differentials for degree reasons, and the $E_\infty$-page is~\eqref{p5_AHSS}. However, this time there is a hidden extension -- $\mho_\Spin^5(B\Z/3^k)\cong\Z/3^{k+1}\oplus\Z/3^{k-1}$~\cite[(2.22)]{Hsi18}. Thus, we can choose $\beta_1,\beta_2\in\mho_\Spin^5(B\Z/3^k)$, such that $\beta_1$ generates a $\Z/3^{k+1}$ summand and $\beta_2$ generates a complementary $\Z/3^{k-1}$ summand, such that the images of $\beta_1$ and $\beta_2$ in the $E_\infty$-page are, respectively, $ay$ and $vay^3$. (If $k = 1$, we can set $\beta_2 = 0$.) In degree $1$, there is no extension issue: $\mho_\Spin^1(B\Z/3^k)\cong\Z/2\cdot \eta^2a \oplus \Z/3^k\cdot vay$.

Onward as before: a general $\alpha\in\mho_\Spin^5(B\Z/3^k)$ is of the form $\lambda_1 \beta_1 + \lambda_2 \beta_2$, where $\lambda_1\in\Z/3^{k+1}$ and $\lambda_2\in\Z/3^{k-1}$. The $p+ip$ layer is $\lambda_1 ay$, which is nonzero iff $\lambda_1\not\in 3^k\Z/3^{k+1}$, and $v\alpha = \lambda_1 vay$, which is also nonzero iff $\lambda_1\not\in 3^k\Z/3^{k+1}$. That these two conditions match is the heart of the proof -- the rest is exactly the same as in the proof of \cref{p_is_5}.
\end{proof}
\begin{prop}\label{p_is_2_untwisted}
\Cref{lem_main_lemma} is true for $(G,s,\omega) = (\Z/2^k, 0, 0)$.
\end{prop}
\begin{proof}
If $k = 1$, $\mho_\Spin^5(B\Z/2)\cong 0$, because $\Omega_5^\Spin(B\Z/2)\cong 0$~\cite{MM76}, and so the result is vacuously true. Thus, in the rest of the proof, we assume $k>1$. Now the Atiyah--Hirzebruch spectral sequence is nonzero on the lines $q = 1,2$, containing a copy of $H^*(B\Z/2^k;\Z/2)\cong\Z/2[\overline x, \overline y]/(\overline x{}^2)$, with $\abs{\overline x} =1$, $\abs{\overline y} = 2$, and $y\bmod 2 = \overline y$:
\begin{sseqdata}[name=evenprime, cohomological Serre grading, xrange={0}{6}, yrange={-1}{3}, yscale=0.7, xscale=1.2, classes = {draw=none}]
\class["va"](0, -1)
\class["vay"](2, -1)
\class["vay^2"](4, -1)
\class["vay^3"](6, -1)
\class["a"](0, 3)
\class["ay"](2, 3)
\class["ay^2"](4, 3)
\class["ay^3"](6, 3)
\class["{\eta^2 a}"](0, 1)
\class["{\eta^2 a\overline x}"](1, 1)
\class["{\eta^2 a\overline y}"](2, 1)
\class["{\eta^2 a\overline x\overline y}"](3, 1)
\class["{\eta^2 a\overline y{}^2}"](4, 1)
\class["{\eta^2 a\overline x\overline y{}^2}"](5, 1)
\class["{\eta^2 a\overline y{}^3}"](6, 1)
\class["{\eta a}"](0, 2)
\class["{\eta a\overline x}"](1, 2)
\class["{\eta a\overline y}"](2, 2)
\class["{\eta a\overline x\overline y}"](3, 2)
\class["{\eta a\overline y{}^2}"](4, 2)
\class["{\eta a\overline x\overline y{}^2}"](5, 2)
\class["{\eta a\overline y{}^3}"](6, 2)

\circleclasses[rounded rectangle, gray!70!white](2, 3)(6, -1)
\end{sseqdata}
\begin{equation}\label{p2_AHSS}
\begin{gathered}
\printpage[name=evenprime, page=2]
\end{gathered}
\end{equation}
Hsieh~\cite[(2.22)]{Hsi18} showed $\mho_\Spin^5(B\Z/2^k)\cong\Z/2^k\oplus\Z/2^{k-2}$ -- there are $2^{2k-2}$ elements. On the $E_2$-page in total degree $5$, we have $\Z/2^k\oplus\Z/2\oplus\Z/2\oplus \Z/2^k$ -- $2^{2k+2}$ classes. Thus all four groups in total degree $5$, which are $E_2^{2,3}$, $E_2^{3,2}$, $E_2^{4,1}$, and $E_2^{6,-1}$, must support a differential. This wipes out $E_2^{3,2}$ and $E_2^{4,1}$, quotients $E_2^{2,3}\cong\Z/2^k$ to $\Z/2^{k-1}$, and replaces $E_2^{6,-1}\cong\Z/2^k$ with its even subgroup, also isomorphic to $\Z/2^{k-1}$.

Thus, in total degree $5$ on the $E_\infty$-page, we have $ay$ generating a $\Z/2^{k-1}$ and $2vay^3$ generating another $\Z/2^{k-1}$, and there is a hidden extension between them. In degree $1$, there are once again neither differentials nor extension problems: $\mho_\Spin^1(B\Z/2^k)
\cong\Z/2\cdot \eta^2a \oplus\Z/2^k\cdot vay$. Thus the rest of the proof is the same as it was for $p =3$ in \cref{p_is_3}: there are $\beta_1,\beta_2\in\mho_\Spin^5(B\Z/2^k)$ generating the $\Z/2^k$, resp.\ $\Z/2^{k-2}$ summands, whose images in the $E_\infty$-page are $ay$, resp.\ $2vay^3$. Thus a general $\alpha\in\mho_\Spin^5(B\Z/2^k)$ is of the form $\lambda_1 \beta_1 + \lambda_2\beta_2$ with $\lambda_1\in\Z/2^k$ and $\lambda_2\in\Z/2^{k-2}$. The $p+ip$ layer of $\alpha$ is $\lambda_1 ay\in E_2^{2,3}\cong \Z/2^k$, nonzero iff $\lambda_1\not\in 2^k\Z/2^{k+1}$, and $v\alpha = \lambda_1 vay$, nonzero iff $\lambda_1\not\in 2^k\Z/2^{k+1}$, and can be detected with $H^1(\bl;\C^\times)$ as usual.
\end{proof}
\begin{prop}[{Córdova--Ohmori~\cite{CO2}}]
\label{p_is_2_twisted}
\Cref{lem_main_lemma} is true for $(G,s,\omega) = (\Z/2^k, 0, y)$.
\end{prop}
\begin{proof}
Once again draw the Atiyah--Hirzebruch spectral sequence. The $E_2$-page is isomorphic to the $E_2$-page for the case $y = 0$, as we saw in~\eqref{p2_AHSS}. (If $k = 1$, we can use the same picture, but this time $x^2 = y$ instead of $x^2 = 0$. This has no effect on the rest of the proof.) Thus there are again $2^{2k+2}$ elements of total degree $5$ on the $E_2$-page. By~\cite[(2.42), (2.43)]{Hsi18}, $\mho_\Spin^5(B\Z/2^k, 0, y)\cong\Z/2^{k+3}\oplus\Z/2^{k-1}$, which also has $2^{2k+2}$ elements, so there can be no differentials to or from classes in total degree $5$. Moreover, the orders of the summands in this group of IFTs uniquely constrain the nature of the hidden extensions: all three times that the extension could be nontrivial for degree reasons, it is nontrivial. The upshot is that we can choose $\beta_1,\beta_2\in\mho_\Spin^5(B\Z/2^k, 0, y)$ such that $\beta_1$ generates a $\Z/2^{k+3}$ summand and $\beta_2$ generates a complimentary $\Z/2^{k-1}$ summand, and such that the images of $\beta_1$ and $\beta_2$ in the $E_\infty$-page are $ay$, resp.\ $2vay^3$. (If $k = 1$, there is no second summand, so take $\beta_2 = 0$.)

This time around, $\mho_\Spin^1(BG,0,y)$ is more interesting: it is isomorphic to $\Z/2^{k+1}$, so there must be a hidden extension between $E_\infty^{0,1}\cong\Z/2$ and $E_\infty^{2,-1}\cong\Z/2^k$. In particular, $v\beta_1$ has image in the $E_\infty$-page equal to $vay$, hence is two times a generator of $\mho_\Spin^1(B\Z/2^k, 0, y)$, and has order $2^k$.

As before, let $\alpha\in\mho_\Spin^5(B\Z/2^k, 0, y)$, so that $\alpha = \lambda_1\beta_1 + \lambda_2\beta_2$, with $\lambda_1\in\Z/2^{k+3}$ and $\lambda_2\in\Z/2^{k-1}$. The $p+ip$ layer of $\alpha$ is thus $\lambda_1 ay\in E_2^{2,3}\cong\Z/2^k$, and is nonzero if and only if $\lambda_1\not\in 2^k\Z/2^{k+3}$. Similarly, $v\alpha = \lambda_1 v\beta_1$, which is nonzero if and only if $\lambda_1\not\in 2^k\Z/2^{k+3}$. The rest of the proof is essentially the same as the previous three cases; this time, we must be careful with the fact that $v\beta_1$ does not generate a direct summand of $\mho_\Spin^1(B\Z/2^k, 0, y)$. This ends up not being an issue: it \emph{does} generate the subgroup of IFTs in the image of $H^*\to\mho_\Spin^*$, and the inclusion of this image is dual to the quotient $\Omega_1^\Spin(BG, 0, y)/\Omega_1^\Spin$ we took in~\eqref{char_again}. Thus, $\lambda_1\not\in 2^k\Z/2^{k+3}$ if and only if $v\alpha\ne 0$ if and only if $v\alpha$ is nonzero on $P\to S^1$, where $P$ is the mapping torus of some element of $\Z/2^k$ acting on $\Z/2^k$ by translation.
\end{proof}
Thus, by combining \cref{is_cyclic,p_is_5,p_is_3,p_is_2_untwisted,p_is_2_twisted}, we have proven \cref{lem_main_lemma}. As we discussed above, this finishes the proof of \cref{existence_and_nonexistence}.

\section{Examples}\label{section:construction}
We now go to concrete examples and construct (3+1)d $G$-SETs with the symmetry structures given in \cref{ex:cyclic,ex:cyclicfermion,ex:timerev}. We do this by explaining how to trivialize generators of $\SH^5(BG,s,\omega)$ that parametrize the $G$-anomaly by pulling back to a larger group. This will establish \cref{thm:Znsymmetry,thm:trivializedGen,thm:timerev}. Once an element of $\SH^5(BG,s,\omega)$ has been trivialized, one can gauge a subgroup symmetry, following the Wang--Wen--Witten construction, and obtain a TQFT realizing that anomaly class. We collect all the calculations of the relevant supercohomology groups in Appendix~\ref{appendix:B} and focus on the calculation of the pullback in this section. To finish the computation, we will need to use the \textit{Smith long exact sequence}. We refer the reader to \cref{Smith_LES,LES_is_Smith} for background material and references about this long exact sequence.

To denote the relevant cohomology classes, we write down the relevant cohomology rings for the cyclic groups under consideration. The $\Z/2$ cohomology ring of $B\Z/2$ is given by
\begin{equation}
    H^*(B\Z/2; \bZ/2) = \Z/2[x], \quad |x| = 1 
\end{equation}
where $x$ is the nontrivial generator of $H^1(B\Z/2; \bZ/2)$. The $\Z/2$ cohomology ring of $B\Z/2^k,k\geq 2$ is given by
\begin{equation}
    H^*(B\Z/2^k; \bZ/2) = \Z/2[x, y]/(x^2), \quad |x| = 1, |y| = 2. 
\end{equation}
Finally, the integral cohomology ring of $B\Z/2^k$ with $k$ positive integer is given by 
\begin{equation}\label{eq:commoncoh_Z}
    H^*(B\Z/2^k;\Z) = \Z/2[\widehat y]/(2^k \cdot \widehat y),~|\widehat y|=2\,.
\end{equation}
We also need the following lemma regarding the image of the generators under the pullback.
\begin{lem}
\label{p_on_coh}
Consider the mod-$2^k$ reduction map $p\colon\Z/2^{k+1} \rightarrow\Z/2^k$.
\begin{enumerate}
    \item\label{Z2_pullback} For $k=1$, $p^*(x^2) = 0$. For $k\geq 2$, $p^*(y) = 0$. 
    \item\label{Z_pullback} $p^*(\widehat y) = 2\widehat y \in H^2(B\Z/2^{k+1};\Z)$.
\end{enumerate}
\end{lem}
\begin{proof}
Because $x^2$ or $y$ is the unique nontrivial class in $H^2(B\Z/2^k;\Z/2)$, it corresponds to a nonsplit central extension of $\Z/2^k$ by $\Z/2$, i.e.,
\begin{equation}
     1 \longrightarrow \Z/2\longrightarrow \Z/2^{k+1} \xrightarrow{p} \Z/2^k \longrightarrow 1\,.
\end{equation}
When we pull this extension back along $p$, $x^2$, resp.\ $y$ is tautologically trivialized, when $k = 1$, resp.\ $k>1$. This concludes the proof of part~\eqref{Z2_pullback}.

For part~\eqref{Z_pullback}, for a finite group $G$, consider the natural isomorphism 
\begin{equation}
H^2(BG;\Z) \cong H^1(BG;\C^\times)\,,
\end{equation}
given by the Bockstein homomorphism associated with the short exact sequence $0\to\Z\to\R\to\C^\times\to 0$. $H^1(BG;\C^\times)$ is naturally isomorphic to $\Hom(G, \C^\times)$, which is simply the abelian group for 1-dimensional complex representations of the group $G$. The generator $\widehat y$ corresponds to the representation which sends $1\in \Z/2^k$ to $\exp(2\pi i/2^k)$. We immediately see that under the pullback map $p^*(\widehat y) = 2\widehat y $.
\end{proof}

\subsection{Computations for $\Z/n\times\Z/2^F$ (\cref{ex:cyclic})}\label{subsub:cyclic}

\begin{proof}[Proof of \Cref{thm:Znsymmetry}]

Let $n= p^{k_1}_1 p^{k_2}_2 \ldots p^{k_\ell}_\ell$ be the prime factorization of $n$. Then the map $r_j\colon B\Z/n\to B\Z/p_j^{k_j}$ induced by the mod $p_j^{k_j}$ reduction map $\Z/n\to \Z/p_j^{k_j}$ is a homotopy equivalence after localizing at $p_j$, so all $p_j$-primary torsion in $\SH^5(B\Z/n)$ is in the image of $r_j^*$. Thus, it suffices to consider the case when $n = p^k$, as trivializations in these cases induce trivializations for all $n$.

When $n=p^k$ is odd, the AHSS is only nontrivial in the Dijkgraaf--Witten layer, because the $\Z/2$-valued cohomology of $B\Z/n$ vanishes in positive degrees. Therefore, there is a canonical isomorphism $\SH^5(B\Z/n)\cong H^5(B\Z/n;\C^\times)$, which is cannonically isomorphic to $\Z/n$. 
The generator of $H^5(B\Z/n;\C^\times)$ is the image of $x y^2\in H^5(B\Z/p^{k};\Z/p)$ under the exponential map $\Z/p \to\C^\times$ sending $\ell \mapsto \exp(2\pi i\ell/p)$, 
where $x \in H^1(B\Z/p^{k};\Z/p)\cong\Z/p$ and $y \in H^2(B\Z/p^{k};\Z/p)\cong\Z/p$ are the standard generators.
To trivialize the generator, consider the short exact sequence
\begin{equation}
    1 \rightarrow \Z/p \rightarrow \Z/p^{k+1} \rightarrow \Z/p^k\rightarrow 1\,.
\end{equation}
Similar to the proof of Part~\ref{Z2_pullback} of \cref{p_on_coh}, when we pull back to $H^5(B\Z/p^{k+1};\Z/p)$, $y$ trivializes, so $x y^2$ pulls back to $0$ as well. Therefore, we prove that we can use a $\Z/p$ gauge theory to realize the given anomalies. 

When $n = 2$, \Cref{prop:Z2notwist} shows that  $\SH^5(B\Z/2) = 0$. 

In the case when $G= \Z/2^k$ for $k\geq 2$, 
\Cref{prop:2knotwist} shows that $\SH^5(B\Z/2^k) = \Z/2^{k-1}$, and the generator is in the Dijkgraaf--Witten layer. This means that we can express it as a class in $H^5(B\Z/2^{k};\mathbb{C}^\times)\cong H^6(B\Z/2^{k};\mathbb{Z})$.  
We pull back the generator of $H^6(B\Z/2^{k};\Z)$ to $H^6(B\Z/2^{k+m};\Z)$ along the sequence 
\begin{equation}\label{eq:seq3repeat}
         1 \longrightarrow \Z/2^{m}\longrightarrow \Z/2^{k+m} \longrightarrow \Z/2^k \longrightarrow 1\,.
\end{equation}
From part~\ref{Z_pullback} of \cref{p_on_coh}, it is immediate to see that the generator ${\widehat y}^3 \in H^6(B\Z/2^k;\Z)$ pulls back to $8^{m}{\widehat y}^3 \in H^6(B\Z/2^{k+m};\Z)$, and it is trivialized in $SH^5(B\Z/2^{k+m})$ if $8^m {\widehat y}^3 \geq 2^{k+m-1}$, i.e., $m\geq \frac{k - 1}{2}$. Here, the extra 1 on the left-hand side of the inequality comes from the difference between supercohomology and regular cohomology. Therefore, we prove that we can use a $\Z/2^m$ gauge theory to realize the given anomalies. This establishes Theorem \ref{thm:Znsymmetry}.
\end{proof}

\subsection{Computations for $\Z/(2n)^F$ (\cref{ex:cyclicfermion})}\label{subsub:cyclicfermion}
Since the group structure in this example has the unitary $\Z/n$ symmetry mixing with fermion parity, the computations involve twisted supercohomology where the twist arises from a degree 2 cohomology class in $H^2(B\Z/n;\Z/2)$. In the case where $n$ is odd, the twist is trivial, and we can apply the same computation as those in Example \ref{ex:cyclic}. 

We first treat the case when  $n=2$, so that 
\begin{equation}
    g^2 = (-1)^F\,.
\end{equation}
This corresponds to giving spacetime a $G$-structure where $G = \Spin\times_{\set{\pm 1}}\Z/4$~\cite{Hason:2020yqf}, which is equivalent to a $(B\Z/2, 0, x^2)$-twisted spin structure: see~\cite[\S 7.8]{Cam17} and~\cite[Example 6.23]{debray2024smith}.
When $n = 2^k$, the story is similar: we obtain a $\Spin\times_{\set{\pm 1}}\Z/2^{k+1}$-structure, which is equivalent to a $(B\Z/2^k, 0, y)$-twisted spin structure: see~\cite{Hsi18} and~\cite[Example 6.23]{debray2024smith}.
These pass to the corresponding twists of supercohomology as described in Appendix \ref{subsection:twistedSH}. The obstructions associated to these symmetry structures are captured by $\SH^5(B\Z/2,0,x^2) \cong \Z/8$ and $\SH^5(B\Z/2^k,0,y) \cong \Z/2^{k+1} \oplus \Z/2$, respectively, which we calculated in \Cref{prop:Z2x2twist,prop:2kytwist}.

The easiest way to analyze the pullback for $n=2$ requires some knowledge of the Smith long exact sequence, which we present in detail below.

\begin{proof}[Proof of Part~\ref{exp2_2} of \Cref{thm:trivializedGen}]
To trivialize a generator for $\SH^5(B\Z/2,0,x^2)\cong\Z/8$, we first consider the symmetry extension sequence given by
\begin{equation}\label{eq:Z4Z2}
     1 \longrightarrow \Z/2\longrightarrow \Z/4 \xrightarrow{p} \Z/2 \longrightarrow 1\,.
\end{equation}
According to part~\ref{Z2_pullback} of \cref{p_on_coh}, $p^*(x^2)$ is zero. Thus the twist $(0, x^2)$ over $B\Z/2$ pulls back to the trivial twist $(0, 0)$ over $B\Z/4$. Therefore the map $p$ in~\cref{eq:Z4Z2} induces a pullback map $\SH^5(B\Z/2,0,x^2)\to\SH^5(B\Z/4,0,0)$. We now show that this map sends $1\mapsto 1$ and trivializes the $\Z/4$ subgroup $2\Z/8\subset \Z/8$. 

To perform this computation in supercohomology, we will study two associated Smith long exact sequences. The goal is to analyze this pullback by reducing it to a lower-degree pullback that is easier to study and already known in the literature. 
\begin{subequations}\label{bothSmith}
\begin{gather}\label{eq:smith1}
\begin{tikzcd}[ampersand replacement=\&, column sep=0.4cm]
    \ldots \arrow[r]\& \SH^4(B\Z/2) \arrow[r]\& \SH^4(B\Z/4,x,0)\arrow[r,"\mathrm{sm}_\sigma"] \&\SH^5(B\Z/4)\arrow[r]\&
    \SH^5(B\Z/2) \arrow[r]\&\ldots
    \end{tikzcd}\\
 \label{eq:smith2}
  \begin{tikzcd}[ampersand replacement=\&, column sep=0.4cm]
    \ldots \arrow[r]\& \SH^4(\mathrm{pt}) \arrow[r]\& \SH^4(B\Z/2,x,x^2)\arrow[r,"\mathrm{sm}_\sigma"] \& \SH^5(B\Z/2,0,x^2)\arrow[r]\&
    \SH^5(\mathrm{pt}) \arrow[r]\&\ldots\,
      \end{tikzcd}
 \end{gather}
 \end{subequations}
The first one is constructed in~\cite[(A.29)]{Debray:2023iwf},\footnote{$(B\Z/4, x, 0)$-twisted spin bordism, sometimes called \term{epin bordism}, is also studied in~\cite{BG97, BY99, BY00, WWZ20}. See also~\cite{CHZ24} for a closely related symmetry type in a physics application.} and the second is constructed in~\cite[Theorem 3.1]{Gia73}, where it is attributed to Stong.\footnote{This long exact sequence is studied more systematically, as part of a family of related Smith long exact sequences, in~\cite{HS13, KTTW15, TY19, Hason:2020yqf, WWZ20, BR23, debray2024smith}.} According to \cref{prop:Z2notwist} and degree considerations, we have $\SH^\ell(B\Z/2, 0, 0) = 0$ and $\SH_\ell(\pt) = 0$ for $\ell = 4,5$. Therefore, the horizontal Smith homomorphisms in~\eqref{eq:smithsquareko} are all isomorphisms. 

Now consider the commuting square constructed from the two Smith homomorphisms:

\begin{equation}\label{eq:smithsquareko}
    \begin{tikzcd}[column sep=1cm, row sep=1cm]
\underbracket[0.15ex]{\SH^4(B\Z/4,x,0)}_{\cong\Z/2} \arrow[r,"\mathrm{sm}_\sigma"] \arrow[dr, phantom, very near start] & \underbracket[0.15ex]{\SH^4(B\Z/4,0,0)}_{\cong\Z/2~\eqref{prop:2knotwist}} \\
\underbracket[0.15ex]{\SH^4(B\Z/2,x,x^2)}_{\cong\Z/8} \arrow[u] \arrow[r, "\mathrm{sm}_\sigma"] & \underbracket[0.15ex]{\SH^5(B\Z/2,0,x^2)}_{\cong\Z/8~\eqref{prop:Z2x2twist}} \arrow[u] \,.
\end{tikzcd}
\end{equation}
The left vertical map can be analyzed by going back to the spin bordism. Consider the following map of short exact sequences, where the middle map is determined in \cite[Proposition A.35 (1)]{Debray:2023iwf}, and the subscript $x$ of $\C^\times$ indicates the nontrivial twist of cohomology.
{\footnotesize
    \begin{gather}
\begin{tikzcd}[ampersand replacement=\&]
	0 \& {\SH^4(B\Z/2,x,x^2) \cong \Z/8} \& {\mho^4(B\Z/2,x,x^2)\cong\Z/16} \& {H^1(B\Z/2,\C_x^\times) \cong \Z/2} \& 0 \\
	0 \& {\SH^4(B\Z/4,x,0) \cong \Z/2} \& {\mho^4(B\Z/4,x,x^2)\cong\Z/4} \& {H^1(B\Z/2,\C_x^\times) \cong \Z/2} \& 0
	\arrow[from=1-1, to=1-2]
	\arrow["{1\mapsto 2}", from=1-2, to=1-3]
	\arrow["{p^*}"', from=1-2, to=2-2]
	\arrow["{\bmod 2}", from=1-3, to=1-4]
	\arrow["{p^*}", from=1-3, to=2-3]
	\arrow[from=1-4, to=1-5]
	\arrow["{1\mapsto 1}", from=1-4, to=2-4]
	\arrow[from=2-1, to=2-2]
	\arrow["{1\mapsto 2}", from=2-2, to=2-3]
	\arrow["{\bmod 2}", from=2-3, to=2-4]
	\arrow[from=2-4, to=2-5]
\end{tikzcd}
\end{gather}}
The commutativity of the diagram then forces $p^*\colon \SH^4(B\Z/2,x,x^2)\rightarrow \SH^4(B\Z/4,x,0)$ to be a projection map sending $1\mapsto 1$. As a result, the right vertical map, which is the map we want to analyze, also sends $1\mapsto 1$.

Therefore, extending $G=\Z/2$ by $\Z/2$ is not enough to trivialize the whole $\SH^5(B\Z/2,0,x^2)\cong\Z/8$. Still, using the result of \cref{thm:Znsymmetry}, the generator can be trivialized by considering a further extension
     \begin{equation}
      1 \longrightarrow \Z/2\longrightarrow \Z/8 \longrightarrow \Z/4 \longrightarrow 1\,.
 \end{equation}
 Therefore, we prove that we can use a $\Z/4$ gauge theory to realize the given anomalies. 
\end{proof}

Now we go to $n=2^k,k\geq 2$, which also requires the construction of Smith long exact sequences that are slightly more involved.

\begin{proof}[Proof of Part~\ref{its_m} of \Cref{thm:trivializedGen}]

The last case to study in this example is $n = 2^k$ with $k \geq 2$. We first consider the symmetry extension sequence given by
\begin{equation}
     1 \longrightarrow \Z/2^{}\longrightarrow \Z/2^{k+1} \xrightarrow{p} \Z/2^k \longrightarrow 1\,,
\end{equation} 
which induces the pullback
\begin{equation}
    p^*\colon \SH^5(B\Z/2^{k+1}, 0, 0) \longrightarrow \SH^5(B\Z/2^k, 0, y)\,.
\end{equation}

We will use two Smith long exact sequences in this proof, again in order to analyze this pullback using another pullback in lower degrees. First take \Cref{Smith_LES} with $E_* = \tau_{\leq 2}\ko_*$, $X = B\Z/2^k$, and $V=W=V_\rho$. There is a homotopy equivalence $S(V_\rho)\simeq S^1$, which stably splits as $\mathbb S\vee\Sigma\mathbb S$~\cite[Example 7.28]{debray2024smith}. Thus we have a long exact sequence

\begin{equation}\label{eq:smithwithV}
     \ldots \rightarrow (\tau_{\leq 2}\ko)_n(\mathbb{S}\vee \Sigma \mathbb S) \rightarrow (\tau_{\leq 2}\ko)_n((B\Z/2^k)^{V_\rho}) \xrightarrow{\mathrm{sm}_{V_\rho}} (\tau_{\leq 2}\ko)_{n-2}(B\Z/2^k)\rightarrow \ldots.
\end{equation}

\begin{lem}
\label{pullback_rho}
Under the map $p\colon B\Z/2^{k+1}\rightarrow B\Z/2^k$, the bundle $V_\rho \to B\Z/2^k$ pulls back to $V_\rho\otimes V_\rho \rightarrow B\Z/2^{k+1}$.
\end{lem}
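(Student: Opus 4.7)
The plan is to reduce the statement to a direct character calculation, since complex line bundles over $BG$ for finite $G$ are classified by group homomorphisms $G\to\U_1$ (equivalently, $1$-dimensional unitary representations), and this classification is natural in $G$: pullback of bundles corresponds to precomposition of characters, and tensor product of bundles corresponds to pointwise product of characters.

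With that reduction in hand, I would first unpack the definitions from \cref{p_on_coh}. The bundle $V_\rho\to B\Z/2^m$ is associated to the character
\begin{equation}
    \chi_m\colon \Z/2^m \longrightarrow \U_1, \qquad 1\longmapsto e^{2\pi i/2^m}.
\end{equation}
Next, since $p\colon \Z/2^{k+1}\to\Z/2^k$ is the mod $2^k$ reduction, it sends $1\mapsto 1$, so the pulled-back character $p^*\chi_k = \chi_k\circ p$ satisfies
\begin{equation}
    (p^*\chi_k)(1) = \chi_k(1) = e^{2\pi i/2^k} = \bigl(e^{2\pi i/2^{k+1}}\bigr)^{2} = \chi_{k+1}(1)^2.
\end{equation}
Thus $p^*\chi_k = \chi_{k+1}^2$ as characters of $\Z/2^{k+1}$. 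Translating back through the classification, the pullback line bundle $p^*V_\rho$ over $B\Z/2^{k+1}$ is associated to $\chi_{k+1}^2$, which is by definition the tensor square of the line bundle associated to $\chi_{k+1}$, namely $V_\rho\otimes V_\rho$.

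Since both the identification of line bundles with characters and the compatibility of pullback/tensor product with the corresponding operations on characters are standard, there is no real obstacle here; the lemma is essentially a repackaging of the identity $p(1)=1$ together with the observation $1/2^k = 2\cdot(1/2^{k+1})$. As a consistency check, one can verify at the level of first Chern classes that $c_1(p^*V_\rho) = p^*c_{1,k} = 2c_{1,k+1} = c_1(V_\rho^{\otimes 2})$, using \cref{p_on_coh}\eqref{Z_pullback} and the fact that $c_1$ of a tensor product of line bundles is the sum of the $c_1$'s; since complex line bundles on $B\Z/2^{k+1}$ are detected by $c_1$, this gives an alternative (cohomological) verification of the same identification.
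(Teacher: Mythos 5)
Your proof is correct and follows essentially the same route as the paper's: both reduce the statement to a comparison of one-dimensional representations of the cyclic group $\Z/2^{k+1}$, check the two characters on the generator $1$, and observe that both act by $e^{2\pi i/2^k}$. The added Chern-class consistency check is fine but not needed.
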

\begin{proof}
It suffices to show this at the level of representations of $\Z/2^{k+1}$; since this is a cyclic group, it suffices to check on a generator. Specifically, for both $p^*(V_\rho)$ and $V_\rho\otimes V_\rho$, it is straightforward to see that $1\in\Z/2^{k+1}$ acts by $e^{2\pi i/k}$.
\end{proof}

The other Smith long exact sequence we need uses $E = \tau_{\le 2}\ko$ again; this time $X = B\Z/2^{k+1}$, $V = 0$, and $W = V_\rho\otimes V_\rho$:
\begin{equation}\label{eq:smithVV}
    \begin{tikzcd}
    \dotsb \rightarrow (\tau_{\leq 2} \ko)_n (S(V_\rho\otimes V_\rho))  \rightarrow (\tau_{\leq 2} \ko)_n (B\Z/2^{k+1}) \xrightarrow{\text{sm}_{V_\rho\otimes V_\rho}} (\tau_{\leq 2} \ko)_{n-2} (B\Z/2^{k+1}) \rightarrow \dotsb\,,
\end{tikzcd}
\end{equation}
A priori the rightmost term in~\eqref{eq:smithVV} is a $(B\Z/2^{k+1}, V_\rho\otimes V_\rho)$-twisted $\tau_{\le 2}\ko$-homology group, but $V_\rho\otimes V_\rho$ has a canonical spin structure, so we obtain untwisted $\tau_{\le 2}\ko$-homology. In a little more detail, a spin structure on a complex line bundle is equivalent to a choice of square root with respect to tensor product~\cite{Ati71}, and for $V_\rho\otimes V_\rho$, we have the square root $V_\rho$.

The sphere bundle $S(V_\rho\otimes V_\rho)$ fits in the  following diagram, where both squares are pullback squares:
\begin{equation}\label{eq:doublepullback}
    \begin{tikzcd}
        S(V_\rho\otimes V_\rho) \arrow[r] \arrow[d] & S(V_\rho) \arrow[r] \arrow[d] &E \mathbb{C}^\times \arrow[d] \\
        B\Z/2^{k+1} \arrow[r] & B\Z/2^k \arrow[r] & B\mathbb{C}^\times\,.
    \end{tikzcd}
\end{equation}
We identified $S(V_\rho)\simeq S^1 = B\Z$ above and need to compute $S(V\otimes V)$.
\begin{lem}
There is a homotopy equivalence $S(V_\rho\otimes V_\rho)\simeq B\Z\times B\Z/2$ under which
\begin{enumerate}
    \item the map $S(V_\rho\otimes V_\rho)\to B\Z/2^{k+1}$ in~\eqref{eq:doublepullback} is $B$ of the map $\Z\times \Z/2\to \Z/2^{k+1}$ sending $(c, d)\mapsto c + 2^kd$, and
    \item the map $S(V_\rho\otimes V_\rho)\to S(V_\rho)$ is identified with the map $B\Z\times B\Z/2\to B\Z$ which is projection onto the first factor.
\end{enumerate}
\end{lem}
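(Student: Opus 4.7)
The plan is to combine \cref{pullback_rho} with a homotopical identification of the sphere bundle $S(V_\rho) \to B\Z/2^k$, reducing the lemma to an explicit pullback of discrete abelian groups.

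By \cref{pullback_rho}, $V_\rho \otimes V_\rho$ over $B\Z/2^{k+1}$ is the pullback $p^*V_\rho$, so $S(V_\rho \otimes V_\rho) \simeq S(V_\rho) \times_{B\Z/2^k} B\Z/2^{k+1}$; this is the leftmost homotopy pullback square in \cref{eq:doublepullback}. It therefore suffices to identify $S(V_\rho) \to B\Z/2^k$ homotopically with $B$ of the mod $2^k$ reduction $\Z \twoheadrightarrow \Z/2^k$.

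For this identification, I would model $E\Z/2^k = S^\infty$ with $\Z/2^k$ acting by scalar multiplication via the inclusion $\Z/2^k \hookrightarrow S^1$, so that $S(V_\rho) = (S^\infty \times S^1)/\Z/2^k$ with the diagonal action. The assignment $(x, s) \mapsto s^{2^k}$ is $\Z/2^k$-invariant (since $g^{2^k} = 1$), and descends to a map $S(V_\rho) \to S^1$ whose fibers are copies of $S^\infty$, hence contractible. Thus $S(V_\rho) \simeq S^1 \simeq B\Z$. To identify the map $S(V_\rho) \to B\Z/2^k$ on $\pi_1$, I would lift a generator of $\pi_1(B\Z/2^k)$ along the fibration by choosing a path $\gamma\colon [0,1] \to S^\infty$ from $x_0$ to $e^{2\pi i/2^k}\cdot x_0$ and forming the loop $t \mapsto [(\gamma(t), e^{2\pi i t/2^k})]$; a direct check shows this sends a generator of $\pi_1(S(V_\rho)) \cong \Z$ to a generator of $\Z/2^k$, so the map is indeed $B$ of reduction modulo $2^k$ after a suitable normalization. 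An alternative would be to compute $H^1(S(V_\rho);\Z) \cong \Z$ via Gysin (using that the Euler class $c_1(V_\rho) = c_{1,k}$ generates $H^2(B\Z/2^k;\Z)$) and then invoke Hurewicz, since $S(V_\rho)$ is aspherical.

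Given this identification, $S(V_\rho \otimes V_\rho)$ is the classifying space of the pullback of discrete groups
\[
P \;\coloneqq\; \Z \times_{\Z/2^k} \Z/2^{k+1} \;=\; \{(c, e) \in \Z \times \Z/2^{k+1} : c \equiv e \pmod{2^k}\},
\]
equipped with its two projections. A direct calculation shows that
\[
\phi\colon \Z \times \Z/2 \longrightarrow P, \qquad (c, d) \longmapsto (c,\; c + 2^k d \bmod 2^{k+1}),
\]
is a group isomorphism (injectivity and compatibility with the pullback conditions are immediate; surjectivity follows from the fact that for fixed $c\in\Z$, the two lifts $e\in\Z/2^{k+1}$ with $e\equiv c\pmod{2^k}$ are precisely $c\bmod 2^{k+1}$ and $c+2^k\bmod 2^{k+1}$). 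Under $\phi$, the projection to $\Z/2^{k+1}$ is $(c,d)\mapsto c+2^k d$ and the projection to $\Z$ is projection onto the first factor, which are exactly the maps asserted in items~(1) and~(2). Applying $B$ completes the argument.

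The main obstacle is the clean identification $S(V_\rho) \simeq B\Z$ \emph{together with} the correct description of the map to $B\Z/2^k$; the group-theoretic pullback computation is then a short direct verification.
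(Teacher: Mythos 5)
Your proposal is correct and follows essentially the same route as the paper: identify $S(V_\rho\otimes V_\rho)$ as the pullback of $B\Z/2^{k+1}\to B\Z/2^k\gets B\Z$, compute the fiber product of groups, and observe it is $\Z\times\Z/2$ with the stated projections. The only differences are that you supply a direct proof of the identification $S(V_\rho)\simeq B\Z$ over $B\Z/2^k$ (which the paper cites to the literature) and spell out the isomorphism $\Z\times\Z/2\cong\Z\times_{\Z/2^k}\Z/2^{k+1}$ that the paper asserts without detail.
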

\begin{proof}
Conveniently, $S(V_\rho\otimes V_\rho)$ is the pullback of the diagram $B\Z/2^{k+1}\to B\Z/2^k\gets B\Z$, which is the result of applying the classifying space functor to the following diagram of groups:
\begin{equation}\label{gp_pull}
\begin{tikzcd}
	& \Z \\
	{\Z/2^{k+1}} & {\Z/2^k}
	\arrow["{\bmod 2^k}", from=1-2, to=2-2]
	\arrow["{\bmod 2^k}"', from=2-1, to=2-2]
\end{tikzcd}
\end{equation}
The bar construction model for the classifying space functor preserves pullbacks, so $S(V_\rho)$ is homotopy equivalent to the classifying space of the group which is the pullback of~\eqref{gp_pull}. In the category of groups, there is an explicit formula for the pullback of the diagram $H\overset f\to G\overset g\gets K$~\cite[\href{https://stacks.math.columbia.edu/tag/0020}{Tag 0020}]{stacks-project}, namely
\begin{equation}
    H\times_G K\cong \set{(h, k)\in H\times K: f(h) = g(k)}.
\end{equation}
The maps to $H$ and $K$ are projection onto the first, resp., second factor.

Applying this to~\eqref{gp_pull}, we see that the pullback group is $\Z\times\Z/2$, with the map to $\Z/2^{k+1}$ sending $(c,d)\mapsto c + 2^k d$ and the map to $\Z$ sending $(c,d)\mapsto c$. Applying the classifying space functor, we have $S(V_\rho\otimes V_\rho)\cong B\Z\times B\Z/2$ as well as the maps to $S(V_\rho)$ and to $B\Z/2^{k+1}$.
\end{proof}

The Smith long exact sequence~\eqref{eq:LESsmith} is by construction natural in the data $X$, $V$, and $W$, so from~\eqref{eq:doublepullback} we obtain the following commutative diagram, whose rows are exact.

{\footnotesize \begin{equation*}\label{map_Smith_LES}
    \begin{tikzcd}[column sep=0.4cm]
      {(\tau_{\leq 2}\ko)_5(B\Z \times B\Z/2)} \arrow[r] \arrow[d]& {(\tau_{\leq 2} \ko)_5(B\Z/2^{k+1},0,0)} \arrow[r,"\mathrm{sm}_{V_\rho \otimes V_\rho}"] \arrow[d]& {(\tau_{\leq 2}  \ko)_3(B\Z/2^{k+1},0,0)} \arrow[r] \arrow[d]& {(\tau_{\leq 2}\ko)_4(B\Z \times B\Z/2)}\arrow[d] \\
      {(\tau_{\leq 2}\ko)_5(\mathbb{S} \vee 
       \Sigma \mathbb{S})} \arrow[r] &  {(\tau_{\leq 2} \ko)_5(B\Z/2^k,0,y)} \arrow[r,"\mathrm{sm}_{V_\rho}"] & {(\tau_{\leq 2}  \ko)_3(B\Z/2^{k},0,0)} \arrow[r] & {(\tau_{\leq 2}\ko)_4(\mathbb{S} \vee 
       \Sigma \mathbb{S})}\,.
    \end{tikzcd}
\end{equation*}}
Applying the $I_{\C^\times}$, we have the corresponding sequence in supercohomology, and cross-references indicate that we have already determined some of the entries in this diagram.
{\small \begin{equation}\label{map_Smith_LES_dual}
    \begin{tikzcd}[column sep=0.4cm]
      \underbracket[0.15ex]{\SH^4(B\Z \times B\Z/2)}_{\Z/8} \arrow[r] & \underbracket[0.15ex]{\SH^3(B\Z/2^{k+1},0,0)}_{\cong\Z/2^{k+2}\oplus\Z/2~\eqref{prop:Zknotwist3}} \arrow[r,"\mathrm{sm}_{V_\rho \otimes V_\rho}"] & \underbracket[0.15ex]{\SH^5(B\Z/2^{k+1},0,0)}_{\cong\Z/2^k~\eqref{prop:2knotwist}} \arrow[r] & {\SH^5(B\Z \times B\Z/2)} \\
      {\SH^4(\mathbb{S} \vee 
       \Sigma \mathbb{S})} \arrow[u] \arrow[r] &  \underbracket[0.15ex]{\SH^3(B\Z/2^k,0,0)}_{\cong\Z/2^{k+1}\oplus\Z/2~\eqref{prop:Zknotwist3}} \arrow[u, "p^*"] \arrow[r,"\mathrm{sm}_{V_\rho}"] & \underbracket[0.15ex]{\SH^5(B\Z/2^{k},0,y)}_{\cong\Z/2^{k+1}\oplus\Z/2~\eqref{prop:2kytwist}} \arrow[u, "p^*"] \arrow[r] & {\SH^5(\mathbb{S} \vee 
       \Sigma \mathbb{S})}\arrow[u] \,.
    \end{tikzcd}
\end{equation}}

Thus we would like to find $\SH^\ell(X)$ for $\ell = 4,5$ and $X = \Sph\vee\Sigma\Sph$ and $B\Z\times B\Z/2$. A straightforward Atiyah--Hirzebruch spectral sequence calculation shows $\SH^\ell(\mathbb{S} \vee \Sigma \mathbb{S})\cong 0$ whenever $\ell\ge 4$. From the standard Adams spectral sequence, we see that $\SH^4(B\Z\times B\Z/2) = \Z/8$ and $\SH^5(B\Z\times B\Z/2)=0$. Therefore, from the exactness of both rows, $\mathrm{sm}_{V_\rho\otimes V_\rho}$ is surjective and $\mathrm{sm}_{V_\rho}$ is an isomorphism. Moreover, choosing a fixed isomorphism $\SH^3(B\Z/2^k,0,0)\cong\Z/2^{k+1}\oplus \Z/2$ for all $k$, we can choose the isomorphism $\SH^5(B\Z/2^k,0,y)\cong \Z/2^{k+1}\oplus \Z/2$ such that $\mathrm{sm}_{V_\rho}$ sends 
\begin{equation}
(1,0)\mapsto (1,0)\,,\quad (0,1)\mapsto(0,1)\,.
\end{equation}
Similarly, we can choose the isomorphism $\SH^5(B\Z/2^{k+1},0,0)\cong \Z/2^k$ such that $\mathrm{sm}_{V_\rho\otimes V_\rho}$ sends
\begin{equation}
(1,0)\mapsto 1\,,\quad (0,1)\mapsto 2^{k-1}\,.
\end{equation}

Now consider the left vertical $p^*$.
\begin{lem}
\label{h_in_C11}
There are choices of the isomorphisms $\SH^3(B\Z/2^k)\cong \Z/2^{k+1}\oplus\Z/2$ from \Cref{prop:Zknotwist3} such that, with respect to those isomorphisms, the map $p^*\colon\SH^3(B\Z/2^k)\to \SH^3(B\Z/2^{k+1})$ in~\eqref{map_Smith_LES_dual} is given by the matrix $\begin{bsmallmatrix}4&0\\0&1\end{bsmallmatrix}$.
\end{lem}

\begin{proof}[Proof of \cref{h_in_C11}]
We begin by calculating the effect of $p^*$ on the $E_\infty$-page of the supercohomology AHSSes. We computed these $E_\infty$-pages in~\eqref{eq:E3SH_Z4}. 
Let ${}^{k}\! E_\infty^{p,q}$ denote the $E_\infty^{p,q}$ entry of the AHSS computing $\SH^*(B\Z/2^k)$, $^k\! E_\infty^{\bullet, 3-\bullet}$ consists of the following three summands:
\begin{itemize}
    \item $^k\! E_\infty^{3,0}\cong H^3(B\Z/2^k;\C^\times)\cong\Z/2^k$,
    \item $^k\! E_\infty^{2,1}\cong H^2(B\Z/2^k;\Z/2)\cong\/\Z/2$ generated by $y$, and
    \item $^k\! E_\infty^{1,2}\cong H^1(B\Z/2^k;\Z/2)\cong\/Z/2$ generated by $x$.
\end{itemize}
It is straightforward to check that in mod $2$ cohomology, $p$ pulls back $y\mapsto 0$ and $x\mapsto x$. Thus $p^*\colon ^k\!E_\infty^{3-j,j}\to ^{k+1}\!E_\infty^{3-j,j}$ is zero for $j=1$ and an isomorphism for $j = 2$. On $E_\infty^{3,0}$, \cref{p_on_coh}, part~\eqref{Z_pullback} and the isomorphism $H^3(B\Z/2^k;\C^\times)\cong H^4(B\Z/2^k;\Z)$ shows that $p^*$ gives $1\mapsto 4$.

To finish, we need to lift from the $E_\infty$-page to the actual supercohomology groups. Before this, we have an extension problem to resolve for $\SH^3(B\Z/2^k)$, where $k\ge 2$: $\Z/2^k$ in the Dijkgraaf--Witten layer, $\Z/2$ in the Gu--Wen layer, and $\Z/2$ in the Majorana layer combine to $\Z/2^{k+1}\oplus\Z/2$ (\cref{prop:Zknotwist3}). Thus, we have a nonsplit extension of either the Gu--Wen layer or the Majorana layer by the Dijkgraaf--Witten layer. In fact, the extension is between the DW and GW layers; to see this, first note that this is equivalent to the corresponding extension in \emph{restricted} supercohomology $\mathit{rSH}^3(B\Z/2^k)$ being nonsplit, just as in the proof of \cref{bonus_AHSS_lemma}. Gu--Wen~\cite[(F12)]{PhysRevB.90.115141} computed $\mathit{rSH}^3(B\Z/2^k)\cong\Z/2^{k+1}$ for $k\ge 2$, implying a nonsplit extension in restricted supercohomology, and therefore an extension between the GW and DW layers in supercohomology.

Since $p^*$ is an isomorphism on the Majorana layer, and the Majorana layer splits off for all $k\ge 2$, choose any splitting of the Majorana layer off of the GW and DW layers for $k = 2$; for $k>2$, inductively choose the splitting that makes the pullback map $p^*$ diagonal. Thus we have chosen isomorphisms $\SH^3(B\Z/2^k)\cong \mathit{rSH}^3(B\Z/2^k)\oplus\Z/2$ such that $p^*$ is a diagonal matrix whose $(2, 2)$ entry is $1$ and whose $(1, 1)$ entry is to be determined.

We have a commutative diagram of short exact sequences

\begin{equation}\label{one_more_xtn}
\begin{tikzcd}
	0 & {^kE_\infty^{3,0} \cong \Z/2^k} & {\mathit{rSH}^3(B\Z/2^k)\cong\Z/2^{k+1}} & {^kE_\infty^{2,1} = \Z/2} & 0 \\
	0 & {{}^{k+1}E_\infty^{3,0}\cong\Z/2^{k+1}} & {\mathit{rSH}^3(B\Z/2^{k+1})\cong\Z/2^{k+2}} & {^{k+1}E_\infty^{2,1} \cong \Z/2} & 0
	\arrow[from=1-1, to=1-2]
	\arrow["{1\mapsto 2}", from=1-2, to=1-3]
	\arrow["{1\mapsto 4}"', from=1-2, to=2-2]
	\arrow["{\bmod 2}", from=1-3, to=1-4]
	\arrow["{p^*}", from=1-3, to=2-3]
	\arrow[from=1-4, to=1-5]
	\arrow["{1\mapsto 0}", from=1-4, to=2-4]
	\arrow[from=2-1, to=2-2]
	\arrow["{1\mapsto 2}", from=2-2, to=2-3]
	\arrow["{\bmod 2}", from=2-3, to=2-4]
	\arrow[from=2-4, to=2-5]
\end{tikzcd}\end{equation}
which is the map induced by $p^*$ between the extensions of the Dijkgraaf--Witten and Gu--Wen layers. The leftmost and rightmost vertical arrows follow from our calculation of $p^*$ applied to $E_\infty^{3,0}$ and $E_\infty^{2,1}$; commutativity then forces the middle vertical arrow to send $1\mapsto 4$.\footnote{Here we have been cavalier about the choice of isomorphism $\mathit{rSH}^3(B\Z/2^k)\cong\Z/2^{k+1}$, but this is easily fixed: choose any such isomorphism for $k = 2$, then inductively define it for larger $k$ so that the middle vertical arrow in~\eqref{one_more_xtn} sends $1\mapsto 4$.}
\end{proof}

Now we can return to the commutative diagram \eqref{map_Smith_LES_dual}. The commutativity suggests that for the right vertical $p^*$, after choosing the generators of the corresponding groups in a suitable way, we must have 
\begin{equation}\label{eq:map_pre}
(1, 0)\mapsto 4,\quad (0, 1)\mapsto 2^{k - 1}.
\end{equation}
Therefore, we again see that extending $G=\Z/2^k$ by $\Z/2$ is not enough to trivialize $\SH^5(B\Z/2^k,0,y)$. 

Now consider an even larger extension
\begin{equation}
1\longrightarrow \Z/2^m \longrightarrow \Z/2^{k+m}\xrightarrow{p} \Z/2^k\rightarrow 1\,,
\end{equation}
with $m\geq 1$. The induced pullback $p^* \colon\SH^5(B\Z/2^k,0,y)\rightarrow \SH^5(B\Z/2^{k+m})$ is given by the composition
\begin{equation}
    \SH^5(B\Z/2^k,0,y)\longrightarrow
    \SH^5(B\Z/2^{k+1}, 0, 0) \longrightarrow
    \SH^5(B\Z/2^{k+m}, 0, 0)\,.
\end{equation}
Therefore, composing the result of \eqref{eq:map_pre} and part~\ref{Z_pullback} of \cref{p_on_coh}, it sends the generator $(1,0)$, which we denote as $\alpha_{\mathrm{GW}}$, to $4\cdot 8^{m-1}$, and the generator $(0,1)$, which we denote as $\alpha_{\mathrm{Maj}}$, to $2^{k-1}\cdot 8^{m-1}$. In order for them to trivialize in $\SH^5(B\Z/2^{k+m})$. They must be greater than or equal to $2^{k+m-1}$. Straightforward algebra shows that for $\alpha_{\mathrm{GW}}$, the corresponding $m$ satisfies $m\geq \frac{k}{2}$, while for $\alpha_{\mathrm{GW}}$, the corresponding $m$ satisfies $m\geq 2$. Therefore, we prove that for anomalies generated by $\alpha_{\mathrm{GW}}$, we can use a $\Z/2^m$ gauge theory to realize the given anomalies with $m\geq \frac{k}{2}$, and for anomalies generated by $\alpha_{\mathrm{Maj}}$, we can use a $\Z/4$ gauge theory to realize the given anomalies. 

\end{proof}

\begin{rem}\label{rem:refJuven}
In \cite{Cheng:2024awi}, Cheng--Wang--Yang explicitly construct the TQFT state which realizes the anomaly corresponding to $1\in \Z/8$ for $k=1$ or $(1,2^{k-2}) \in \Z/2 \oplus \Z/(2^{k+1})$ for $k\geq 2$, with the help of the crystalline equivalence principle. Our results match their results. In particular, we also confirm that $\Z/2$ gauge theory is not enough and the minimal gauge group $K$ has to be $\Z/4$. 
\end{rem}

\begin{rem}\label{rem:refJuven2}

Following the submission of our work to arXiv, we became aware of the independent studies by \cite{WW25} and \cite{ZJSNew}, which demonstrate significant overlap with the calculations presented in this section. We note that our findings are in excellent agreement with their reported results using slightly different methods; this independent convergence reinforces the robustness of the calculations and conclusions drawn herein.

\end{rem}

\section{Conclusion and discussion}\label{section:conclusion}

Our goal in this project is to construct (3+1)d fermionic TQFTs that realize prescribed anomalies. This is motivated by the challenge of understanding the IR phases of strongly coupled UV gauge theories (Question~\ref{question:main}), and we focused on symmetries and anomalies in such theories. We expect that the framework developed here will also shed light on the IR behavior of fermionic lattice systems and provide new insights into beyond-Standard-Model theories from the perspective of symmetries and anomalies. Our construction is based on the framework of fusion 2-categories that classifies $(3+1)$d symmetry-enriched topological orders. Using this framework, we extended the symmetry extension procedure to unitary symmetries in the fermionic setting. 

Our construction crucially relies on twisted supercohomology $\SH^5(BG, s, \omega)$. This choice is motivated by the fact that supercohomology is directly related to the data of fusion 2-categories and the structure theorems presented in \S\ref{section:always_works}. We prove that, in (3+1)d fermionic systems, any supercohomology anomaly can be saturated by an appropriate fermionic topological order, whereas any beyond-supercohomology anomaly cannot be saturated by any fermionic topological order. This establishes a clear dichotomy of (3+1)d anomalies for finite symmetries and answers Córdova--Ohmori's question~\cite{CO2} in this setting. We discuss some physical implications of our results in the companion paper \cite{DYY3}.

We then go to specific examples corresponding to cyclic group symmetries, which may be trivially or nontrivially extended by fermion parity (\cref{ex:cyclic,ex:cyclicfermion}).  For each of these examples, 
we explicitly computed the corresponding supercohomology groups $\SH^5(BG, s, \omega)$ that classify anomalies. Then we found group extensions $H \twoheadrightarrow G$ that trivialize the anomaly corresponding to the generators of $\SH^5(BG, s, \omega)$, as detailed in \cref{thm:Znsymmetry,thm:trivializedGen}. As explained in \S\ref{subsection:TO}, this construction demonstrates a concrete path to realizing these anomalous gapped phases.

A significant technical contribution announced in this paper is the development of a \textit{hastened Adams spectral sequence} for computing supercohomology groups. This tool made the computations in \cref{ex:cyclic,ex:cyclicfermion} tractable. Further development and refinement of these spectral sequence techniques will be essential for classifying anomalies of more complex symmetry groups, such as non-abelian groups or symmetries with non-trivial $s$ and $\omega$ twists, thus expanding the reach of the fermionic symmetry extension procedure \cite{DYY2}. We also make heavy use of the recently developed \textit{Smith long exact sequences}, making these calculations that have not been done in the literature tractable.

In summary, our results provide a systematic, mathematically rigorous path for constructing candidate IR topological orders that can saturate a given UV anomaly, offering a powerful tool for studying strongly-coupled fermionic systems.

Several natural future directions are in order. 
\begin{enumerate}
\item A natural extension of our construction is to systems with Lie group symmetries. In particular, it would be interesting to determine (i) how to construct anomalous TQFTs that realize a prescribed anomaly associated with a Lie group symmetry, possibly along the lines of a generalized symmetry-extension framework, and (ii) under what conditions a given Lie group anomaly necessarily implies symmetry-enforced gaplessness. It is already known that if the anomaly IFT $\alpha$ of a (3+1)d theory is nontopological (so that its anomaly polynomial is nonzero, a fact which can be seen perturbatively), no topological order can have anomaly $\alpha$, so it would be especially interesting to attack the remaining topological IFTs.
\item It is also of interest to investigate whether our construction can be generalized to incorporate time-reversal symmetries, potentially relying on a more refined categorical understanding of anti-unitary symmetries.
\item In another direction, we also want to understand whether we can display the symmetry actions on the point-like and string-like excitations in a (3+1)d fermionic topological order in an explicit manner. This may involve disentangling the underlying fusion 2-category data into the explicit numeric data of pentagonator, braiding, etc., like those displayed in \cite{Huang:2024ivi,Huang:2025hpk}.
\item Finally, it would be worthwhile to explore applications of our results to concrete physical systems, including strongly-coupled quantum chromodynamics in four dimensions, Weyl semimetals, and scenarios relevant to physics beyond the Standard Model. We will reserve a more extensive discussion of these matters from a physical point of view in a companion paper \cite{DYY3}.
\end{enumerate}

\subsection*{Acknoweledgements}
It is a pleasure to thank
Clay Córdova,
Thibault Décoppet,
Jaume Gomis,
Weizhen Jia,
Theo Johnson-Freyd,
Ryohei Kobayashi,
Cameron Krulewski,
Tian Lan,
Miguel Montero,
Lukas Müller,
Kantaro Ohmori,
Luuk Stehouwer,
Chong Wang,
Juven Wang,
and Rui Wen
for helpful conversations. We especially thank Theo Johnson-Freyd for sharing with us his insights into the anomalies of topological orders, which helped shape Appendix \ref{section:anomaliesObstructions}. We would also like to thank Zheyan Wan and Juven Wang for sharing their recent work on very similar topics \cite{WW25,ZJSNew}, which overlaps with some of our results in \Cref{tab:results} and \cref{mainthmB}.

WY was supported by the Natural Sciences and Engineering Research Council of Canada (NSERC) and the European Commission under the Grant Foundations of
Quantum Computational Advantage. MY is supported by the EPSRC Open Fellowship EP/X01276X/1.

\appendix

\section{Review on generalized cohomology theory}\label{subsec:generalized_coho}

Classical cohomology theories like singular cohomology, de Rham cohomology, and sheaf cohomology share common axiomatic properties but capture different topological and geometric information. Generalized cohomology theory provides a unifying framework that encompasses these classical theories while allowing for new, exotic cohomology theories with applications throughout mathematics and physics. In this appendix, we review the necessary aspects of generalized cohomology theories that we use in this paper. A standard textbook for generalized cohomology theory is \cite{adams1974stable}.

We start with the definition of a spectrum, which is a homotopical object representing a generalized homology or cohomology theory. An \textit{$\Omega$-spectrum}\footnote{There are many different yet equivalent ways to define spectra; see for example~\cite{MMSS01}. We use $\Omega$-spectra because they tend to appear in physics applications: see, for example, \cite{Kit13, Freed:2014iua, Kit15, GJF19, CGT25, 2025arXiv250721209T}.}
$E$ is a sequence of pointed topological spaces $\{E_n\}_{n \in \mathbb{Z}}$ together with structure maps, which are homotopy equivalences:
\begin{equation}
\sigma_n : E_n \overset\simeq\to \Omega E_{n+1}
\end{equation}
where $\Omega$ denotes the based loop space functor. These structure maps encode the fundamental relationships between different degrees of the cohomology theory. Given an  $\Omega$-spectrum $E = \{E_n, \sigma_n\}$, we can define a \textit{generalized cohomology theory} by setting: 
\begin{equation}
E^n(X) := [X, E_n]
\end{equation}
where $[X, E_n]$ denotes the set of homotopy classes of pointed maps from $X$ to $E_n$.

One can think of a spectrum as encoding ``stable'' homotopy-theoretic information. While individual spaces $E_n$ may have complicated unstable behavior, the spectrum captures what remains after we  have ``stabilized'' by taking suspensions. The structure maps $\sigma_n$ induce \textit{suspension isomorphisms}
\begin{equation}
E^n(X) \overset{\cong}{\longrightarrow} E^{n+1}(\Sigma X)\,.
\end{equation}
This is the key property that makes the theory ``stable'' and gives it the structure of a cohomology theory. We will also use $\Sigma^k E$ to denote the suspension of the theorem, and the corresponding generalized cohomology theory is simply the original cohomology shifted by degree $k$.

The generalized cohomology theory $E^*$ satisfies the Eilenberg-Steenrod axioms~\cite{ES45} except the dimension axiom. The \textbf{coefficient groups} $E^n(\mathrm{pt})$ are the cohomology groups of a point, and these can be computed as:
\begin{equation}
E^n(\mathrm{pt}) = \pi_{-n}(E) := \operatorname{colim}_k \pi_{k-n}(E_k)\,,
\end{equation}
where $\pi_{-n}(E)$ denotes the $n$-th stable homotopy group of the spectrum $E$. From these coefficient groups, we have the Atiyah--Hirzebruch spectral sequence (AHSS) for a generalized cohomology theory $E^*$ 
\begin{equation}
E_2^{p,q}\cong H^p\bigl(X;E^q(\mathrm{pt})\bigr)\quad\Longrightarrow\quad E^{p+q}(X),
\end{equation}
with differentials $d_r:E_r^{p,q}\to E_r^{p+r,q-r+1}$. This gives us a practical tool for calculation and also has nice physical interpretations \cite{Wang:2017moj,PhysRevX.10.031055}. 

Some standard examples of spectra and their related generalized cohomology theories are as follows:

\begin{enumerate}
\item \textbf{Eilenberg--Mac Lane spectrum and ordinary cohomology}: The Eilenberg--Mac Lane spectrum is built from $E_n = K(A, n)$, the Eilenberg--Mac Lane spaces, where $A$ can be any Abelian group. This spectrum is denoted as $HA$ in the literature, and gives us the usual singular cohomology $H^*(X; A)$ with coefficient $A$.

\item \textbf{$\KO$-spectrum and the connective cover $\ko$}: The spectrum $\KO$ is defined by the sequence of spaces
\begin{equation*}
\KO_0 = \mathrm{BO},\quad 
\KO_1 = \mathrm{O},\quad
\KO_2 = \mathrm{O}/\mathrm{U},\quad
\KO_3 = \mathrm{U}/\mathrm{Sp},\quad
\KO_{n+8} \simeq \KO_n,
\end{equation*}
together with structure maps inducing the loop-space identifications
$\Omega \KO_{n+1} \simeq \KO_n$.  This spectrum represents real $K$-theory:
for any space $X$, the generalized cohomology groups $\KO^*(X) = [X, \KO_*]$ classify real vector bundles over $X$ and their formal differences up to
stable isomorphism.  The 8-fold Bott periodicity of the $\KO$ spectrum
implies that the coefficient groups $\KO^n(\mathrm{pt})$ repeat with
period~8, i.e., $\Z$, $\Z/2$, $\Z/2$, 0, $\Z$, 0, 0, 0 as $n$ increases from 0. This generalized cohomology theory plays an important role in the classification of free-fermion SPTs \cite{FH,2025arXiv250708694S}.

The connective cover $\ko$ is obtained by truncating $\KO$ below degree 0, and yields the connective real $K$-theory $\ko^*(X)$. The geometric meaning of $\ko$-theory is less obvious than that of $\KO$, but it provides a computationally convenient approximation to the spectrum of (interacting) fermionic SPTs \cite{DDHM24}.

\item \textbf{Thom spectrum, Madsen-Tillmann spectrum and cobordism theories}: The Thom construction maps a vector bundle $E \to X$ to a pointed space $\mathrm{Th}(E)$ via the one-point compactification of the total space. For a stable group $G$, e.g. $\SO$, $\Spin$, etc., the Thom spectrum $MG$ is constituted by the spaces $MG_n := \mathrm{Th}(\gamma_n)$, where $\gamma_n$ denotes the universal vector bundle over the classifying space $BG(n)$. These individual spaces assemble to form spectra representing various cobordism theories. Notably, the spectrum $MSpin$, which classifies spin cobordism, is the sequence of Thom spaces formed by the universal bundles over the spaces $BSpin(n)$. Relatedly, we also have the Madsen-Tillmann spectrum, which is constituted by the spaces $MTG_n := \mathrm{Th}(-\gamma_n)$. These theories are of particular importance in classifying fermionic SPTs.
\end{enumerate}

In an $n$-dimensional quantum system with global symmetry $G$, the anomalies we consider are classified by $E^*(BG)$, where the relevant degree depends on the spacetime dimension, and the generalized cohomology theory $E$ is twisted by the data $(s,\omega)$. In \S\ref{section:anomaliesObstructions}, we will examine three different generalized cohomology theories, corresponding to three different spectra. 
\begin{itemize}
\item The first generalized cohomology theory is associated to the spectrum $I_{\Z}\MSpin$, the Anderson dual of the Madsen-Tillmann spectrum $\MSpin$, and classifies reflection-positive invertible field theories (IFTs)  \cite{FH, Gra23} or fermionic SPTs in the language of condesned matter. By anomaly inflow, such IFTs can be used to cancel the anomalies for fermionic QFTs with $G$-symmetry in one dimension lower. 

\item The second generalized cohomology theory is related to what we call a \emph{categorical obstruction}, which represents the obstruction for the $G$-crossed extension of the underlying category. This gives a mathematically rigorous definition of anomalies of topological orders, based on their categorical description using higher fusion categories in this work. The associated spectrum is closely related to the super-Witt group $s\mathcal{W}$ \cite{DNO}, and hence will be denoted by $\SW$. 

\item Supercohomology $\SH$.
Supercohomology is first proposed in \cite{PhysRevB.90.115141} for classifying fermionic SPTs. For our purpose, supercohomology theory was defined in two equivalent ways in Appendix \ref{subsection:twistedSH}, and the corresponding spectrum can be thought of as the spectrum of $I_{\Z}\MSpin$ truncated to only degrees 0, 1, 2. 
Similar truncations also appear in e.g.\ classifying mixed-state SPTs \cite{2025PhRvX..15b1062M}.

\end{itemize}

\section{Twisted supercohomology in two ways}\label{subsection:twistedSH}
There are two ways of realizing supercohomology that will be important for this work. The first is as the Pontryagin dual of the spectrum $\tau_{\leq 2}\ko$, which fits into the following fiber sequence:
\begin{equation}
    \tau_{\geq 4} \ko \longrightarrow \ko \longrightarrow \tau_{\leq 2}\ko \,.
\end{equation}
The homotopy groups and $k$-invariants of $\SH$ can thus be read off of those of $\ko$: see~\cite[Proof of Lemma 5.6]{ABP67} for the latter. In particular, we see that the homotopy groups are given by 
\begin{equation}\label{eq:SHgroups}
    \pi_{-2}(\SH) =\Z/2,
    \quad \pi_{-1}(\SH) =\Z/2,
    \quad \pi_0 (\SH) = \mathbb{C}^\times\,.
\end{equation}
The $k$-invariant connecting the two copies of $\Z/2$ is
\begin{equation}
    \Sq^2 \colon H^*(-;\Z/2) \to H^{*+2}(-;\Z/2)
\end{equation}
and the $k$-invariant connecting $\Z/2$ with $\mathbb{C}^\times$ is
\begin{equation}
    (-1)^{\Sq^2} \colon H^*(-;\Z/2) \to H^{*+2}(-;\mathbb{C}^\times)\,.
\end{equation}

It is straightforward to introduce $s$ and $\omega$ twists from the homotopical point of view. In particular, the map $\ko\to\tau_{\le 2}\ko$ induces a map of twisting data, so we can use twists of $\ko$-theory to twist $\SH$. Given a space $X$, choose $s\in H^1(X;\Z/2)$ and $\omega\in H^2(X;\Z/2)$. The data $(s, \omega)$ defines a twist of $\ko$-theory over $X$~\cite{ABG10}, hence also define a twist of $\SH$ over $X$. We denote by $\SH^n(X,s,\omega)$ the corresponding degree-$n$ twisted supercohomology group. When $s = 0$, this is sometimes written as $\SH^{n+\omega}(X)$, e.g.\ in~\cite{D11,Decoppet:2024htz,Teixeira:2025qsg}.

The second realization of supercohomology is in terms of the Picard $2$-groupoid $\tsVect^\times$. This perspective on twisted supercohomology makes natural contact with applications in the fusion $2$-categories literature~\cite{Johnson-Freyd:2020twl,JFY2,DY23a,D11,Decoppet:2024htz,DY2025, Teixeira:2025qsg}. Specifically, the homotopy groups of this Picard $2$-groupoid are
\begin{align}
  \pi_0  \, \tsVect^\times = \Z/2, \quad \pi_1  \, \tsVect^\times = \Z/2, \quad  \pi_2  \, \tsVect^\times = \mathbb{C}^\times,
\end{align}
with the unique nontrivial Postnikov invariants connecting the groups~\cite{Fre12,DG18}. Therefore the spectrum corresponding to $\tsVect^\times$ under the stable homotopy hypothesis~\cite{GJO19,MOPSV22} is $I_{\C^\times}(\tau_{\le 2}\ko) = SH$, as it has isomorphic homotopy groups and Postnikov invariants. 
Thus, the abelian group of homotopy classes of maps 
\begin{equation}
    X \longrightarrow B^{n-2} \tsVect^\times 
\end{equation}
is naturally isomorphic to $\SH^{n}(X)$.

Like for twisted ordinary cohomology, we will use automorphisms of $\tsVect^\times$ to twist supercohomology. The automorphisms of interest to us are:
\begin{description}
    \item[Fermion parity] tensor a $1$-morphism with the odd line. This defines a $B\Z/2$-action.
    \item[Duality] send objects, $1$-morphisms, and $2$-morphisms to their duals. This is reminiscent of the time-reversal action of $\sVect$ and almost defines a $\Z/2$-action.
\end{description}
The Koszul sign rule means that duality does not square to the identity, but rather participates in an abelian $2$-group extension with fermion parity:
\begin{equation}
\label{2gpext}
    0 \to B\Z/2 \to \mathbb A \to \Z/2 \to 0.
\end{equation}
$2$-group extensions of the form~\eqref{2gpext} are classified by $H^3(B\Z/2;\Z/2)\cong\Z/2$~\cite[Theorem 1]{SP11}, so the extension $\mathbb A$ of duality by fermion parity is uniquely specified up to isomorphism by the fact that it is non-split.

Thus, given a space $X$ and a map $f\colon X\to B\mathbb A$, we can form the associated bundle
\begin{equation}
\label{supercoh_assoc}
\begin{tikzcd}
        {(B^{n-2}\tsVect^\times) \times_{\mathbb A} f^*(E\mathbb A)} 
        \ar[d]\\
        X.
    \end{tikzcd}
\end{equation}
Then $\SH^{n+f}(X)$ is the abelian group of homotopy classes of sections of~\eqref{supercoh_assoc}.

Though $\mathbb A$ is not split, there is a homotopy equivalence of spaces $B\mathbb A\simeq B\Z/2\times B^2\Z/2$, so we will identify a twist of supercohomology by a triple $(X, s, \omega)$, where $s\in H^1(X;\Z/2)$ and $\omega\in H^2(X;\Z/2)$, matching the homotopical definition of twisted supercohomology.

Hence if $X$ is a space equipped with a map $\omega \colon X\to  B^2\Z/2$, the $\omega$-twisted $n$-th supercohomology of $X$ is the group of homotopy classes of
    $B\Z/2$-equivariant maps from $X$ to $B^{n-2}\tsVect^\times$.
    In the companion paper~\cite{DYY2}, we show that the two notions of $(X, s, \omega)$-twisted supercohomology that we have introduced are naturally isomorphic.

    \begin{rem}
    \label{s_nonzero}
        In the context of fusion 2-categories the $s$-twist in the first definition of twisted supercohomology has
		not previously appeared in the literature. One reason for this is because the TQFTs that fusion 2-categories construct are oriented \cite{douglas2018fusion}. It would be interesting to have a definition of fusion 2-categories with a unitary structure that parallels what exists for fusion 1-categories; symmetries of unitary fusion $2$-categories could potentially correspond to twists with $s\ne 0$.\footnote{See \cite{ferrer2024dagger,chen2024manifestly, SS24, Ste24, Bar25, MS23, Muller:2025ext} for recent progress towards unitary higher categories.}
    \end{rem}

\begin{rem}
The space of homotopy equivalences $\phi\colon B\mathbb A\overset\simeq\to B\Z/2\times B^2\Z/2$ is not connected, implying there is an ambiguity in how we identified the data $(s,\omega)$ with a twist of supercohomology. There are a few ways to address this, which we will discuss in more detail in~\cite{DYY2}. We choose the (standard) convention that, if $a\in H^1(B\Z/2;\Z/2)$ denotes the unique nonzero class, $(B\Z/2, a, 0)$-twisted supercohomology maps to the twist of spin cobordism that is isomorphic to \pinm cobordism, rather than \pinp cobordism.

This ambiguity does not affect twists $(X,s,\omega)$ for which $s = 0$, so it will not play a major role in this paper.

\end{rem}

For computations of (twisted) supercohomology groups, it is sometimes helpful to use an explicit cochain description of twisted supercohomology.
We now present the following conditions that the cochains of \S\ref{subsection:fermionicsupercoh}
must satisfy in the twisted setting:
  \begin{itemize}
      \item the cochain $a  \in C^{n-2}(BG;\Z/2)$ solves $da =0$,
       \item the cochain $b \in C^{n-1}(BG;\Z/2)$ solves $db  = (\Sq^2+\omega) a$, and
       \item the cochain $c \in C^n(BG;\mathbb{C}^\times)$ solves $dc = (-1)^{(\Sq^2+\omega)b} \cdot f_\omega(a)$.
  \end{itemize}
Here, the cochain $f_\omega(a)$ represents the failure of $(\Sq^2+\omega)b$ to be closed, and represents the \textit{secondary cohomology operation} \cite{mosher2008cohomology} corresponding to the relation $\Sq^1 \Sq^2\Sq^2 = 0$. The formula of $f_\omega(a)$ up to $n=4$ is detailed in \cite{PhysRevX.10.031055,2025arXiv251225069N}. Based on the homotopical definition of supercohomology, this is equivalent to the information of the Atiyah--Hirzebruch spectral sequence (AHSS).\footnote{To obtain the full group structure from the AHSS, we also need to solve the extension problem or obtain the \emph{stacking rules} of the AHSS. Stacking rules of supercohomology up to $n=3$ written in terms of the explicit cochain descriptions are detailed in \cite{2024PhRvB.110w5117R}.} 
In~\cite{DYY2}, we also develop a complementary tool, the \term{hastened Adams spectral sequence} (HASS), that helps resolve many extensions in the AHSS. Importantly, for almost all the examples in \Cref{tab:results} we will need to use the hastened Adams spectral sequence to compute the value of the degree 5 supercohomology. 

\section{The $p+ip$ layer in full generality}
\label{s:p+ip}
In \cref{notpip}, we gave a simple definition of the $p+ip$ layer of a 5d $(BG, s, \omega)$-twisted spin reflection positive IFT $\alpha$ in terms of the Atiyah--Hirzebruch spectral sequence. To use this spectral sequence, the spectrum $I_\Z\MTSpin(BG, s, \omega)$ must split as $\MTSpin$ smash some other spectrum $X$. It suffices to assume $(s,\omega) = (w_1(V), w_2(V))$ for a vector bundle $V\to BG$, so that $X = (BG)^{V - \mathrm{rank}(V)}$, and for these choices of $(G, s, \omega)$, \cref{notpip} is a valid definiton of the $p+ip$ layer.
However, there are finite groups $G$ and data $(s,\omega)$ for which no such vector bundle $V$ exists, by work of Gunarwardena--Kahn--Thomas~\cite[\S 2]{GKT89}. In these cases, we have to give a more complicated definition of the $p+ip$ layer, and the goal of this appendix is to do so. As a bonus, we will be able to provide alternate definitions of the layers of supercohomology (Majorana, Gu--Wen, and Dijkgraaf--Witten).

To be clear, we are not really doing anything new: these layers are the pieces of the associated graded of the Postnikov filtration on $I_\Z\MTSpin$. \Cref{notpip} implicitly takes this view, as the AHSS is exactly the spectral sequence induced by this filtration~\cite{Mau63}. Our more general definition below simply applies the Postnikov filtration to an $\MTSpin$-module Thom spectrum, in the language of Ando--Blumberg--Gepner--Hopkins--Rezk~\cite{ABGHR14a, ABGHR14b}, then uses a Thom isomorphism theorem for ordinary cohomology. This is a common theme when working with non-vector-bundle twists: though the homotopical prerequisites are higher, essentially the same theorems are true for these more general twists, and for broadly similar reasons.
\begin{lem}[{Lurie~\cite[Proposition 7.1.1.13]{HA}}]
Let $R$ be a connective, $E_\infty$-ring spectrum and $M$ be an $R$-module spectrum. Then the Postnikov $t$-structure on the $\infty$-category of spectra lifts across the forgetful functor $\cat{Mod}_R\to\cat{Sp}$. Thus, the Postnikov truncation maps $\tau_{\le n}\colon M\to \tau_{\le n}M$ and $\tau_{\ge n}\colon \tau_{\ge n}M\to M$ canonically acquire the structure of $R$-module maps.
\end{lem}
We will use $R = \MTSpin$.
Thus we have a map of $\MTSpin$-module spectra
\begin{equation}
\label{did_map}
    di+d\colon \Sigma^4 H\Z \simeq \tau_{\le 4}\tau_{\ge 4}\MTSpin \overset{\tau_{\ge 4}}{\longrightarrow} \tau_{\le 4}\MTSpin.
\end{equation}
Given $G, s, \omega$ as above, let $\mathit{MT\xi}(G, s, \omega)$ denote the Madsen--Tillmann spectrum for $(G, s, \omega)$-twisted spin structures, so that $\mho_\Spin^k(BG, s, \omega)$ is by definition $\pi_0(\Sigma^k I_{\C^\times}\mathit{MT\xi}(G, s, \omega))$.

Hebestreit--Joachim~\cite[Corollary 3.3.8]{HJ20}\footnote{Hebestreit--Joachim state their result in terms of bordism groups, but their proof goes through at the level of spectra: see~\cite[Remark 1.28]{DY23}.} identified $\mathit{MT\xi}(G, s, \omega)$ with the $\MTSpin$-module Thom spectrum associated to the map
\begin{equation}
\label{MSpin_Thom}
    BG \overset{(s,\omega)}{\longrightarrow}
        K(\Z/2, 1)\times K(\Z/2, 2) \overset{\phi}{\underset{\simeq}{\longrightarrow}} B\O/B\Spin \overset{T}{\longrightarrow} B\GL_1(\MTSpin)
\end{equation}
in the sense of Ando--Blumberg--Gepner--Hopkins--Rezk~\cite{ABGHR14a, ABGHR14b}; the equivalence $\phi$ is proven in~\cite[Proposition 1.37]{DY23},\footnote{See also Beardsley--Luecke--Morava~\cite[Propositions 4.1 and 5.19]{BLM23} and Carmeli--Luecke~\cite[Theorem C]{CL24} for splitting results for spaces closely related to $B\O/B\Spin$.} and the map $T$ is constructed by May--Quinn-Ray~\cite[Lemma IV.2.6]{MQRT77}. See~\cite[\S 1.2.3]{DY23} for more information on this perspective on $\mathit{MT\xi}(G, s, \omega)$. Now smash $\mathit{MT\xi}(G, s, \omega)$ with the map $di+d$~\eqref{did_map}, over the base $\MTSpin$:
\begin{equation}
\label{did_tensored}
    \Sigma^4 H\Z\wedge_{\MTSpin} \mathit{MT\xi}(G,s,\omega)
        \xrightarrow{(d+id)\wedge \id_{\mathit{MT\xi}(G, s, \omega)}} \tau_{\le 5}\MTSpin \wedge_{\MTSpin} \mathit{MT\xi}(G,s,\omega) \longrightarrow
        \tau_{\le 5}(\mathit{MT\xi}(G, s, \omega)),
\end{equation}
where the rightmost map exists because Postnikov truncation is lax symmetric monoidal (see, e.g., \cite[Example/Proposition 3.12]{HNP25}). Lax monoidality of Postnikov truncation also implies that $H\Z$ is an $E_\infty$-$\MTSpin$-algebra, so the base change $\Sigma^4 H\Z\wedge_{\MTSpin} \mathit{MT\xi}(G,s,\omega)$ is $\Sigma^4$ of the $H\Z$-module Thom spectrum of the composition
\begin{equation}
    BG \underset{\eqref{MSpin_Thom}}{\longrightarrow} B\GL_1(\MTSpin) \longrightarrow B\GL_1(H\Z) \overset\simeq \longrightarrow K(\Z/2, 1),
\end{equation}
and by~\cite[Lemma 1.8 and \S 1.2.1]{DY23} this Thom spectrum can be identified with $H\Z\wedge (BG)^{\sigma-1}$, where $\sigma\to BG$ is the real line bundle with $w_1(\sigma) = s$. Thus we can rephrase~\eqref{did_tensored} as
\begin{equation}
    \Sigma^4 H\Z\wedge (BG)^{\sigma-1} \longrightarrow \tau_{\le 5} \mathit{MT\xi}(G, s, \omega).
\end{equation}
Now apply $\Sigma^5 I_{\C^\times}$:
\begin{subequations}
\begin{equation}
\label{almost_there}
    (\text{--})_{p+ip}\colon \Sigma^5 I_{\C^\times}(\tau_{\le 5} \mathit{MT\xi}(G, s, \omega)) \longrightarrow
    \Sigma^5 I_{\C^\times}(\Sigma^4 H\Z\wedge (BG)^{\sigma-1}).
\end{equation}
The universal property of $I_{\C^\times}$ allows us to rewrite the domain and codomain of~\eqref{almost_there} as follows:
\begin{equation}
\label{pip_spectra}
    (\text{--})_{p+ip}\colon \tau_{\ge 0} (\Sigma^5 I_{\C^\times}\mathit{MT\xi}(G, s, \omega)) \longrightarrow
    \Sigma H\C^\times \wedge (BG)^{\sigma-1}.
\end{equation}
\end{subequations}
Since $\pi_0$ is an isomorphism on connective covering maps, there is a canonical isomorphism
\begin{equation}
\label{connective_isom}
    \pi_0(\tau_{\ge 0} (\Sigma^5 I_{\C^\times}\mathit{MT\xi}(G, s, \omega)))\overset\cong\longrightarrow \pi_0(\Sigma^5 I_{\C^\times}\mathit{MT\xi}(G, s, \omega)) =: \mho_\Spin^5(BG, s, \omega),
\end{equation}
so we can (finally!) evaluate~\eqref{pip_spectra} on 5d IFTs.
\begin{defn}
\label{pip_defn}
The \term{$p+ip$ layer} of $\alpha\in\mho_\Spin^5(BG, s, \omega)$ is the class $\alpha_{p+ip}\in H^1(BG; \C^\times_s)$ obtained by evaluating $\pi_0$ of the map~\eqref{pip_spectra} on $\alpha$ via the isomorphism~\eqref{connective_isom}.
\end{defn}
\begin{lem}
\label{pip_defns_agree}
Suppose there is a vector bundle $V\to  BG$ such that $s = w_1(V)$ and $\omega = w_2(V)$. Then \cref{notpip,pip_defn} agree.
\end{lem}
\begin{proof}
For convenience, let $n\coloneqq\mathrm{rank}(V)$.
The lemma assumption implies that the twist $(s,\omega)\colon BG\to B\GL_1(\MTSpin)$ that we built in~\eqref{MSpin_Thom} factors through $B\GL_1(\Sph)\to B\GL_1(\MTSpin)$; see~\cite[\S 1]{DY23}. Thus, as discussed in~\cite[\S 1.2]{ABGHR14b}, there is an $\MTSpin$-module equivalence $\mathit{MT\xi}(G, s, \omega)\simeq\MTSpin\wedge (BG)^{V-n}$. The next step in constructing the $p+ip$ layer, as in~\eqref{did_tensored}, is to smash with $di+d$. Since $R\wedge_R X\simeq X$ for an $R$-module $X$, we see that $di+d$ smashed with $\id_{\mathit{MT\xi}(G,s,\omega)}$ coincides up to $\MTSpin$-module equivalence with the result of smashing $di+d$, over $\Sph$, with $(BG)^{V-n}$. After applying $\Sigma^5 I_{\C^\times}$, this is the Postnikov $5$-truncation of the connective cover of $\MTSpin$ smashed with $(BG)^{V-n}$. Since the Atiyah--Hirzebruch spectral sequence is the spectral sequence induced from the Postnikov filtration~\cite{Mau63}, this map can be identified with the projection onto the line $q = 4$ as in \cref{notpip}.
\end{proof}
Now we fulfill the promise that the $p+ip$ layer is the complete obstruction to realizing a class in $\mho_\Spin^5$ as a supercohomology class.
\begin{proof}[Proof of \cref{cofib_pip}]
The cofiber of~\eqref{did_map} is by definition $\tau_{\le 3}\colon \tau_{\le 5}\MTSpin\to \tau_{\le 3}\MTSpin$; after applying $I_{\C^\times}$, this is (the connective cover of $\Sigma^5$ of) the usual map $\SH\to\mho_\Spin^5$. The rest of the proof follows by tracing this fact through the rest of the construction of the $p+ip$ layer.
\end{proof}
In much the same way we can construct the Majorana, Gu--Wen, and Dijkgraaf--Witten layers of a twisted supercohomology class. As mentioned above, lax monoidality of Postnikov truncation implies that $\tau_{\le n}\ko$ is an $E_\infty$-ring spectrum, and that $\tau_{\le n}\colon\ko\to\tau_{\le n}\ko$ is an $E_\infty$-ring map. Therefore, given $(s,\omega)$ as usual, we can compose the map $f_{s,\omega}\colon BG\to B\GL_1(\MTSpin)$ defined by $(s,\omega)$ from~\eqref{MSpin_Thom} with the map $B\GL_1(\MTSpin)\to B\GL_1(\tau_{\le n}\ko)$ induced by the $E_\infty$-ring maps
\begin{equation}
    \MTSpin\longrightarrow\ko\overset{\tau_{\le n}}{\longrightarrow} \tau_{\le n}\ko,
\end{equation}
where the first map is the Atiyah--Bott--Shapiro orientation~\cite{ABS64}, realized as an $E_\infty$-ring map by Joachim~\cite{Joa04},
to obtain a map $f_{s,\omega,n}\colon BG\to B\GL_1(\tau_{\le n}\ko)$. Let $Mf_{s,\omega,n}$ denote the corresponding $\tau_{\le n}\ko$-module Thom spectrum.

For $n = 1,2$, imitate~\eqref{did_map} to define the following maps of $\tau_{\le n}\ko$-module spectra:
\begin{equation}
    \phi_n\colon \Sigma^n H\Z/2 \simeq \tau_{\ge n}\tau_{\le n} \ko \overset{\tau_{\ge n}}{\longrightarrow} \tau_{\le n}\ko.
\end{equation}
Then smash with $Mf_{s,\omega,n}$ over $\tau_{\le n}\ko$ and apply $\Sigma^{n+1}I_{\C^\times}$. As for the $p+ip$ layer, we can rephrase the result as a map
\begin{equation}\label{GW_Maj_interm}
    \Sigma^{n+1} I_{\C^\times}(\tau_{\le n} Mf_{s,\omega,n}) \longrightarrow \Sigma^{n+1}I_{\C^\times}(\Sigma^nH\Z/2\wedge (BG)_+).
\end{equation}
The chief difference to the $p+ip$ layer case is that we map the twist to its image in $[BG, B\GL_1(H\Z/2)]$, but $B\GL_1(H\Z/2)$ is contractible, so all twists of $BG$ over $H\Z/2$ are trivial, with Thom spectrum $\Sigma^\infty (BG)_+$. Take homotopy classes of maps and pass to the connective cover, like in~\eqref{connective_isom}, resulting in maps
\begin{subequations}
\begin{gather}
    (\bl)_{\mathrm{Maj}}\colon \SH^k(BG, s, \omega) \longrightarrow H^{k-2}(BG;\Z/2)\\
     (\bl)_{\mathrm{GW}}\colon \mathit{rSH}^k(BG, s, \omega) \longrightarrow H^{k-1}(BG;\Z/2).
\end{gather}
\end{subequations}
In other words, associated to any supercohomology class is its Majorana layer, and associated to any restricted supercohomology class, we obtain a Gu--Wen layer. Since the fiber of the Majorana layer map of spectra is restricted supercohomology, by a proof analogous to that of \cref{cofib_pip}, a supercohomology class with trivial Majorana layer has a Gu--Wen layer. Continuing in this way, a class with trivial Majorana and Gu--Wen layers has a Dijkgraaf--Witten layer in integral cohomology.

\section{Anomalies of topological orders from obstruction theory}\label{section:anomaliesObstructions}

In much of the literature, an anomaly is shorthand for a ’t Hooft anomaly, understood as an “obstruction to gauging’’ in a quantum field theory (QFT) and classified via SPTs in one higher dimension through anomaly inflow. However, as we have seen in \S\ref{section:always_works}, supercohomology anomalies play a distinguished role for $(3+1)$d fermionic topological orders. In this final section, we offer an alternative perspective on anomalies of topological orders based on obstruction theory in higher category theory and reformulate our main conjecture within this framework. We expect that this perspective will prove useful in broader settings, which we leave to future work.

From this viewpoint, there is no single, universal notion of anomaly that applies uniformly across all physical contexts. Rather, the appropriate classification framework depends on the setting at hand—be it continuum QFTs, lattice systems, or categorical formulations of topological orders. Indeed, it is reminiscent of the different notions of anomalies in lattice systems that are recently explored in the literature \cite{2014PhRvB..90w5137E,2020PhRvB.101v4437E,2024arXiv240102533K,2025arXiv250721209T,Kapustin:2025nju,Kapustin:2025rhp}. 

In the context of a topological order, we anticipate that the new perspective is purely based on its categorical/algebraic data and offers a mathematically well-defined notion of anomaly. Previous literatures that discuss anomalies of topological orders from this perspective include \cite{ENO2,JF,JFY2,2024JHEP...11..111L,Antinucci:2025fjp,Stockall:2025ppu}. As we see in \S\ref{section:always_works}, the classification that emerges from this definition differs from the familiar notions of 't Hooft anomaly in a continuum QFT. Nevertheless, the two are related by a natural map between their underlying spectra, which we will discuss in detail. 

Motivated by this, we propose that a broad class of these different notions of anomalies, particularly those connected to the 't Hooft anomaly of a QFT, can be systematically organized using the language of generalized cohomology theory, reviewed in Appendix~\ref{subsec:generalized_coho}. Furthermore, physical processes may give rise to maps between generalized cohomology theories, such as renormalization group flow connecting theories described by algebraic data in higher category theory, i.e.\ topological order, to a TQFT.

\subsection{Categorical obstructions of topological orders and 't Hooft anomalies}\label{subsection:obst}

In the context of continuum QFT, a theory is said to have a 't Hooft anomaly for a symmetry group $G$ if, when coupled to background $G$ gauge fields, the partition function fails to be invariant under $G$ gauge transformations, even after accounting for possible local counterterms. Based on the hypothesis of anomaly inflow \cite{Freed:2014iua}, 't Hooft anomalies are said to be classified by IFTs in one higher dimension. Then Freed--Hopkins~\cite{FH} and Grady~\cite{Gra23} showed that, for fermionic theories, these fermionic IFTs are classified by some generalized cohomology theory with the relevant spectrum in question being $I_\Z \MSpin$. In particular, for fermionic $G$-symmetry, the classification of $n$-dimensional 't Hooft anomalies are given by $I_\Z \MSpin^{n+1}(BG)$.

While this perspective is sufficiently general across different quantum systems, it may not be able to capture all the algebraic information of the underlying quantum system associated to symmetries, where more handwaving concepts like ``gauging'' or ``anomaly inflow'' can be defined in a much more precise manner. For fermionic topological orders in (3+1)d which have a fusion 2-categorical description \cite{JF,DY2025}, we may be able to define anomalies purely in terms of the interaction of the symmetry and the categorical data. To distinguish the anomalies defined in this new perspective, we define the \emph{categorical obstruction} for a $G$-action on a $\tsVect$-enriched nondegenerate braided fusion $2$-category $\fB$ (which was called the categorical $G$-obstruction in the main text), to be the failure to construct a $\tsVect$-enriched nondegenerate faithfully graded $G$-crossed braided fusion 2-category extending the $G$-action on $\fB$. As explained in \cite[Section 4.4]{DY2025}, this perspective of anomaly is equivalent to an anomaly for a (3+1)d fermionic $G$-SET. This gives the full algebraic data that characterizes the interplay between symmetries and the underlying categorical data. We will see that it is classified by $\SW^5(BG)$, where $\SW$ denotes the \textit{super-Witt spectrum}.

To be more specific, let us first specialize to (2+1)d, and review the classical 1-categorical result in \cite{ENO2}. In \cite{ENO2},  Etingof--Nikshych--Ostrik--Meir constructed faithfully graded $G$-crossed braided extensions of a braided fusion 1-category $\cB$.
Such $G$-crossed braided extensions are parametrized by the homotopy classes of maps
\begin{equation}
    BG \longrightarrow B\sPic(\cB),
\end{equation}
where $\sPic(\cB)$ is the Picard groupoid of $\cB$, given by the space of invertible $\cB$-modules. See \cite[\S 2.2]{Bhardwaj:2024xcx} for a physical introduction and an example of how the extension theory proceeds.  

One way to think of this extension is to imagine a specific case when $\cB$ is nondegenerate and represents a (2+1)d TQFT. If $\cB$ has a $G$-symmetry, i.e.\ a map $\rho\colon G \rightarrow \sAut^{br}(\cB)$, then to form a $G$-crossed braided extension of $\cB$ is to insert $G$-defects into $\cB$ such that the fusion and associativity relations respect
the group multiplication of $G$ \cite{Barkeshli:2014cna}. The result is a (2+1)d $G$-SET, i.e. a nondegenerate $G$-crossed braided fusion 1-category, that incorporates extra data such as the $F$-symbols of objects in the $G$-crossed braided extension including original objects in $\cB$ as well as the extra $G$-defects. The different $G$-crossed extensions parametrize SET phases. 

We define the \textit{categorical obstruction} to be the complete obstruction, in the sense of obstruction theory in algebraic topology, to the existence of a lift
\begin{equation}
    \begin{tikzcd}
        & B\sPic(\cB) \arrow[d]\\
        BG \arrow[r] \arrow[ru,dotted]& B\sAut^{br}(\cB)\,.
    \end{tikzcd}
\end{equation}
In other words, the obstruction corresponds to the inability to define a topological phase in which symmetry fractionalization is non-anomalous and the $G$-crossed braided consistency conditions, like the heptagon equations in \cite{Barkeshli:2014cna}, are satisfied. Maps to $B\sAut^{br}(\cB)$ that factor through $B\sPic(\cB)$ are precisely those $G$-actions on $\cB$ that are non-anomalous.

We can generalize the obstruction to higher dimensional theories, and obtains the classifications of anomalies from this perspective. Let $\mathbf{C}$ be a fusion $n$-category, which can be loosely defined inductively via delooping and Karoubi completing as in \cite{Gaiotto:2019xmp}.\footnote{See \cite[Section 3.1]{Bhardwaj:2024xcx} for an explanation of a crucial technical assumption, that must be made with our current understanding of condensation, in order for the inductive construction to be valid at for all values of $n$. For the contents of this paper, we will not require those assumptions. See \cite[Section 4.1]{StockallCondensation} for a  treatment of higher fusion categories in
terms of Cauchy completion.} There is a fiber sequence of spaces given in \cite[Theorem 5.2.24]{Bhardwaj:2024xcx}, which follows from unpublished work by Jones--Reutter: 
\begin{equation}\label{eq:obstructionSES}
   B \mathbf{C}^\times \longrightarrow B\sAut^\otimes(\mathbf{C}) \longrightarrow B\mathbf{Bimod}(\mathbf{C})^\times\,,
\end{equation}
where $(\text{--})^\times$ denotes only taking the invertible parts of a symmetric monoidal category. The rightmost entry parametrizes obstructions to lifting a map $X \rightarrow B\sAut^\otimes(\mathbf{C})$ to $X \rightarrow B \mathbf{C}^\times$.
There is an analogous sequence in the fermionic case, when each entry is a category enriched in super ($n$)-vector spaces:\footnote{Analogously to the construction of higher fusion categories, we obtain super 
($n$)-vector spaces via condensation completion, beginning with the fusion 1-category of super vector spaces  $\mathbf{sVect}$.}
\begin{equation}\label{eq:superobstructionSES}
   B \mathcal{S}\mathbf{C}^\times \longrightarrow B\mathcal{S}\sAut^\otimes(\mathbf{C}) \longrightarrow B\mathcal{S}\mathbf{Bimod}(\mathbf{C})^\times\,.
\end{equation}

\begin{example}\label{ex:bosonicobstruction}
    Let $\mathbf{C}$  in \eqref{eq:obstructionSES} be a connected fusion 2-category of the form $\Mod(\cB)$ where $\cB$ is a nondegenerate braided fusion 1-category. For background on the foundations of fusion 2-categories, we recommend \cite{douglas2018fusion}. Then we get the sequence
    \begin{equation}
        B \sPic(\cB) \longrightarrow B\sAut^{br}(\cB) \longrightarrow B\Bimod(\Mod(\cB))^\times\,,
    \end{equation}
    and hence the 3-groupoid $B\Bimod(\Mod(\cB))^\times$ parametrizes obstructions, which is isomorphic to $B\mathscr{W}itt: = B3\Vect^\times$. The homotopy groups of $\mathscr{W}itt$ are simply \cite{ENO2}
    \begin{equation}
\begin{aligned}
    \pi_{0} \,\mathscr{W}itt= &\mathcal{W}itt, \quad \pi_{1} \, \mathscr{W}itt= \pi_{2} \, \mathscr{W}itt= \pi_{3} \, \sWitt= 0, \quad \pi_{4} \, \mathscr{W}itt= \mathbb{C}^\times,
\end{aligned}
\end{equation}
    where $\mathcal{W}itt$ is the Witt group \cite{brochier2021invertible} of \textit{nondegenerate} braided fusion categories. Let $\mathscr{W}^*$ denote the generalized cohomology theory corresponding to the spectrum whose $n$-th space is $B^{n-4}\mathscr{W}itt$. Then from our perspective, the obstruction should take values in $\mathscr{W}^4(BG)$, and we have a natural comparison map\footnote{For any pointed space $X$, we have the natural inclusion and retraction ${\mathrm{pt}} \rightarrow X \rightarrow {\mathrm{pt}}$. For any generalized cohomology theory $E$, this gives the natural split $E^*(X) = E^*(\mathrm{pt}) \oplus \widetilde{E}^*(X)$, where $\widetilde{E}^*$ is the \textit{reduced} generalized cohomology theory for $E$.}
    \begin{equation}\label{eq:sequence}
        H^4(BG;\mathbb C^\times) \rightarrow \mathscr{W}^4(BG) \rightarrow H^0(BG;\mathcal{W}itt).
    \end{equation}
    The $H^4(BG;\mathbb C^\times)$ part is what is commonly referred to as the ``$G$-anomaly'' for bosonic topological orders in (2+1)d, and is the obstruction to the associativity of the extension \cite{ENO2}. The $H^0(BG;\mathcal{W}itt)$ part simply encodes the information of the Witt class of the underlying nondegenerate braided fusion category under consideration. Therefore, our perspective aligns with the more common perspective. Yet it provides a unifying framework that can be generalized to higher dimensions and more complicated settings.
    
    Interestingly, when $G$ does not contain any anti-unitary symmetry, $\mathscr{W}^4(BG)$ is canonically isomorphic to $H^4(BG;\mathbb C^\times) \oplus H^0(BG;\mathcal{W}itt)$. It is very interesting to investigate whether the split holds when anti-unitary symmetries or $s$-twist are present.

    We also conjecture that there is a natural map from $\mathscr{W}^4(BG)\rightarrow I_\Z \MTSO^4(BG)$, and the image should give the 't Hooft anomaly of a $(2+1)$d bosonic topological order given by the calculation in e.g. \cite{Bulmash:2020flp,Ye:2022bkx}.
\end{example}

As discussed in \S\ref{subsection:TO}, (3+1)d fermionic topological orders are described by nondegenerate $\tsVect$-enriched braided fusion 2-categories $\mathfrak{B}$. When one takes $\mathbf{C} = \Mod(\fB)$ in \eqref{eq:obstructionSES}, then the categorical obstruction to performing a faithfully $G$-crossed  braided extension is parametrized by the 4-groupoid $B \sWitt: = B\mathcal{S}\Bimod(\Mod(\fB))^\times = B\mathbf{4sVect}^\times$.  The details of the enrichment over $\tsVect$ and the appearance of this groupoid are presented in \cite[Section 4]{DY2025}.

The homotopy groups  of $\sWitt$ were computed in \cite{DY2025}, and given by: 
\begin{equation}
\begin{aligned}
    \pi_{0} \,\sWitt= &s\mathcal{W}, \quad \pi_{1} \, \sWitt= 0, \quad \pi_{2} \, \sWitt= \Z/2, \\
     &\pi_{3} \, \sWitt= \Z/2, \quad \pi_{4} \, \sWitt= \mathbb{C}^\times,
\end{aligned}
\end{equation}
where $\sW$ is the super-Witt group of  braided fusion categories $\cB$ with Müger center $\sVect$, given in \cite{DNO}. Such categories are also referred to as slightly degenerate braided fusion categories.  By \cite[Proposition 5.18]{DNO} we have
\begin{equation}\label{eq:swittdecompose}
    \sW = \sW_{\pt} \oplus \sW_2 \oplus \sW_{\infty}\,,
\end{equation}
where $\sW_{\pt}$ is generated by
the Witt classes of Abelian super MTCs, $\sW_2$ is an elementary Abelian 2-group, and $\sW_{\infty}$ is a free group of countable rank.
Determining the $k$-invariants of the space $\sWitt$ is an important open question, especially in the context of this work for computing categorical obstructions.
\begin{defn}
Let $\SW^*$ denote the generalized cohomology theory corresponding to the spectrum whose $n$-th space is $B^{n-4}\sWitt$.
\end{defn}
Thus $\SW^n(BG)$ parametrizes homotopy classes of maps
\begin{equation}
    BG \rightarrow B^{n-4} \sWitt,
\end{equation}
and is exactly the categorical obstruction that we are seeking for.

In the relevant range for our applications, $\SW$ has the following homotopy groups:
\begin{equation}
\begin{aligned}
    \pi_{-4} \,\SW= &s\mathcal{W}, \quad \pi_{-3} \, \SW= 0, \quad \pi_{-2} \, \SW= \Z/2, \\
     &\pi_{-1} \, \SW= \Z/2, \quad \pi_0 \, \SW= \mathbb{C}^\times.
\end{aligned}
\end{equation}
The categorical obstruction given by $\SW$ resembles the more familiar 't Hooft anomalies that are classified by $I_\Z \MSpin$, which in the relevant range, has homotopy groups \begin{equation}
\begin{aligned}
    &\quad \pi_{-4} \,I_\Z \MSpin= \Z, \quad \pi_{-3} \, \,I_\Z \MSpin= \Z/2, \\  \pi_{-2} \, I_\Z &\MSpin= \Z/2, \quad \pi_{-1} \, I_\Z \MSpin= 0, \quad \pi_0 \, I_\Z \MSpin= \Z.
\end{aligned}
\end{equation}
By comparing $\SW$ with $I_\Z\MSpin$, it is conjectured \cite{Theo_private} that there exists a map from the categorical obstruction to the  't Hooft anomaly, i.e.\ a map 
\begin{equation}\label{eq:SWtoMSpin}
p\colon \SW \rightarrow \Sigma I_\Z \MSpin.
\end{equation} 
which maps nondegenerate braided fusion ($n$)-categories enriched in super ($n$)-vector spaces, to reflection positive invertible spin TQFTs. \textbf{In the rest of \S\ref{section:anomaliesObstructions}, we assume this conjecture.}
We summarize a heuristic construction for part of this map due to what we learned in \cite{Theo_private}.
Comparing the homotopy groups of the spectrum $\SW$ and the spectrum of $I_\Z \MSpin$, we have
\begin{center}
\begin{tabular}{c|c|c}
$\pi_*$ & $\SW$ & $\Sigma I_\Z \MSpin$ \\
\hline
$+1$ & $0$       & $\mathbb{Z}$ \\
$0$  & $\mathbb{C}^\times$ & $0$ \\
$-1$ & $\mathbb{Z}/2$ & $\mathbb{Z}/2$ \\
$-2$ & $\mathbb{Z}/2$ & $\mathbb{Z}/2$ \\
$-3$ & $0$       & $\mathbb{Z}$ \\
$-4$ & $s\mathcal{W}$ & $0$
\end{tabular}
\end{center}
In degrees $-2,\dotsc,+1$, the two spectra are determined (noncanonically) by their homotopy groups together with the fact that the Postnikov $k$-invariants of consecutive homotopy groups are all nontrivial \cite[Section 5]{GJF19}. In this range of degree, $\SW$ looks like  $\Sigma I_{\Z}\MSpin$, except that $\C^\times$ is replaced with $\Z$ in one degree higher. Indeed, after truncating to degrees $-2$ and above, the map~\eqref{eq:SWtoMSpin} ``is'' the cofiber of the exponential map $\C\to\C^\times$, in that the fiber of~\eqref{eq:SWtoMSpin} is the Eilenberg--Mac Lane spectrum $H\C$. This says that the map~\eqref{eq:SWtoMSpin} is very close to being an equivalence: in degrees $-1$ and below, it is an isomorphism on homotopy groups, and in degrees $0$ and $1$, it is a Bockstein.

In these degrees, it is possible to describe the map~\eqref{eq:SWtoMSpin} field-theoretically: in principle, this map describes how every invertible object of $\Omega^2\mathbf{4sVect}^\times\simeq \mathbf{sAlg}^\times$, the Morita $2$-category of superalgebras, gives rise to a two-dimensional reflection-positive invertible spin TFT. This is standard: the unit in $\mathbf{sAlg}^\times$ gives rise to the trivial theory, and the unique nontrivial Morita class, represented by the Clifford algebra $\mathit{C\ell}_1$, gives rise to the Arf theory~\cite{Gun16}.

It remains to address the maps in degrees $-3$ and $-4$. The existence of such a map was communicated to us in \cite{Theo_private}, and progress on mapping the torsion part of $\sW$ to the degree $-3$ entry in $\Sigma I_\Z \MSpin$ has been announced in \cite{reutter:youtube}.

\subsection{Relationship between $\SH$, $\SW$, $\mho_\Spin$ in degree 5 and the main conjecture}
In the setting of our paper, we would like to understand the relationship between $\SH$, $\SW$, and $\mho_\Spin$ in degree $5$ when applied to $BG$ for a finite group $G$. Thus consider the maps
\begin{equation}
    \SH^5(BG)\overset{\mathcal I}{\longrightarrow} \SW^5(BG) \overset p\longrightarrow \mho^5_\Spin(BG)\,,
\end{equation}
where  the map $\mathcal I\colon \SH \to \SW$ is the Postnikov $(-3)$-connected cover.
\begin{lem}
    If the map $(p\circ\mathcal I)_*\colon \SH^5(BG)\rightarrow \mho^5_\Spin(BG)$ is an isomorphism for a given group $G$, then there is a subgroup $A$ of $H^1(BG;\sW)$ and a splitting $\SW^5(BG) \cong \SH^5(BG) \oplus A$ of the map $\mathcal I_*\colon\SH^5(BG)\to\SW^5(BG)$.
\end{lem}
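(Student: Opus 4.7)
The plan is to combine a simple algebraic splitting argument with the cofiber sequence of spectra associated to $\mathcal I\colon\SH\to\SW$. The basic algebraic input is the elementary fact that if $\phi\colon A\to B$ and $\psi\colon B\to C$ are homomorphisms of abelian groups with $\psi\circ\phi$ an isomorphism, then $\phi$ is injective, $\psi$ is surjective, and $B = \phi(A)\oplus\ker(\psi)$: for $b\in B$ one has $b = \phi((\psi\phi)^{-1}\psi(b)) + \bigl(b - \phi((\psi\phi)^{-1}\psi(b))\bigr)$ with the second summand in $\ker\psi$, and if $\phi(a)\in\ker\psi$ then $\psi\phi(a)=0$ forces $a=0$. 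Applying this to $\phi = \mathcal I_*$ and $\psi = p_*$ and invoking the hypothesis yields
\begin{equation*}
\SW^5(BG) \;=\; \mathcal I_*\bigl(\SH^5(BG)\bigr) \,\oplus\, \ker(p_*),
\end{equation*}
with $\mathcal I_*$ identifying the first summand with $\SH^5(BG)$.

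Next I would identify $\ker(p_*)$ as a subgroup of $H^1(BG;\sW)$ using the cofiber sequence of spectra $\SH\xrightarrow{\mathcal I}\SW\to C$. Since $\mathcal I$ is the Postnikov $(-3)$-connected cover, $\pi_i(C)=0$ for $i\ge -3$. The construction of $\SW$ as the spectrum whose $n$-th space is $B^{n-4}\sWitt$ for the $4$-groupoid $\sWitt$ also forces $\pi_i(\SW)=0$ for $i\le -5$, and hence $\pi_i(C)=0$ for $i\le -5$ as well. Therefore $C$ has only one nontrivial homotopy group, $\pi_{-4}(C)=\sW$, so $C\simeq \Sigma^{-4}H\sW$ and consequently $C^5(BG)=H^1(BG;\sW)$.

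The associated long exact sequence then reads
\begin{equation*}
\cdots\to \SH^5(BG)\xrightarrow{\mathcal I_*}\SW^5(BG)\xrightarrow{\delta} H^1(BG;\sW)\to\SH^6(BG)\to\cdots,
\end{equation*}
so $\ker(\delta)=\mathcal I_*(\SH^5(BG))$. Since the splitting from the first paragraph gives $\ker(p_*)\cap \mathcal I_*(\SH^5(BG))=0$, the map $\delta$ restricts to an injection on $\ker(p_*)$. Setting $A\coloneqq \delta(\ker(p_*))\subseteq H^1(BG;\sW)$ then yields $A\cong \ker(p_*)$ and hence the desired splitting $\SW^5(BG)\cong \SH^5(BG)\oplus A$. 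The only genuine subtlety is the identification $C\simeq\Sigma^{-4}H\sW$, which hinges on the vanishing of $\pi_i(\SW)$ in degrees $\le -5$; the rest is routine diagram-chasing with the long exact sequence.
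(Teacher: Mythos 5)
Your proof is correct and follows essentially the same route as the paper's: both use the hypothesis to produce a retraction of $\mathcal I_*$ (hence a splitting $\SW^5(BG)\cong \mathcal I_*(\SH^5(BG))\oplus\ker(p_*)$), identify the cofiber of $\mathcal I$ as $\Sigma^{-4}H\sW$, and use the resulting long exact sequence to embed the complementary summand into $H^1(BG;\sW)$. Your $A=\delta(\ker(p_*))$ coincides with the paper's $A=\ker(\delta^1)$, so the two arguments are the same up to presentation.
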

\begin{proof}
The map $\mathcal I\colon \SH\to\SW$ of spectra is an isomorphism on homotopy groups in all degrees $-3$ and above, so its cofiber is the Postnikov quotient $\tau_{\le(-4)}\SW$. As this spectrum has only one nonzero homotopy group $\pi_4(\tau_{\le(-4)}\SW)\cong \sW$, it must be an Eilenberg--Mac Lane spectrum: $\tau_{\le(-4)}\SW\simeq \Sigma^{-4}H\sW$. That is, we have a fiber sequence
\begin{equation}
\label{SH_SW_fib}
    \SH\overset{\mathcal I}{\longrightarrow} \SW \overset{\tau_{\le(-4)}}\longrightarrow \Sigma^{-4}H\sW.
\end{equation}
Combining the induced long exact sequence from~\eqref{SH_SW_fib} with the data from the lemma statement, we have the following commutative diagram, where the top row is exact:
\begin{equation}\label{I_tau_LES}
\begin{tikzcd}
	\dotsb & {H^0(BG; \sW)} & {\SH^5(BG)} & {\SW^5(BG)} & {H^1(BG; \sW)} & \dotsb \\
	&&& {\mho_\Spin^5(BG)}
	\arrow[from=1-1, to=1-2]
	\arrow["{\delta^0}", from=1-2, to=1-3]
	\arrow["{\mathcal I}", from=1-3, to=1-4]
	\arrow["\cong"{description}, from=1-3, to=2-4]
	\arrow["{\tau_{\le(-4)}}", from=1-4, to=1-5]
	\arrow["p", from=1-4, to=2-4]
	\arrow["{\delta^1}", from=1-5, to=1-6]
\end{tikzcd}\end{equation}
Since $p\circ\mathcal I\colon\SH^5(BG)\to\mho_\Spin^5(BG)$ is an isomorphism by hypothesis, it provides a section of $\mathcal I\colon \SH^5(BG)\to\SW^5(BG)$. Thus $\mathcal I$ is a split injection. Because the sequence in~\eqref{I_tau_LES} is exact, $A\coloneqq \ker(\delta^1)\subset H^1(BG;\sW)$ is a complementary summand to the image of $\mathcal I$, which finishes the proof.

\end{proof}

Furthermore, if $\SH^5(BG) \rightarrow I_\Z \MSpin$ is surjective then $\SH^5(BG)$ captures a subgroup of $\SW^5(BG)$.

We now summarize the arguments for and against using these two types of obstructions, as well as supercohomology, to build a (3+1)d topological order:
\begin{itemize}
    \item  While the correct obstruction to  $G$-crossed braided extensions of a $\tsVect$-enriched nondegenerate braided fusion 2-category are classes in $\SW^5(BG)$, computing this group is hard because we do not know about higher differentials in the Atiyah--Hirzebruch spectral sequence that computes $\SW^5(BG)$; in particular, this is tied to the fact that the $k$-invariants of $B\sWitt$ are not fully determined.  Even assuming the existence of the map $\SW \to I_\Z \MSpin$ does not mean we can necessarily pull back differentials, as some elements may be sent to zero.
    \item Spin cobordism is usually tractable to compute by the Adams spectral sequence. It also classifies anomalies for continuous quantum field theories. However, it is less directly related to the categorical obstructions. Like $\SW^5(BG)$, spin cobordism also has a contribution that goes beyond cohomology starting in dimension 4, which we do not know of a good cocycle description for.\footnote{See Brumfiel--Morgan~\cite{BM16, BM18} for cocycle descriptions of $I_{\C^\times}\MSpin$ in lower degrees.}
    \item Using the hastened Adams spectral sequence for supercohomology that we develop in~\cite{DYY2}, supercohomology is roughly as computable as $I_\Z \MSpin$; see Appendix~\ref{appendix:B}. Supercohomology also has a cocycle description \cite{PhysRevB.90.115141,Wang:2017moj}. Therefore  it is both possible in theory and tractable in practice to apply the fermionic Wang--Wen--Witten construction on supercohomology, with the hopes of writing down a state-sum that generalizes \cite{Kobayashi:2019lep}.
    Furthermore, we know that (3+1)d fermionic topological orders are classified by degree 4 supercohomology classes \cite[Corollary V.4]{JF}.
     Supercohomology is an approximation not only to $\SW$, but also to spin cobordism in low degrees using the first definition of supercohomology in Appendix \ref{subsection:twistedSH}.  Hence $\SH^5(BG)$ may contain classes which map to 0 in $\mho^5_\Spin(BG)$, however the two may at times also coincide. In the case when they do coincide, $\SH^5(BG)$ really does have an interpretation in terms of classifying fermionic $G$-SPTs. 
See \Cref{ex:cyclic} for an example when the two groups coincide, and \cref{ex:timerev} for an example where the two groups do not coincide. 

    We do not know exactly how much $\SH^5(BG)$ misses of the full categorical anomaly given by $\SW^5(BG)$. To fully answer this question we would need to understand how to compute $\SW^5(BG)$, which is a difficult open problem. Finding a cocycle description of this group is expected to be even harder. Thus, we will ignore the bottom layer with $\sW$ in our approximation to the categorical obstruction.\footnote{It would be interesting to expand the definition of fusion 2-categories to incorporate unitarity, and compare if the analogous obstructions with and without restriction from unitarity.}
\end{itemize}

\begin{rem}
    There is the natural question of what it actually means to give a state sum construction for a TQFT whose Lagrangian description involves a class in $\SW^5(BG)$, which contains the group $\sW$. We believe this question to be related to realizing discrete invertible phases with ``SPT index'' valued in $\mho^5_\Spin(BG)$. In spacetime  dimension three or lower, one could define an SPT index valued in $\mho^3_\Spin(BG)$ via the cocycles $(\alpha,\beta,\gamma)$ of supercohomology. But it is not known how to go to higher dimensions. In particular, one should provide an answer for how to work with  a ``cocycle'' valued in $\sW$. Such a cocycle should have the interpretation as the super Witt class of a (2+1)d topological order with a $G$-symmetry. Such Witt classes are defined in \cite[Definition 5.2.3]{Bhardwaj:2024xcx}. Trivializing a cocycle upon pulling back to a group $H$ would mean that the (2+1)d topological order with a $G$-symmetry is Witt trivial in the class of (2+1)d topological order with a $H$-symmetry.
\end{rem}

We now discuss how these three obstructions come together in an example involving (2+1)d fermionic TQFTs. 

\begin{example}
    In analogy to \cref{ex:bosonicobstruction}, the categorical obstruction for a $G$-crossed braided extension of a slightly degenerate braided fusion category $\cA$ is given by an element in $\SH^4(BG)$, as shown in \cite{Decoppet:2024htz}. However, this again misses the anomaly given by the Witt class $[\cA] \in \sW$.  Taking the anomaly from the Witt class into account would make this example line up with the conjecture that there is a map from $\SW \to \Sigma I_\Z \MSpin$ with properties as described above. In the case where $G$ is a unitary symmetry, we have a match between $\SH^4(BG)$ and $\widetilde{\mho}^4_\Spin(BG)$, where the latter denotes reduced spin cobordism, and $\SW^4(BG)$ splits as $\SH^4(BG) \oplus \sW$.
\end{example}

\section{Spectral sequence computations}\label{appendix:B}
In this appendix, we provide the technical computations involving the hastened Adams and Atiyah--Hirzebruch spectral sequences used in \S\ref{section:construction} to prove the main theorems. 

Throughout this appendix, we make a technical assumption: that for all $(X, a, b)$-twisted supercohomology groups that we consider, there is a vector bundle $V\to X$ such that $w_1(V) = a$ and $w_2(V) = b$. This is true, and straightforward to verify, for all examples appearing in this paper.\footnote{\label{nonvb_footnote}This assumption is not true in general: see~\cite{GKT89, RWG14, TJF19, Kuhn20, Speyer22, DY24} for counterexamples where $X$ is the classifying space of a compact Lie group. For the (hastened) Adams spectral sequence, this assumption is unnecessary~\cite{DY23,DYY2}; for the Atiyah--Hirzebruch spectral sequence, this assumption is used to prove the formulas~\eqref{dform} for differentials. We conjecture that these formulas hold even without this assumption, but this is not in the literature to our knowledge.} 

First, we provide details about the AHSS for the groups we consider in this paper. For a fermionic symmetry group given by $(G, s, \omega)$ such as in \Cref{tab:results}, the entries of the AHSS on the $E_2$-page are given by:
\begin{equation}\label{eq:layout}
\resizebox{\linewidth}{!}{$
E^{i,j}_2=
\begin{array}{c|cccccccc}
     j\\
     \\
     2& H^0(BG; \bZ/2) & H^1(BG; \bZ/2) & H^2(BG; \bZ/2) & H^3(BG; \bZ/2) &  \ldots \\
     1& H^0(BG; \bZ/2) & H^1(BG; \bZ/2) & H^2(BG; \bZ/2) & H^3(BG; \bZ/2) & H^4(BG; \bZ/2) &  \ldots \\
     0 & H^0(BG; \mathbb C^\times _ s) & H^1(BG; \mathbb C^\times _ s) & H^2(BG; \mathbb C^\times _ s) & H^3(BG; \mathbb C^\times _ s) & H^4(BG; \mathbb C^\times _ s) & H^5(BG; \mathbb C^\times _ s) &  \ldots \\
     \hline 
     & 0 & 1 & 2 & 3 & 4 & 5  & i\,
\end{array}
$}
\end{equation}

The rows for $j=2$ and $j=1$ come from the $\Z/2$ coefficient ring of the space $BG$, and we will write the entries/generators in terms of generators of the $\Z/2$ coefficient ring. The subscript $s$ in the $j=0$ row is an indicator that $G$ has nontrivial action on the $\mathbb{C}^\times$ module determined by $s$. We can write the elements in the $j=0$ row using elements in $H^*(BG;\Z/2)$ with the help of the map $\bZ/2 \rightarrow \mathbb{C}^\times$. This allows us to present the entries as $(-1)^x$ where $x$ is an element in $\bZ/2$ cohomology. When it is not possible, we will only write down the explicit group of the entry without giving a name to the corresponding generator of the entry.

Since supercohomology is a Postnikov truncation of the Pontryagin dual of $\ko$, the $d_2$ differentials in the supercohomology Atiyah--Hirzebruch spectral sequence follow from the respective differentials in the $\ko$-AHSS, which were computed by Bott~\cite{Bot69}.\footnote{\label{AHSS_diffs}For an explicit statement of these differentials, see Anderson--Brown--Peterson~\cite[Proof of Lemma 5.6]{ABP67}. In addition, see~\cite[Lemma A.23]{Debray:2023iwf} for the details on passing the differentials through Anderson duality.}
First, define twisted Steenrod squares acting on $H^*(X;\Z/2)$ by
\begin{subequations}
\label{twisted_Sq}
\begin{align}
    \Sq_s^1(x) &\coloneqq \Sq^1(x) + sx\\
    \Sq_{s,\omega}^2(x) &\coloneqq \Sq^2(x) + s \Sq^1(x) + \omega x.
\end{align}
\end{subequations}
Then the Atiyah--Hirzebruch $d_2$s have the formula
\begin{subequations}
\label{dform}
\begin{align}
    d_2 &: E^{i,2}_2  \to E^{i+2,1}_2 && X \mapsto \Sq_{s,\omega}^2 (X) \,, \\
     d_2 &: E^{i,1}_2 \to E^{i+2,0}_2 \quad&& X \mapsto (-1)^{\Sq_{s,\omega}^2 (X)} \,.
\end{align}
\end{subequations}
There is also potentially a nontrivial $d_3$ differential 
\begin{equation}
    d_3\colon E_3^{i, 2} \to E_3^{i + 3, 0}.
\end{equation}
We do not know an explicit formula for general $i$.\footnote{Results for low degrees based on physical constructions can be found in \cite{PhysRevX.10.031055,2022PhRvB.105w5143B,2025arXiv251225069N}; see also~\cite{MO_KO_d3}.} On the $E_\infty$-page we must resolve potential extension problems, i.e., resolve how different entries in different rows are assembled together to give the full supercohomology group. This can be especially tricky, especially when the total degree is higher than 3.

At this point, it is traditional in the mathematical physics literature to turn to the \term{Adams spectral sequence}, whose structure makes many extension problems easier, and which admits a remarkable simplification for computing twisted spin bordism (see~\cite{BC18}). However, we need to compute twisted supercohomology, for which the standard Adams spectral sequence is messier. Instead, we use a variant called the \term{hastened Adams spectral sequence} (HASS). HASSes were introduced in~\cite{BHHM08} and systematized in~\cite{BR21} associated to the general data of a map of spectra; in a companion paper~\cite{DYY2} we apply this to supercohomology and study a number of examples. Here, we give an overview of the HASS for $\tau_{\le 2}\ko$, then apply it in several examples.

Let $\cA(1)$ denote the subalgebra $\langle \Sq^1, \Sq^2\rangle$ inside the Steenrod algebra $\cA$ of mod $2$ stable cohomology operations, and let $H_{s,\omega}^*(X;\Z/2)$ be the $\cA(1)$-module whose underlying graded vector space is $H^*(X;\Z/2)$, but where $\Sq^1$ acts by $\Sq_s^1$ and $\Sq^2$ acts by $\Sq_{s,\omega}^2$ (see~\eqref{twisted_Sq}).\footnote{It is not immediately obvious that $\Sq_s^1$ and $\Sq_{s,\omega}^2$ satisfy the Adem relations and thus define an $\cA(1)$-action; this was shown in~\cite[Lemma 2.38(3)]{DY23}.} Then the input data to the Adams spectral sequence computing $(X, s, \omega)$-twisted spin bordism is
\begin{equation}
\label{Adams_sketch}
    E_2^{s,t} = \Ext^{s,t}_{\cA(1)}(H_{s,\omega}^*(X;\Z/2), \Z/2),
\end{equation}
where $\Ext$ is a functor classifying extensions of $\cA(1)$-modules of different lengths. In this paper, when we write $\Ext(M)$ we mean $\Ext_{\cA(1)}^{*,*}(M, \Z/2)$.

In the hastened Adams spectral sequence for (the dual of) supercohomology, most of~\eqref{Adams_sketch} is the same, but $\Ext$ is replaced with a different functor $\mathcal Q$, which one can think of as a ``difference of two Exts.'' The following theorem makes this precise.
\begin{thm}[{\cite[Proposition 12.33]{BR21}, \cite{DYY2}}]
\label{cQthm}
Let $\uQ$ denote the $\cA(1)$-module $\cA(1)/(\Sq^1, \Sq^2\Sq^3)$.
\begin{enumerate}
    \item\label{g4_defn}
    There is a map of $\Z^2$-graded $\Ext(\Z/2)$-modules
    \begin{equation}
        g_4\colon\Ext_{\cA(1)}^{s,t}(\uQ, \Z/2) \longrightarrow
            \Ext_{\cA(1)}^{s+3, t+2}(\Z/2, \Z/2)
    \end{equation}
    which is induced from the Postnikov cover map $\tau_{\ge 4}\ko\to\ko$.
    \item \label{cQ_defn}
    There is a functor $\mathcal Q^{*,*}$ from $\cA(1)$-modules to $\Z^2$-graded $\Ext(\Z/2)$-modules which commutes with direct sums and such that for all $\cA(1)$-modules $M$, there is a long exact sequence
    \begin{equation}
    \label{cQLES}
    \dotsb\to \Ext_{\cA(1)}^{s,t}(\uQ\otimes M, \Z/2) \overset{g_4}{\longrightarrow}
            \Ext_{\cA(1)}^{s+3, t+2}(M, \Z/2)
    \longrightarrow
    \mathcal Q^{s,t}(M) \longrightarrow
    \Ext_{\cA(1)}^{s+1,t}(\uQ\otimes M, \Z/2)\overset{g_4}{\to}\dotsb
    \end{equation}
    \item\label{HASS_defn}
    Let $X$ be a space of finite type,\footnote{The finite-type hypothesis appears for technical reasons and holds in all circumstances one might reasonably encounter in mathematical physics.} $s\in H^1(X;\Z/2)$, and $\omega\in H^2(X;\Z/2)$. Then the HASS for $\tau_{\le 2}\ko_*(X, s, \omega)$ converges strongly and has signature
    \begin{equation}
        E_2^{s,t} = \mathcal Q^{s,t}(H_{s,\omega}^*(X; \Z/2)) \Longrightarrow \tau_{\le 2}\ko_{t-s}(X, s, \omega)_2^\wedge.
    \end{equation}
    The map $\ko_*(X, s, \omega)\to\tau_{\le 2}\ko_*(X, s, \omega)$ lifts to a map from the ordinary Adams spectral sequence to the HASS.
\end{enumerate}
\end{thm}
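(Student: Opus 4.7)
The plan is to deduce all three parts by applying the general hastened Adams spectral sequence framework of Belmont--Roitzheim~\cite{BR21} to the cofiber sequence $\tau_{\ge 4}\ko \to \ko \to \tau_{\le 2}\ko$ of spectra, with the key input being an identification of the mod $2$ cohomology of these spectra as modules over the Steenrod algebra $\cA$. First I would invoke Milnor's theorem to get $H^*(\ko;\Z/2)\cong\cA\otimes_{\cA(1)}\Z/2$, and then use the long exact sequence on cohomology together with the connectivity data (that $\tau_{\ge 4}\ko$ is $3$-connected with $\pi_4\cong\Z$ and that $\tau_{\le 2}\ko$ is $2$-truncated) to pin down $H^*(\tau_{\le 2}\ko;\Z/2)$ and $H^*(\tau_{\ge 4}\ko;\Z/2)$. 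A direct low-degree calculation using Adem relations yields $H^*(\tau_{\ge 4}\ko;\Z/2)\cong\cA\otimes_{\cA(1)}\Sigma^4\uQ$ with $\uQ=\cA(1)/(\Sq^1,\Sq^2\Sq^3)$; equivalently one can extract this from the dualized Anderson--Brown--Peterson calculation~\cite{ABP67} already cited in the paper.

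For part (1), I would apply Shapiro's lemma $\Ext_\cA^{s,t}(\cA\otimes_{\cA(1)} L, \Z/2) \cong \Ext_{\cA(1)}^{s,t}(L, \Z/2)$ to identify the Adams $E_2$-pages of $\ko$ and $\tau_{\ge 4}\ko$ as $\Ext_{\cA(1)}(\Z/2,\Z/2)$ and, up to the $\Sigma^4$ internal shift, $\Ext_{\cA(1)}(\uQ,\Z/2)$. The Postnikov cover map induces a map of $\cA$-modules on cohomology, hence a map on Ext; after absorbing the internal shift coming from the bottom cell of $\tau_{\ge 4}\ko$ into the bigrading, this produces the advertised $g_4\colon \Ext_{\cA(1)}^{s,t}(\uQ,\Z/2)\to\Ext_{\cA(1)}^{s+3,t+2}(\Z/2,\Z/2)$, with $\Ext(\Z/2)$-linearity following from functoriality of Yoneda composition.

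For part (2), I would define $\mathcal Q^{*,*}(M)$ either as the Ext of the middle term of the short exact sequence of $\cA(1)$-modules encoding the cofiber sequence tensored with $M$, or equivalently as the mapping cone of $g_4\otimes\mathrm{id}_M$ at the level of free resolutions. The long exact sequence in the theorem is then the standard long exact sequence of $\Ext_{\cA(1)}(-\otimes M,\Z/2)$ applied to that short exact sequence, with $g_4\otimes\mathrm{id}_M$ as the connecting map; additivity in $M$ follows from additivity of tensor product and Ext. For part (3), I would apply the Belmont--Roitzheim construction~\cite[\S 12]{BR21} directly to the map $\ko\to\tau_{\le 2}\ko$, with the Adams input for the twisted version obtained by replacing $H^*(X;\Z/2)$ with the twisted $\cA(1)$-module $H^*_{s,\omega}(X;\Z/2)$ throughout. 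Strong convergence is inherited from the ordinary Adams spectral sequence for $\ko$ via the $2$-truncation of the target, and the map from the ordinary Adams spectral sequence is a functorial output of that construction.

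The hard part will be the precise identification of $H^*(\tau_{\ge 4}\ko;\Z/2)$ together with the bookkeeping that turns the $\Sigma^4$ internal shift into the specific bidegree change $(s,t)\mapsto (s+3,t+2)$ for $g_4$; these are sensitive to indexing conventions and not manifest from the cofiber sequence alone. Verifying that the twisted Steenrod actions $\Sq^1_s$ and $\Sq^2_{s,\omega}$ from~\eqref{twisted_Sq} satisfy the algebraic hypotheses needed to invoke~\cite{BR21} in the twisted setting is also nontrivial, and is the main technical content I would defer to the companion paper~\cite{DYY2}. Once both of these are in hand, the rest is a formal application of change-of-rings and the general HASS machinery.
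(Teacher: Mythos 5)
The paper does not actually prove this theorem: it is imported wholesale from Bruner--Rognes~\cite[Proposition 12.33]{BR21} (note that \texttt{BR21} is Bruner--Rognes, not Belmont--Roitzheim) and from the companion paper~\cite{DYY2}, so there is no in-paper argument to compare against. Your overall architecture --- the cofiber sequence $\tau_{\ge 4}\ko\to\ko\to\tau_{\le 2}\ko$, the identification $H^*(\tau_{\ge 4}\ko;\Z/2)\cong\cA\otimes_{\cA(1)}\Sigma^4\uQ$ from the Anderson--Brown--Peterson calculation, change of rings, $\mathcal Q$ as a cone on $g_4$, and the Bruner--Rognes machinery with the twisted $\cA(1)$-module $H^*_{s,\omega}(X;\Z/2)$ as input for part (3) --- is the right one and matches what those references do.

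There is, however, a genuine gap in your construction of $g_4$ in part (1). The Postnikov cover map $f\colon\tau_{\ge 4}\ko\to\ko$ induces the \emph{zero} map on mod $2$ cohomology: $H^*(\ko;\Z/2)\cong\cA\otimes_{\cA(1)}\Z/2$ is cyclic on a degree-$0$ generator $u$, and $f^*(u)\in H^0(\tau_{\ge 4}\ko;\Z/2)=0$ because $\tau_{\ge 4}\ko$ is $3$-connected. So ``functoriality of Yoneda composition applied to the induced map of $\cA$-modules'' produces $g_4=0$, which is not the map in the theorem; moreover, no map of modules could produce the shift $s\mapsto s+3$, since induced maps on $\Ext$ preserve homological degree, and suspensions only shift $t$. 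The actual source of the shift --- and the reason the spectral sequence is ``hastened'' --- is that $f$ has Adams filtration $3$: its bottom cell detects the generator of $\pi_4\ko\cong\Z$, which is detected by the class $\alpha\in\Ext_{\cA(1)}^{3,7}(\Z/2,\Z/2)$. Consequently $f$ lifts three stages up the Adams tower of $\ko$, and $g_4$ is the resulting filtration-raising map of $E_2$-pages, equivalently Yoneda composition with the $\Ext^3$-class detecting $f$. The same misconception undermines your first proposed definition of $\mathcal Q$ in part (2): there is no short exact sequence of $\cA(1)$-modules encoding the cofiber sequence ($H^*(\tau_{\le 2}\ko;\Z/2)$ is not induced up from $\cA(1)$, which is precisely why a hastened spectral sequence is needed at all), and if there were, its connecting homomorphism would have bidegree $(1,0)$ rather than the stated shift. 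Your two definitions of $\mathcal Q$ are therefore not equivalent; only the mapping-cone description is correct. With $g_4$ constructed as above, the remainder of your outline for parts (2) and (3), including deferring the twisted Adem relations and convergence to~\cite{DY23,DYY2}, goes through.
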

Because $\mathcal Q$ commutes with direct sums and fits into the sequence~\eqref{cQLES}, it is straightforward to compute it on $\cA(1)$-modules of interest. In~\cite{DYY2}, we compute $\mathcal Q$ on many common $\cA(1)$-modules, and we use this to compute the $E_2$-pages of the HASSes we use below. Once we have done this, running the HASS is just as in the usual Adams spectral sequence.

\subsection{Example: $\SH^5(B\Z/2)$}
We first compute $\SH^5(B\Z/2)$, which we use in \cref{ex:cyclic}. The $\Z/2$ cohomology ring of $B\Z/2$ is given by
\begin{equation}\label{Z2Z2coh}
    H^*(B\Z/2; \bZ/2) = \Z/2[x], \quad |x| = 1 
\end{equation}
where $x$ is the nontrivial generator of $H^1(B\Z/2; \bZ/2)$.

\begin{prop}\label{prop:Z2notwist}
The group $\SH^\ell(B\Z/2)=0$ for $\ell = 4,5$.
\end{prop}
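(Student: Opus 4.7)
The plan is to run the Atiyah--Hirzebruch spectral sequence for $\SH^*(B\Z/2)$ in total degrees $4$ and $5$, using the differential formulas~\eqref{dform} with $s=\omega=0$, and verify that every potentially surviving entry is killed by a $d_2$. Since the twists are trivial here, $\Sq^2_{s,\omega}$ collapses to the ordinary Steenrod square $\Sq^2$, and the cohomology $H^*(B\Z/2;\Z/2) = \Z/2[x]$ from \eqref{Z2Z2coh} together with $H^i(B\Z/2;\C^\times)\cong H^{i+1}(B\Z/2;\Z)$ (which is $\Z/2$ in odd positive degree and $0$ in positive even degree) makes the $E_2$-page easy to write down.

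First I would identify the three entries contributing to each total degree. For total degree $4$, these are $E_2^{4,0}=0$, $E_2^{3,1}=\Z/2\langle x^3\rangle$, and $E_2^{2,2}=\Z/2\langle x^2\rangle$. For total degree $5$, they are $E_2^{5,0}\cong\Z/2$ (the image of $x^5$ under $\Z/2\to\C^\times$), $E_2^{4,1}=\Z/2\langle x^4\rangle$, and $E_2^{3,2}=\Z/2\langle x^3\rangle$.

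Next I compute the outgoing $d_2$s using $\Sq^2 x^n = \binom{n}{2}x^{n+2}$, which gives $\Sq^2 x^2 = x^4$ and $\Sq^2 x^3 = x^5$. Then \eqref{dform} produces
\begin{equation}
d_2(x^2) = x^4, \qquad d_2(x^3)=(-1)^{x^5}, \qquad d_2(x^3) = x^5,
\end{equation}
for the differentials $E_2^{2,2}\to E_2^{4,1}$, $E_2^{3,1}\to E_2^{5,0}$, and $E_2^{3,2}\to E_2^{5,1}$, respectively. Each is an isomorphism $\Z/2\to\Z/2$. Thus all six entries contributing to $\SH^4$ and $\SH^5$ vanish on $E_3$.

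Finally I would check that no higher differential or extension problem resurrects anything. The only $d_r$ that could land in or leave this range for $r\ge 3$ is $d_3\colon E_3^{i,2}\to E_3^{i+3,0}$ from $(2,2)$ or $(3,2)$, but both sources and both targets are already zero by the previous step. With all relevant $E_\infty$ positions zero, the filtration is trivial and $\SH^4(B\Z/2)=\SH^5(B\Z/2)=0$. There is essentially no obstacle: the computation is mechanical once \eqref{dform} and the binomial coefficients are invoked, and no recourse to the hastened Adams spectral sequence is required.
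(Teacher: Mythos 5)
Your proposal is correct and follows essentially the same route as the paper: both run the Atiyah--Hirzebruch spectral sequence with the untwisted $d_2$ given by $\Sq^2$ and $(-1)^{\Sq^2}$, observe that every entry in total degrees $4$ and $5$ either supports or receives an isomorphic $d_2$, and conclude that the $E_3$-page (hence $E_\infty$) vanishes there with no room for higher differentials or extensions. Your explicit bookkeeping with $\Sq^2 x^n = \binom{n}{2}x^{n+2}$ matches the paper's $E_3$-page computation exactly.
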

For $\ell = 4$ this is due to Décoppet~\cite[Example 4.13]{D11}; for $\ell = 5$ this is new.
\begin{proof}
The AHSS which computes this group has the following $E_2$-page:
\begin{equation}\label{eq:SHZ2}
E^{i,j}_2=\begin{array}{c|cccccccc}
     j\\
     \\
     2 & 1 & x & x^2 & x^3 &  \ldots \\
     1 & 1 & x & x^2 & x^3  & x^4 & \ldots  \\
     0 & \mathbb{C}^\times & (-1)^x & 0 & (-1)^{x^3}& 0 & (-1)^{x^5} &0 & \ldots \\
     \hline 
     & 0 & 1 & 2 & 3 & 4 & 5 & 6 & i\,
\end{array}
\end{equation}
The $d_2$ differentials are given by:
\begin{subequations}
\begin{align}\label{untwistedd2}
    d_2 &: E^{i,2}_2  \to E^{i+2,1}_2 && X \mapsto \Sq^2 X \,, \\ \label{seconduntwisted2}
     d_2 &: E^{i,1}_2 \to E^{i+2,0}_2 && X \mapsto (-1)^{\Sq^2 X} \,.
\end{align}
\end{subequations}
After resolving the $d_2$ differential, the $E_3$-page is given as follows:
\begin{equation}\label{eq:E3SH_Z2}
E^{i,j}_3=\begin{array}{c|cccccccc}
     j\\
     \\
     2 & 1 & x & 0 & 0 &  \ldots \\
     1 & 1 & x & x^2 & 0 & 0 &\ldots  \\
      0 & \mathbb{C}^\times & (-1)^x & 0 & (-1)^{x^3} & 0 & 0 & 0 & \ldots \\
     \hline 
     & 0 & 1 & 2 & 3 & 4 & 5 & 6 & i\,
\end{array}
\end{equation}
In particular, $\SH^\ell(B\Z/2)$ is trivial for $\ell =4,5$.
\end{proof}
See Wang--Gu~\cite[Table II]{Wang:2017moj}, Gaiotto--Johnson-Freyd~\cite[\S 4]{GJF22}, and Yu~\cite[\S 2.8]{Yu21} for $\SH^\ell(B\Z/2)$ when $\ell<4$.
\subsection{Example: $\SH^5(B\Z/2^k), k\geq 2$}
We now compute  $\SH^5(B\Z/2^k)$ for $k\geq 2$. This is also relevant in Example \ref{ex:cyclic}. The $\Z/2$ cohomology ring of $B\Z/2^k$ is given by
\begin{equation}
    H^*(B\Z/2^k; \bZ/2) = \Z/2[x, y]/(x^2), \quad |x| = 1, |y| = 2. 
\end{equation}

\begin{prop}\label{prop:2knotwist}
    For $k\geq 2$, the group $\SH^5(B\Z/2^k) \cong \Z/2^{k-1}$, with generator in the Dijkgraaf--Witten layer.
\end{prop}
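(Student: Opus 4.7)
The plan is to compute $\SH^5(B\Z/2^k)$ via the Atiyah--Hirzebruch spectral sequence described around \eqref{eq:layout}, reading off entries from the mod $2$ cohomology $H^*(B\Z/2^k;\Z/2)=\Z/2[x,y]/(x^2)$ with $|x|=1,|y|=2$, and from $H^5(B\Z/2^k;\C^\times)\cong H^6(B\Z/2^k;\Z)\cong \Z/2^k$. The three columns of total degree $5$ contribute
\begin{equation*}
E_2^{5,0} \cong \Z/2^k, \qquad E_2^{4,1}=\langle y^2\rangle\cong\Z/2, \qquad E_2^{3,2}=\langle xy\rangle\cong\Z/2.
\end{equation*}
The goal is to show that all of $E_2^{4,1}$ and $E_2^{3,2}$ die by $d_2$, and that exactly a $\Z/2$ subgroup of $E_2^{5,0}$ is hit by $d_2$ from total degree $4$, leaving $\Z/2^{k-1}$ in the Dijkgraaf--Witten row.

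To prepare for the differentials I would record the relevant Steenrod squares on $B\Z/2^k$ for $k\ge 2$. Because $x\in H^1(B\Z/2^k;\Z/2)$ lifts mod $4$, $\Sq^1 x=0$, and $\Sq^1 y=0$ as $y$ is the mod-$2$ reduction of an integral class; and $\Sq^2 x=0$ by degree. This gives $\Sq^2 y=y^2$, $\Sq^2(xy)=x\Sq^2 y = xy^2$, and $\Sq^2(y^2)=2y^3=0$. Then using the formulas \eqref{dform} (with $s=\omega=0$), the incoming/outgoing $d_2$'s at total degree $5$ read off directly: $d_2\colon E_2^{2,2}=\langle y\rangle \to E_2^{4,1}$ sends $y\mapsto y^2$, killing $E_2^{4,1}$; and $d_2\colon E_2^{3,2}=\langle xy\rangle \to E_2^{5,1}=\langle xy^2\rangle$ sends $xy\mapsto xy^2$, killing $E_2^{3,2}$.

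The key computation is the differential $d_2\colon E_2^{3,1}=\langle xy\rangle\to E_2^{5,0}$, which sends $xy$ to $(-1)^{\Sq^2(xy)}=(-1)^{xy^2}$. The class $xy^2$ is the unique nonzero element of $H^5(B\Z/2^k;\Z/2)=\Z/2$, and its image under the coefficient map $\Z/2\hookrightarrow \C^\times$ is precisely the unique element of order $2$ in $\Z/2^k\cong H^5(B\Z/2^k;\C^\times)$, i.e.\ $2^{k-1}$ times a generator. Therefore the image of $d_2$ is the subgroup $\Z/2\subset\Z/2^k$, giving $E_3^{5,0}\cong\Z/2^{k-1}$, still generated by (the class of) the element pulled back from $\beta^3\in H^6(B\Z/2^k;\Z)$ in the Dijkgraaf--Witten layer.

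Finally I would verify that no higher differentials can alter the answer at total degree $5$. Differentials $d_r$ with $r\ge 4$ vanish because the spectral sequence has only three nonzero rows ($j=0,1,2$). For $d_3$, an outgoing differential would need a nonzero source at $(3,2)$, which was already killed by $d_2$; an incoming differential would need a nonzero source at $(2,2)$, but $d_2$ kills $E_2^{2,2}$ via $y\mapsto y^2$. Hence $E_\infty=E_3$, only one row contributes to total degree $5$, so there is no extension problem, and $\SH^5(B\Z/2^k)\cong \Z/2^{k-1}$ with generator in the Dijkgraaf--Witten layer, as claimed. I do not anticipate any real obstacles here; the only subtle step is correctly identifying $d_2(xy)\in E_2^{5,0}$ as the $2$-torsion element of $\Z/2^k$, which I handle via the naturality of the coefficient map $\Z/2\to\C^\times$.
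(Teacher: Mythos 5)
Your proof is correct and follows essentially the same route as the paper: the untwisted Atiyah--Hirzebruch spectral sequence with $d_2$ given by $\Sq^2$ and $(-1)^{\Sq^2}$, killing $E_2^{4,1}$ and $E_2^{3,2}$ and cutting $E_2^{5,0}\cong\Z/2^k$ down to $\Z/2^{k-1}$ via $d_2(xy)=(-1)^{xy^2}$. You in fact supply more detail than the paper (which only displays the $E_2$- and $E_3$-pages), and your justifications of the Steenrod squares, of the nonvanishing of $(-1)^{xy^2}$, and of the collapse at $E_3$ are all sound.
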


\begin{proof}
The $E_2$-page of the AHSS that compute this group is given as follows:
\begin{equation}\label{eq:SHZ2k}
E^{i,j}_2=\begin{array}{c|cccccccc}
     j\\
     \\
     2 & 1 & x & y & xy &  \ldots \\
     1 & 1 & x & y & xy  & y^2  &\ldots  \\
      0 & \mathbb{C}^\times & \Z/2^k & 0 & \Z/2^k & 0 & \Z/2^k &0 & \ldots \\
     \hline 
     & 0 & 1 & 2 & 3 & 4 & 5 & 6 & i\,
\end{array}
\end{equation}
with the same $d_2$ differential as in Equations \eqref{untwistedd2} and \eqref{seconduntwisted2}. After resolving the $d_2$ differential, the $E_3$-page is given as follows:
\begin{equation}\label{eq:E3SH_Z4}
E^{i,j}_3=\begin{array}{c|cccccccc}
     j\\
     \\
     2 & 1 & x & 0 & 0 &\ldots \\
     1 & 1 & x & y & 0 & 0 &\ldots  \\
      0 & \mathbb{C}^\times & \Z/2^k & 0 & \Z/2^k & 0 & \Z/2^{k-1} &0 & \ldots \\
     \hline 
     & 0 & 1 & 2 & 3 & 4 & 5 & 6 & i\,
\end{array}
\end{equation}
We see that $\SH^5(B\Z/2^k) = \Z/2^{k-1}, k\geq 2$ and the whole group is in the Dijkgraaf--Witten layer.
\end{proof}
See Wang--Gu~\cite[Table II]{Wang:2017moj} for $\SH^\ell(B\Z/2^k)$ when $\ell<4$ and Décoppet~\cite[Example 4.13]{D11} for $\ell = 4$. Specifically, Wang--Gu's work resolves the extension question in total degree $3$ in~\eqref{eq:E3SH_Z4}, which we will need to use later in this article.
\begin{prop}[{Wang--Gu~\cite[Table II]{Wang:2017moj}}]
\label{prop:Zknotwist3}
For $k\ge 2$, $\SH^3(B\Z/2^k)\cong \Z/2^{k+1}\oplus\Z/2$.
\end{prop}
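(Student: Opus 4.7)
The plan is to combine the Atiyah--Hirzebruch spectral sequence (AHSS) bookkeeping from~\eqref{eq:E3SH_Z4} with the hastened Adams spectral sequence (HASS) of \cref{cQthm} applied to $\tau_{\le 2}\ko_2(B\Z/2^k)$, whose Pontryagin dual is $\SH^3(B\Z/2^k)$.

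First I would read off from the $E_3$-page~\eqref{eq:E3SH_Z4} the three entries in total degree $3$: the Majorana class $x \in E_3^{1,2}\cong \Z/2$, the Gu--Wen class $y \in E_3^{2,1}\cong\Z/2$, and the Dijkgraaf--Witten piece $E_3^{3,0}\cong\Z/2^k$. The only possibly nonzero higher differential affecting them is $d_3\colon E_3^{0,2}\to E_3^{3,0}$ (all other potential $d_3$ targets vanish on sparseness grounds). If $d_3$ vanishes, the $E_\infty$-page in total degree $3$ has order $2^{k+2}$; otherwise the order is $2^{k+1}$. The AHSS alone settles neither this nor the extension problem between the Majorana, Gu--Wen, and Dijkgraaf--Witten layers, so I would turn to the HASS.

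Next I would compute the HASS $E_2$-page $\mathcal{Q}^{*,*}(H^*(B\Z/2^k;\Z/2))$ using the long exact sequence~\eqref{cQLES}. This requires decomposing the $\cA(1)$-module $\Z/2[x,y]/(x^2)$ through total degree $\lesssim 5$ into familiar summands for which both $\Ext_{\cA(1)}(M,\Z/2)$ and $\Ext_{\cA(1)}(\uQ\otimes M,\Z/2)$ are tabulated -- this is precisely what the companion paper~\cite{DYY2} sets up -- and then assembling the resulting $\mathcal Q$-long exact sequences. In stem $2$ I expect a single $h_0$-tower of height $k+1$ (built from the Bockstein tower in $H^*(B\Z/2^k;\Z_{})$) together with an isolated $h_0$-torsion class detecting the Majorana layer. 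Sparseness in low stems should rule out any nontrivial HASS differentials entering or leaving the relevant bidegrees, so $\tau_{\le 2}\ko_2(B\Z/2^k)\cong \Z/2^{k+1}\oplus \Z/2$, and $\C^\times$-duality yields the claimed isomorphism. As a byproduct, matching orders would confirm that the AHSS $d_3$ on $E_3^{0,2}$ was in fact trivial.

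The main obstacle is the $\mathcal Q$-computation, and specifically choosing the $\cA(1)$-decomposition so that the height-$(k+1)$ $h_0$-tower remains intact in a single summand; a careless decomposition can fracture the tower across several summands and obscure the extension. A more hands-on alternative, closer to the Wang--Gu approach~\cite{Wang:2017moj}, is to write explicit cocycle representatives $(a,b,c)$ for each layer and check directly that the $k$-fold Bockstein of the Gu--Wen generator produces a Dijkgraaf--Witten class of order $2$, thereby producing the $\Z/2^{k+1}$ summand and splitting off the Majorana class additively without invoking the HASS at all.
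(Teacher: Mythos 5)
First, note that the paper does not actually prove this proposition: it is quoted from Wang--Gu's Table II, and the only input beyond the $E_3$-page~\eqref{eq:E3SH_Z4} (whose total-degree-$3$ entries $\Z/2^k$, $\Z/2\cdot y$, and $\Z/2\cdot x$ already have the correct total order $2^{k+2}$) is the resolution of the extension, which the paper justifies later, in the proof of \cref{h_in_C11}, by citing Gu--Wen's computation $\mathit{rSH}^3(B\Z/2^\ell)\cong\Z/2^{\ell+1}$. Your second, ``hands-on'' alternative is therefore the closest in spirit to what the paper actually relies on, and it is the right idea: the nonsplit extension is between the Dijkgraaf--Witten and Gu--Wen layers, while the Majorana $\Z/2$ splits off.

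Your primary (HASS) route has two concrete problems. First, a degree error: since $\SH = I_{\C^\times}(\tau_{\le 2}\ko)$ with no degree shift (compare the proof of \cref{prop:Z2x2twist}, where $(\tau_{\le 2}\ko)_5$ dualizes to $\SH^5$), the group $\SH^3(B\Z/2^k)$ is Pontryagin dual to $(\tau_{\le 2}\ko)_3(B\Z/2^k)$, not to $(\tau_{\le 2}\ko)_2$; stem $2$ computes $\SH^2$, a group of order $4$. Second, and more seriously, the claim that sparseness rules out all HASS differentials is false, and the height-$(k+1)$ tower is not present on the $E_2$-page. Using the $\cA(1)$-splitting $\widetilde H^*(B\Z/2^k;\Z/2)\cong\Sigma\Z/2\oplus\Sigma^2 C\eta\oplus\Sigma^3 C\eta\oplus\overline F$, the $\Sigma^3 C\eta$ summand contributes an \emph{infinite} $h_0$-tower to stem $3$ of the $E_2$-page; the mod $2$ cohomology of $B\Z/2^k$ is independent of $k$, so the order $2^{k+1}$ cannot be read off the $E_2$-page and is created only by a $d_k$ differential out of the stem-$4$ tower coming from $\Sigma^2 C\eta$, exactly as in \cref{first_HASS}. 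Finally, even after the $E_\infty$-page is a height-$(k+1)$ tower plus an isolated class, you must still rule out a hidden $2$-extension joining the isolated Majorana class to the tower (which would give $\Z/2^{k+2}$); this is precisely where a comparison with $\mathit{rSH}$, or your explicit cocycle alternative, is genuinely needed rather than optional.
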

\subsection{Example: $\SH^5(B\Z/2,0,x^2)$}\label{subsub:twistedappendix}
We now combine the hastened Adams spectral sequence and the Atiyah--Hirzebruch spectral sequence to compute $\SH^5(B\Z/2,0,x^2)$ and determine the filtration of its generators. We will use these results in \cref{ex:cyclicfermion}. Recall $H^*(B\Z/2;\Z/2)$ from~\eqref{Z2Z2coh}.

\begin{prop}\label{prop:Z2x2twist}
$\SH^\ell(B\Z/2,0,x^2)$ is isomorphic to $\Z/2$ for $\ell = 4$ and $\Z/8$ for $\ell = 5$. In the latter case, there is a generator of the $\Z/8$ residing in the Majorana layer.
\end{prop}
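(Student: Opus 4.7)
The proof has two phases: first computing the $E_\infty$-pages of the AHSS for $\SH^\ell(B\Z/2, 0, x^2)$ at $\ell = 4, 5$, and then using the hastened Adams spectral sequence to resolve both potential higher differentials and the extension problem. The $E_2$-page has the shape of \eqref{eq:SHZ2}, but with $d_2$ now governed by the twisted Steenrod square $\Sq^2_{0, x^2}(\alpha) = \Sq^2(\alpha) + x^2 \cdot \alpha$. Using $\Sq^2(x^n) = \binom{n}{2} x^{n+2} \pmod 2$, this twisted square vanishes exactly when $n \equiv 2, 3 \pmod 4$. Tracking the resulting $d_2$'s on both rows $j = 1, 2$, one finds that in total degree $4$ only the Majorana class $x^2 \in H^2(B\Z/2; \Z/2)$ survives to $E_3$, while in total degree $5$ three classes survive on $E_3$: the Majorana class $x^3$, the Gu--Wen class $x^4$, and the Dijkgraaf--Witten class $(-1)^{x^5}$. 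This already gives upper bounds on both groups.

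To turn these into actual group computations I would run the HASS of \cref{cQthm} for $(\tau_{\le 2}\ko)_*(B\Z/2, 0, x^2)$, whose Pontryagin dual computes $\SH^*(B\Z/2, 0, x^2)$. The input is the $\cA(1)$-module $H^*_{0, x^2}(B\Z/2; \Z/2)$, where $\Sq^1$ acts as usual ($\Sq^1(x^{2k+1}) = x^{2k+2}$) and $\Sq^2$ acts as $\Sq^2 + x^2$. Decompose this $\cA(1)$-module into cyclic pieces and compute $\mathcal{Q}^{*,*}$ summand by summand via the long exact sequence~\eqref{cQLES}. In the stem converging to degree $5$ one should find exactly one $h_0$-tower of length three (and no other classes in that degree), yielding $\Z/8$ with no room for further extensions; in the stem converging to degree $4$, exactly one $h_0$-invariant class, yielding $\Z/2$. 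This simultaneously pins down the two groups and forces the a priori possibly nonzero $d_3\colon E_3^{2, 2} \to E_3^{5, 0}$ in the AHSS to vanish.

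For the Majorana layer claim: in the AHSS the filtration $F^p \supset F^{p+1}$ has $F^p/F^{p+1} \cong E_\infty^{p, 5-p}$, so in a $\Z/8$ whose $E_\infty$ consists of $\Z/2$'s at positions $(3, 2)$, $(4, 1)$, $(5, 0)$, the generator $g$ must project nontrivially to the top quotient $F^3/F^4 \cong E_\infty^{3, 2}$, which is precisely the Majorana entry (the two doubles $2g$ and $4g$ being responsible for the Gu--Wen and Dijkgraaf--Witten summands, respectively). Under the comparison between the HASS and the AHSS, the base of the length-three $h_0$-tower sits in the Adams filtration matching $q = 2$, so the generator lies in the Majorana layer as claimed. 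The main obstacle is the careful $\cA(1)$-module decomposition of $H^*_{0, x^2}(B\Z/2; \Z/2)$: because $\Sq^2$ and multiplication by $x^2$ mix into a staircase pattern (e.g.\ $\Sq^2_{0, x^2}(1) = x^2$ while $\Sq^1(x) = x^2$), identifying the correct cyclic summands before applying \eqref{cQLES} requires the combinatorial techniques developed in the companion paper~\cite{DYY2}; once done, the $h_0$-tower of length three appears immediately and the extension is resolved.
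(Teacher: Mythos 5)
Your overall strategy is the same as the paper's: compute the $E_3$-page of the AHSS with the twisted differential $\Sq^2 + x^2$ (your $E_3$-page in total degrees $4$ and $5$ matches the paper's exactly), then pass to the hastened Adams spectral sequence to pin down the group orders and resolve the extension, and finally read off the Majorana-layer claim from the fact that the generator of a cyclic $\Z/8$ must project nontrivially onto the lowest-filtration quotient $E_\infty^{3,2}$ of the AHSS filtration. That last filtration argument is correct and is essentially what the paper does.

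There is, however, one genuine gap. Having produced the $E_2$-page of the HASS --- one class $\mu_2$ in topological degree $4$ at filtration $2$, and a length-three $h_0$-tower based at $m_2$ in topological degree $5$ at filtration $0$ --- you assert the answer is $\Z/2$ and $\Z/8$ ``with no room for further extensions.'' But there \emph{is} room for a differential: the Adams-type $d_2$ goes from $(t-s,s)=(5,0)$ to $(4,2)$, i.e.\ $d_2(m_2)=\mu_2$ is a priori possible, and if it were nonzero you would instead get $\SH^4(B\Z/2,0,x^2)=0$ and $\SH^5(B\Z/2,0,x^2)$ of order $4$. Saying there are ``no other classes in that degree'' does not rule this out, since the source and target of a $d_2$ live in adjacent topological degrees. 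The paper kills this differential by naturality: $m_2$ is in the image of the map of spectral sequences induced by $\ko\to\tau_{\le 2}\ko$, and in the Adams spectral sequence for the corresponding twist of $\ko$ Campbell showed the analogous $d_2$ vanishes. Your proof needs this (or some substitute) to be complete. A smaller point: $H^*_{0,x^2}(B\Z/2;\Z/2)$ does \emph{not} split as a direct sum of cyclic $\cA(1)$-modules in the relevant range --- it is the single connected module $R_1$ of Campbell --- so the ``decompose into cyclic pieces and compute summand by summand'' step cannot be carried out literally here; one must compute $\mathcal Q(R_1)$ for the whole module, as is done in the companion paper.
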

The case $\ell = 4$ verifies a prediction of Décoppet~\cite[Example 4.13]{D11}.
\begin{proof}
The $E_2$-page is the same as Equation \eqref{eq:SHZ2}, with the twisted $d_2$ differentials given by
\begin{subequations}
\begin{align}
   d_2 &: E^{i,2}_2  \to E^{i+2,1}_2 && X \mapsto \Sq^2 X+x^2 X \,, \\ 
     d_2 &: E^{i,1}_2 \to E^{i+2,0}_2 \quad&& X \mapsto (-1)^{\Sq^2 X + x^2 X} \,.
\end{align}
\end{subequations}
After resolving the $d_2$ differentials, the $E_3$-page is given as follows:
\begin{equation}\label{eq:SH}
E^{i,j}_3=\begin{array}{c|cccccccc}
     j\\
     \\
     2 & 0 & 0 & x^2 & x^3 &  \ldots \\
     1 & 1 & 0 & 0 & 0 & x^4 &\ldots  \\
      0 & \mathbb{C}^\times & (-1)^x & 0 & 0 & 0 & (-1)^{x^5} &0 & \ldots \\
     \hline 
     & 0 & 1 & 2 & 3 & 4 & 5 & 6 & i\,
\end{array}
\end{equation}

As we do not know about potential $d_3$ differentials or extensions in this spectral sequence, we turn to the hastened Adams spectral sequence. In this and future HASS arguments, we assume some background with the ordinary Adams spectral sequence; Beaudry--Campbell's article~\cite{BC18} is an excellent introduction covering everything we assume.

In Figure \ref{fig1:HASS}, left, we display the $\cA(1)$-module $R_1\coloneqq H^*_{0,x^2}(B\Z/2;\Z/2)$, which was calculated by Campbell~\cite[Figure 7.2]{Cam17}. By \cref{cQthm}, part~\ref{HASS_defn}, the $E_2$-page of the HASS for $(B\Z/2,0,x^2)$-twisted $\tau_{\le 2}\ko$-homology is $\mathcal Q^{*,*}(H^*_{0,x^2}(B\Z/2; \Z/2))$. Using the long exact sequence~\eqref{cQLES}, we calculate this $E_2$-page in~\cite{DYY2}, and we display it in \cref{fig1:HASS}, right. In degree 5, there is  room for a $d_2$ differential $d_2(m_2) = \mu_2$. In fact, though, this differential vanishes, because $m_2$ is in the image of the map of Adams spectral sequences induced by $\ko\to\tau_{\le 2}\ko$~\cite{DYY2}, and in the Adams spectral sequence for the corresponding twist of $B\Z/2$ over $\ko$, $d_2(m_2) = 0$~\cite[\S 7.8]{Cam17}.
Therefore on the $E_\infty$-page we have  $(\tau_{\le 2}\ko)_4(B\Z/2,0,x^2) = \Z/2$ and $(\tau_{\le 2}\ko)_5(B\Z/2,0,x^2) = \Z/8$. The corresponding twisted supercohomology is the Pontryagin dual group.
Thus the $d_3$ mentioned previously in the AHSS vanishes. 

\begin{figure}[H]
\begin{subfigure}[c]{0.4\textwidth}
\begin{tikzpicture}[scale=0.4]
	\foreach \y in {0, ..., 8} {
		\tikzpt{0}{\y}{}{};
	}
	\foreach \y in {1, 3, 5, 7} {
		\sqone(0, \y);
	}
	\sqtwoL(0, 0);
	\sqtwoL(0, 4);
	\sqtwoR(0, 1);
	\sqtwoR(0, 5);
	\begin{scope}
		\clip (-1, 6.8) rectangle (1, 8.5);
		\sqtwoL(0, 8);
	\end{scope}
\end{tikzpicture}
\end{subfigure}
\begin{subfigure}[c]{0.4\textwidth}
\begin{sseqdata}[name=QR1, classes = fill, scale=0.4, xrange={0}{8}, yrange={0}{4},
x tick step = 2,
y tick step = 2,
x tick gap = 0.25cm,
y tick gap = 0.25cm,
x axis tail = 0.2cm,
y axis tail = 0.2cm,
y axis clip padding = 0cm,
x axis clip padding = 0cm,
class labels = {below = 0.07em, font=\tiny },
]
\class["\mu_1"](0, 0)\AdamsTower{}
\class(1, 1)\structline(0, 0)(1, 1)
\class["m_1"](1, 0)\structline

\class["\mu_2"](4, 2)
\class["m_2"](5, 0)
\class(5, 1)\structline
\class(5, 2)\structline
\class["m_1'"](8, 2)
\end{sseqdata}
\printpage[name=QR1, page=2]
\end{subfigure}
\caption{Left: The $\cA(1)$-module structure on $R_1 \cong H_{0,x^2}^*(B\Z/2; \Z/2)$~\cite[Figure 7.2]{Cam17}. Right: $\mathcal{Q}(R_1)$, computed in~\cite{DYY2}.}
\label{fig1:HASS}
\end{figure}
Combining with the Atiyah--Hirzebruch spectral sequence, we see that the generator of $\Z/8$ lies in the Majorana layer. 
\end{proof}
See Wang--Gu~\cite[Table III]{PhysRevX.10.031055} and Zhang--Wang--Yang--Qi--Gu~\cite{ZWYQG20} for $\SH^\ell(B\Z/2, 0, x^2)$ for $\ell<4$.
\subsection{Example: $\SH^5(B\Z/2^k,0,y), k\geq 2$}
We compute $\SH^5(B\Z/2^k,0,y)$ for $k\geq 2$, which we use in \cref{ex:cyclicfermion}.
\begin{prop}\label{prop:2kytwist}\hfill
\begin{enumerate}
    \item For $k\geq 2$, the group $\SH^\ell(B\Z/2^k,0,y)$ is isomorphic to $\Z/2$ for $\ell = 4$ and to $\Z/2^{k+1}\oplus \Z/2$ for $\ell = 5$.
    \item The isomorphism $\SH^5(B\Z/2^k, 0, y)\cong\Z/2^{k+1}\oplus\Z/2$ may be chosen so that the class $\alpha_{\mathrm{GW}}$ mapping to $(1, 0)$ has image in the Gu--Wen layer of the $E_\infty$-page of the AHSS, and the class $\alpha_{\mathrm{Maj}}$ mapping to $(0, 1)$ has image in the Majorana layer. 
\end{enumerate}
\end{prop}
The case $\ell = 4$ verifies a prediction of Décoppet~\cite[Example 4.13]{D11}.
\begin{proof}
This will again require the HASS. The $E_2$-page of the AHSS is given by Equation \eqref{eq:SHZ2k}, with the twisted $d_2$ differentials given by
\begin{subequations}
\begin{align}
   d_2 &\colon E^{i,2}_2  \to E^{i+2,1}_2 && X \mapsto \Sq^2 X+y X \,, \\ 
     d_2 &\colon E^{i,1}_2 \to E^{i+2,0}_2 \quad&& X \mapsto (-1)^{\Sq^2 X + y X} \,.
\end{align}
\end{subequations}
After resolving the $d_2$ differentials, the $E_3$-page is given as follows:
\begin{equation}\label{eq:SHZ2kE3}
E^{i,j}_3=\begin{array}{c|cccccccc}
     j\\
     \\
     2 & 0 & 0 & y & x y &  \ldots \\
     1 & 1 & 0 & 0 & 0 & y^2 &\ldots  \\
      0 & \mathbb{C}^\times & \Z/2^k & 0 & \Z/2^{k-1} & 0 & \Z/2^k &0 & \ldots \\
     \hline 
     & 0 & 1 & 2 & 3 & 4 & 5 & 6 & i\,
\end{array}
\end{equation}
There is room for a nontrivial $d_3$ differential in total degree 5, and hence we turn to the hastened Adams spectral sequence.

The input to the (usual or hastened) Adams spectral sequence is the $\cA(1)$-module $H_{0,y}^*(B\Z/2^k; \Z/2)$ from~\cite[Definition 2.31(3)]{DY23}. Let $V\to B\Z/2^k$ be the vector bundle associated to the rotation representation of $\Z/2^k$ on $\R^2$. Then $(0, y) = (w_1(V), w_2(V))$, (i.e.\ this is a \term{vector bundle twist} of supercohomology, in the language of~\cite{DY23}), so there is an $\cA(1)$-module isomorphism (see~\cite{DY23})
\begin{equation}
    H_{0,y}^*(B\Z/2^k; \Z/2)\cong H^*((B\Z/2^k)^{V-2}; \Z/2),
\end{equation}
where for a space $X$ with bundle $V\to X$ with rank $r_V$, we denote by $X^{V-r_V}$ the associated \textit{Thom spectrum}, which is the suspension spectrum of
the Thom space.

The $\cA(1)$-module structure on $H^*((B\Z/2^k)^{V-2}; \Z/2)$ is computed in~\cite{Cam17, DL21, DDHM24}.\footnote{The references~\cite{Cam17, DL21} appear to use a different vector bundle than $V$, but this is a typo.} Given an $\cA(1)$-module $M$, let $\Sigma^k M$ denote the same $\cA(1)$-module with grading increased by $k$; we let $\Sigma M\coloneqq\Sigma^1 M$. For example, define $C\eta\coloneqq\Sigma^{-2}\widetilde H^*(\CP^2;\Z/2)$. Then there is an $\cA(1)$-module isomorphism
\begin{equation}
\label{spZ8_A1}
    H^*((B\Z/2^k)^{V-2}; \Z/2) \cong
        \textcolor{BrickRed}{C\eta} \oplus
        \textcolor{RedOrange}{\Sigma C\eta} \oplus
        \textcolor{Goldenrod!67!black}{\Sigma^4 C\eta} \oplus
        \textcolor{Green}{\Sigma^5 C\eta} \oplus F,
\end{equation}
where $F$ is concentrated in degrees $8$ and above (and thus we may ignore it). The $\Sigma^{2k}C\eta$ summand is spanned by $Uy^k$ and $Uy^{k+1}$, where $U$ is the Thom class, and the $\Sigma^{2k+1}C\eta$ summand is spanned by $Uxy^k$ and $Uxy^{k+1}$.

By \cref{cQthm}, $\mathcal Q$ commutes with direct sums and suspensions and vanishes in topological degrees below the minimum degree of a bounded-below $\cA(1)$-module, so we can ignore $F$ and only need $\mathcal Q(C\eta)$. We compute this in~\cite{DYY2} and give the result in \cref{first_HASS}, left (compare $\Ext_{\cA(1)}(C\eta)$, displayed in~\cite[Figure 22]{BC18}). Using this, we can draw the $E_2$-page of the HASS in \cref{first_HASS}, center. Differentials can be computed by comparing to the corresponding Adams spectral sequence for twisted spin bordism, as in~\cite[\S 7.9]{Cam17} or~\cite[\S 13.2]{DDHM24}: except for on the $E_k$-page, all differentials vanish. Thus we obtain the $E_{k+1} = E_\infty$-page in \cref{first_HASS}, right. As in the usual Adams spectral sequence, vertical lines represent $h_0$-multiplication, which lifts to multiplication by $2$, so we deduce that $(\tau_{\le 2}\ko)_5(B\Z/2^k, 0, y)\cong \textcolor{Green}{\Z/{2^{k+1}}}\oplus\textcolor{Orange}{\Z/2}$, and the corresponding twisted supercohomology is the Pontryagin dual group. Thus the Atiyah--Hirzebruch $d_3$ mentioned above vanishes. 

Comparing the $E_\infty$-page of the AHSS with the answer we found by the HASS, we see there is a hidden extension in total degree $5$ in the AHSS. It must be an extension of the Dijkgraaf--Witten layer by either the Gu--Wen layer or the Majorana layer.

\begin{figure}[H]
\begin{subfigure}[c]{0.33\textwidth}
\begin{sseqdata}[name=QB, classes = fill, scale=0.58, xrange={0}{6}, yrange={0}{4},
x tick step = 2,
y tick step = 2,
x tick gap = 0.25cm,
y tick gap = 0.25cm,
x axis tail = 0.2cm,
y axis tail = 0.2cm,
y axis clip padding = 0cm,
class labels = {right = 0.07em, font=\tiny },
]
\class(0, 0)
\class["h_0"](0, 1)\structline
\AdamsTower{}
\class["b"](2, 1)\AdamsTower{}
\class["b^2"](4, 2)
\end{sseqdata}
\printpage[name=QB, page=2]
\end{subfigure}
\begin{subfigure}[c]{0.32\textwidth}
\begin{sseqdata}[name=spZ, classes = fill, scale=0.58, xrange={0}{6}, yrange={0}{4},
x tick step = 2,
y tick step = 2,
x tick gap = 0.25cm,
y tick gap = 0.25cm,
x axis tail = 0.2cm,
y axis tail = 0.2cm,
y axis clip padding = 0cm,
class labels = {right = 0.07em, font=\tiny },
>=stealth, Adams grading
]
\begin{scope}[BrickRed]
    \class(0, 0)
    \class(0, 1)\structline
    \AdamsTower{}
    \class(2, 1)\AdamsTower{}
    \class(4, 2)
\end{scope}
\begin{scope}[RedOrange]
    \class(1, 0)
    \class(1, 1)\structline
    \AdamsTower{}
    \class(3, 1)\AdamsTower{}
    \class(5, 2)
\end{scope}
\begin{scope}[draw=none, fill=none]
    \foreach \x in {4, 5} {
        \foreach \y in {0, 1, 3, 4, 5} {
            \class(\x, \y)
        }
    }
    \foreach \x in {1, 3, 5} {
        \foreach \y in {6, 7} {
            \class(\x, \y)
        }
    }
\end{scope}
\begin{scope}[Goldenrod!67!black]
    \class(4, 0)
    \class(4, 1)\structline
    \AdamsTower{}
    \class(6, 1)\AdamsTower{}
\end{scope}
\begin{scope}[Green]
    \class(5, 0)
    \class(5, 1)\structline
    \AdamsTower{}
    \class(7, 1)\AdamsTower{}
\end{scope}
\begin{scope}[gray]
    \d3(2, 1)
    \d3(4, 0, -1)
    \d3(6, 1)(5, 4, -1)
\end{scope}
\begin{scope}[draw=none]
    \foreach \y in {2, 3, 4} {
        \d3(4, \y, -1)
        \d3(2, \y)
        \d3(6, \y)
    }
    \d3(4, 1, -1)
\end{scope}
\end{sseqdata}
\printpage[name=spZ, page=3]
\end{subfigure}
\begin{subfigure}[c]{0.33\textwidth}
\printpage[name=spZ, page=4]
\end{subfigure}
\caption{Left: $\mathcal Q(C\eta)$, computed in~\cite{DYY2}. Center: the $E_k$-page of the HASS computing $\tau_{\le 2}\ko(B\Z/2^k, 0, y)$ (here $k = 3$). Right: the $E_{k+1} = E_\infty$-page.}
\label{first_HASS}
\end{figure}

\begin{lem}
\label{bonus_AHSS_lemma}
The hidden extension in degree $5$ of the AHSS is between the Dijkgraaf--Witten and Gu--Wen layers; thus, the isomorphism $\phi\colon \SH^5(B\Z/2^k, 0, y)\overset\cong\to\Z/2^{k+1}\oplus\Z/2$ may be chosen so that $\alpha_{\mathrm{GW}}\coloneqq \phi^{-1}(1, 0)$ has image in the $E_\infty$-page of the AHSS in the Gu--Wen layer and $\alpha_{\mathrm{Maj}}\coloneqq \phi^{-1}(0, 1)$ has image in the Majorana layer.
\end{lem}
\begin{proof}
Let $\iota\colon\Z/2\to\Z/2^k$ be the map sending $1\mapsto 2^{k-1}$; we will also let $\iota$ denote the induced map on classifying spaces. Recall that $\iota^*(y) = x^2\in H^2(B\Z/2;\Z/2)$,\footnote{Because $y$ is $w_2$ of the standard rotation representation $\rho$ of $B\Z/2^k$, it suffices to show that restricting $\rho$ to $\Z/2$ yields the representation $2\sigma$; then $w_2(2\sigma) = x^2$ by the Whitney sum formula.} so we have a map $\SH^*(B\Z/2, 0, x^2)\to \SH^*(B\Z/2^k, 0, y)$, and therefore a map of AHSSes computing these supercohomology groups. This map is compatible with the extension problems on the $E_\infty$-pages in the following sense: each extension is a short exact sequence from a group on the $E_\infty$-page to the corresponding quotient of supercohomology, and the map $\iota$ induces a commutative diagram of short exact sequences. Thus, in particular, if $\mathit{rSH}^n(X, a, b)$ denotes the quotient of $(X, a, b)$-twisted supercohomology by the Majorana layer, so that $\mathit{rSH} \simeq I_{\C^\times}(\tau_{\le 1}\ko)$,\footnote{$\mathit{rSH}$ is Gu--Wen restricted supercohomology~\cite{Fre08, PhysRevB.90.115141}.}, then $\SH^n(X, a, b)$ is an extension of $\mathit{rSH}^n(X, a, b)$ by the Majorana layer $E_\infty^{2,n-2}$, and specializing to the map $\iota$ we get a commutative diagram of short exact sequences
\begin{equation}
\label{filt_comm}
\begin{tikzcd}
	0 & {^{k}\!E_\infty^{5,0}} & {\mathit{rSH}^5(B\Z/2^k, 0, y)} & {{}^{k}\!E_\infty^{4,1}} & 0 \\
	0 & {^{1}\!E_\infty^{5,0}} & {\mathit{rSH}^5(B\Z/2, 0, x^2)} & {{}^{1}\!E_\infty^{4,1}} & 0,
	\arrow[from=1-1, to=1-2]
	\arrow[from=1-2, to=1-3]
	\arrow["{\iota^*}"', from=1-2, to=2-2]
	\arrow[from=1-3, to=1-4]
	\arrow["{\iota^*}"', from=1-3, to=2-3]
	\arrow[from=1-4, to=1-5]
	\arrow["{\iota^*}"', from=1-4, to=2-4]
	\arrow[from=2-1, to=2-2]
	\arrow[from=2-2, to=2-3]
	\arrow[from=2-3, to=2-4]
	\arrow[from=2-4, to=2-5]
\end{tikzcd}
\end{equation}
where ${}^\ell\! E_r^{p,q}$ denotes the AHSS for the twisted supercohomology of $B\Z/2^\ell$. To prove the lemma, it would suffice to show that the upper central term of~\eqref{filt_comm}, $\mathit{rSH}^5(B\Z/2^k, 0, y)$, is isomorphic to $\Z/2^{k+1}$, as this plus the HASS computation would force the $\mathit{rSH}$-to-Majorana extension to split. Therefore our next task is to fill in the entries of~\eqref{filt_comm}. We computed ${}^1\! E_\infty^{5-j,j}$ in~\eqref{eq:SH} (there we claim it is the $E_3$-page, but in the proof of \cref{prop:Z2x2twist} we show that $d_3$ vanishes going to or from total degree $5$), and we computed ${}^{k}\!E_\infty^{5-j,j}$ in~\eqref{eq:SHZ2kE3} (again, this was the $E_3$-page, and we used the HASS to show this equals $E_\infty$ in degree $5$). Because $y$ pulls back to $x^2$, the map $\iota^*\colon {}^{k}\!E_\infty^{4,1}\to {}^{1}\!E_\infty^{4,1}$ is an isomorphism $\Z/2\overset\cong\to\Z/2$. The map on $E_\infty^{5,0}$ can be computed by finding its image under the Bockstein $H^5(\bl;\C^\times)\to H^6(\bl;\Z)$; there it is a map
\begin{equation}
    \iota^*\colon \Z/2^k\cong H^6(B\Z/2^k;\Z) \longrightarrow H^6(B\Z/2;\Z)\cong\Z/2.
\end{equation}
To show this map is nonzero (which uniquely determines it), use the universal coefficient theorem to show that it suffices to show that the image in mod $2$ cohomology is nonzero; there we already know the map sends $y^3\mapsto x^6$, hence is nonzero.

We have thus filled in most of~\eqref{filt_comm}; only the middle column remains. Because $\SH^5(B\Z/2, 0, x^2)\cong\Z/8$ (\cref{prop:Z2x2twist}) and $\mathit{rSH}^5(B\Z/2, 0, x^2)$ is a quotient of this $\Z/8$ by the $\Z/2$ in the Majorana layer, we have $\mathit{rSH}^5(B\Z/2, 0, x^2)\cong\Z/4$. Thus~\eqref{filt_comm} becomes the following commutative diagram of short exact sequences:
\begin{equation}
\begin{tikzcd}
	0 & {\Z/2^k} & {\mathit{rSH}^5(B\Z/2^k, 0, y)} & {\Z/2} & 0 \\
	0 & {\Z/2} & {\Z/4} & {\Z/2} & 0,
	\arrow[from=1-1, to=1-2]
	\arrow[from=1-2, to=1-3]
	\arrow["{1\mapsto 1}"', two heads, from=1-2, to=2-2]
	\arrow[from=1-3, to=1-4]
	\arrow["{\iota^*}"', from=1-3, to=2-3]
	\arrow[from=1-4, to=1-5]
	\arrow["\cong"', from=1-4, to=2-4]
	\arrow[from=2-1, to=2-2]
	\arrow[from=2-2, to=2-3]
	\arrow[from=2-3, to=2-4]
	\arrow[from=2-4, to=2-5]
\end{tikzcd}\end{equation}
and one can quickly check that this is only possible when $\mathit{rSH}^5(B\Z/2^k, 0, y)\cong\Z/2^{k+1}$. As noted above, this finishes the proof of the lemma.
\end{proof}

Looking at the $E_\infty$-page of the HASS  (\cref{first_HASS}, right), we also see that $\SH^4(B\Z/2^k, 0, y)\cong\Z/2$.

\end{proof}
See Wang--Gu~\cite[Table III]{PhysRevX.10.031055} and Zhang--Wang--Yang--Qi--Gu~\cite{ZWYQG20} for $\SH^\ell(B\Z/2^k, 0, y)$ for $\ell<4$.

\section{Example: \texorpdfstring{$\SH^5(B\Z/2 \times B\Z/2^k,x_1,y), k\geq 2$}{SH5(BZ/2*B/Z/2k,x1,y), k>=2}}\label{appendix:timerev}
In this appendix, we perform the computations supporting \cref{ex:timerev}, with the fermionic group $(B\Z/2\times B\Z/2^k, x_1, y)$ -- an example with time-reversal symmetry. In \S\ref{subsubsec:TR_adams}, we compute $\SH^5(B\Z/2 \times B\Z/2^k,x_1,y),k\geq 2$; then in \S\ref{subsubsec:cons}, we discuss how to perform symmetry extension. This example was not addressed by Wan--Wang~\cite{WW25}, and so we do not know whether our choice of cover $G\to \Z/2\times \Z/2^k$ has minimal degree; we would be interested in learning whether this is the case.
\subsubsection{The twisted supercohomology computation}
\label{subsubsec:TR_adams}
The $\Z/2$ cohomology ring of $B\Z/2 \times B\Z/2^k$ is 
\begin{equation}
    H^*(B\Z/2 \times B\Z/2^k; \Z/2) \cong \Z/2[x_1, x, y]/(x^2),\quad |x_1| = |x| =1,~ |y|=2.
\end{equation}
There is an isomorphism $\SH^5(B\Z/2 \times B\Z/2^k,x_1,y)\cong \SH^5(B\Z/2\times B\Z/2^k, x_1, y+x_1^2)$, because the twists $(x_1, y)$ and $(x_1, y+x_1^2)$ are related by an automorphism of $B\Z/2\times B\Z/2^k$. Namely, the automorphism is given by
\begin{equation}
   f\colon \Z/2 \times \Z/2^k \rightarrow \Z/2 \times \Z/2^k, \quad \quad \quad (1, 0) \mapsto (1, 2^{k - 1}),~(0, 1) \mapsto (0, 1), 
\end{equation}
under which we have $f^*(x_1) = x_1 $, $f^*(x) = x$ and $f^*(y) = y + x_1^2$.

\begin{prop}\label{prop:22kxy}
There is an isomorphism $\SH^5(B\Z/2\times B\Z/2^k, x_1, y)\cong\Z/4\oplus\Z/2\oplus\Z/2$ such that the generators $\alpha_{\mathrm{GW}}$, $\alpha_{\mathrm{DW}}$, and $\alpha_{\mathrm{Maj}}$, corresponding to $(1, 0, 0)$, $(0, 1, 0)$, and $(0, 0, 1)$ respectively, have the following properties.
\begin{enumerate}
    \item The images of $\alpha_{\mathrm{DW}}$,  $\alpha_{\mathrm{GW}}$, and $\alpha_{\mathrm{Maj}}$  in the $E_\infty$-page of the AHSS are in the Dijkgraaf--Witten, Gu--Wen, and Majorana layers, respectively.
    \item The kernel of the map $\SH^5\to\mho_\Spin^5$ is spanned by $\alpha_{\mathrm{Maj}}$.
\end{enumerate}
\end{prop}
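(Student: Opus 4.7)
The plan follows the three-step strategy used for \cref{prop:2kytwist}: compute the abelian group $\SH^5(B\Z/2\times B\Z/2^k, x_1, y)$ using the hastened Adams spectral sequence of \cref{cQthm}, use the Atiyah--Hirzebruch filtration together with restriction maps to identify which generators live in which layer, and finally compare with the ordinary Adams spectral sequence for spin bordism to pin down the kernel of the map to $\mho_\Spin^5$.

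For the first step, I would realize the twist as a vector bundle twist: if $\sigma\to B\Z/2$ is the tautological line bundle and $V\to B\Z/2^k$ is the rank-two rotation bundle, then $(w_1(\sigma\boxplus V), w_2(\sigma\boxplus V)) = (x_1, y)$. Consequently the input $\cA(1)$-module for the HASS is a Thom-spectrum cohomology and, by a Künneth argument, splits as a tensor product of the $\cA(1)$-modules for $(B\Z/2, x_1, 0)$ and $(B\Z/2^k, 0, y)$. The second factor is~\eqref{spZ8_A1}, and the first is the standard $\cA(1)$-module structure on $\widetilde H^*(\RP^\infty_{-1};\Z/2)$. In the range of degrees relevant to total degree $5$, the tensor product decomposes as a direct sum of suspensions of small $\cA(1)$-modules whose $\mathcal Q$-groups are either displayed in this paper (\cref{fig1:HASS,first_HASS}) or tabulated in~\cite{DYY2}. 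Assembling these and reading off the $h_0$-towers (each of which lifts to multiplication by $2$) yields the claimed $\Z/4\oplus\Z/2\oplus\Z/2$; the vanishing of the relevant differentials I would check by comparing with the corresponding Adams spectral sequence for twisted spin bordism in the style of~\cite[\S 7.9]{Cam17}.

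For the layer assignment, I would tabulate the AHSS $E_2$-page using $\Z/2[x_1, x, y]/(x^2)$ and run the twisted $d_2$s from~\eqref{dform}. The class $x_1^3 x$ persists in the Gu--Wen layer, a two-dimensional subgroup (spanned by $(-1)^{x_1^4 x}$ and $(-1)^{xy^2}$, as foreshadowed in the proof of \cref{thm:timerev}) survives in the Dijkgraaf--Witten layer, and one class survives in the Majorana layer. To label them, I would apply naturality for the inclusions $B\Z/2\hookrightarrow B\Z/2\times B\Z/2^k$ via $1\mapsto(0,2^{k-1})$ and $B\Z/2^k\hookrightarrow B\Z/2\times B\Z/2^k$, under which $(x_1, y)$ pulls back to $(0, x^2)$ and $(0, y)$ respectively; the induced maps on $\SH^5$, combined with the layer data from \cref{prop:Z2x2twist,prop:2kytwist}, distinguish $\alpha_{\mathrm{Maj}}$, $\alpha_{\mathrm{GW}}$, and $\alpha_{\mathrm{DW}}$. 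The kernel claim then follows from the comparison map $\tau_{\le 2}\ko\to\ko$: the Majorana generator is detected by $\pi_{-2}\SH$, which lies in the kernel of $\SH\to I_{\C^\times}\ko$ by construction, while $\alpha_{\mathrm{GW}}$ and $\alpha_{\mathrm{DW}}$ are detected by $h_0$-towers that lift to the corresponding Adams spectral sequence for spin bordism.

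The main obstacle I anticipate is the $\cA(1)$-module Künneth step: tensor products of the elementary summands in~\eqref{spZ8_A1} with the summands of $\widetilde H^*(\RP^\infty_{-1};\Z/2)$ can produce glued complexes whose $\mathcal Q$-groups are not on the standard list, and care is needed to separate them through total degree $5$. If this becomes unwieldy, an alternative is to feed the full module into~\eqref{cQLES} directly and bypass the decomposition; in either case, the careful book-keeping of $h_0$-towers is the crux.
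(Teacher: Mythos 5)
Your setup (vector-bundle twist, Künneth splitting of the Thom spectrum into suspensions of $P\otimes C\eta$, reading off $h_0$-towers) matches the paper's, but the step where you ``check the vanishing of the relevant differentials by comparing with the Adams spectral sequence for twisted spin bordism'' is exactly where the argument cannot be closed that way, and it is the crux of the computation. On the HASS $E_2$-page there are potential $d_2$'s out of topological degrees $5$ and $6$, and the comparison with the $\ko$-Adams spectral sequence only constrains them: it shows the extra class $e\in\mathcal Q^{2,7}(R_6)$ is not in the image of the comparison map (hence not a $d_2$-target of classes that lift), leaving $d_2(f),d_2(g)\in\set{0,h_0^2c}$ and $d_2(b),d_2(c)$ undetermined. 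Moreover one of these differentials is \emph{nonzero} ($d_2(g)=h_0^2c$), and whether it vanishes is precisely what decides between $\Z/8\oplus(\Z/2)^{\oplus 2}$ and the correct answer $\Z/4\oplus(\Z/2)^{\oplus 2}$. The paper resolves this with a chain of Smith long exact sequences (smashing with $(B\Z/2)^{\sigma-1}$ and using the sphere bundle of $V_\rho\otimes V_\rho$), whose exactness in low degrees forces $\abs{\mathcal M_5}=16$; this in turn rests on a genuinely geometric input, namely that $\int_W x_1^3 y=0$ on every closed spin $5$-manifold by the Wu formula, transported through a Smith isomorphism to kill a filtration-$0$ class. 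Nothing in your plan supplies this, and without it the group structure itself is not established. Relatedly, the AHSS $E_3$-page has $32$ elements in total degree $5$ (two independent Majorana classes, $x_1^3+x_1y$ and $xy$), so a nonzero $d_3$ must be located before ``one class survives in the Majorana layer''; the paper deduces $d_3(xy)\ne 0$ only by counting against the completed HASS answer.

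Your justification of the kernel statement is also incorrect as stated: being in the Majorana layer does not place a class in the kernel of $\SH^5\to\mho_\Spin^5$ ``by construction,'' since $\pi_{-2}\SH\to\pi_{-2}(I_{\C^\times}\ko)$ is an isomorphism. Indeed $\SH^5(B\Z/2,0,x^2)\cong\Z/8$ is generated by a Majorana-layer class that maps injectively to twisted spin cobordism. The correct mechanism here is specific to this example: on the HASS side, the class $e$ spanning the cokernel of $\Ext_{\cA(1)}(R_6)\to\mathcal Q(R_6)$ in topological degree $5$ is what dies, and on the AHSS side the class $x_1^3+x_1y$ is hit by the $d_2$ out of the integral ($j=3$) row of the $\mho_\Spin$-spectral sequence applied to the twisted Euler class $\widetilde e$ --- a row that simply does not exist for $\SH$. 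Your restriction-map idea for assigning layers (pulling back along $B\Z/2$ and $B\Z/2^k$) is a plausible alternative to the paper's counting-plus-extension argument, but it would still need the correct ambient group order as input.
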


\begin{lem}
\label{AHSS_analysis}
The following hold for the $E_3$-page of the Atiyah--Hirzebruch spectral sequence computing $\SH^*(B\Z/2\times B\Z/2^k, x_1, y)$.
\begin{enumerate}
    \item\label{4_16} There are exactly $16$ classes in total degree $4$.
    \item\label{E3_5} A basis for total degree $5$ consists of $(-1)^{x_1^4x}$ (DW layer), $(-1)^{xy^2}$ (DW layer), $x_1^3 x$ (GW layer), $x_1^3 + x_1y$ (Majorana layer), and $xy$ (Majorana layer).
    \item\label{mspin_d3} In the corresponding spectral sequence for $\mho_\Spin^*$-cohomology, $x_1^3 + x_1y\in E_2^{3,2}$ is in the image of $d_2$.
\end{enumerate}
\end{lem}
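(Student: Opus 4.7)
The plan is to run the twisted Atiyah--Hirzebruch spectral sequence explicitly. The inputs are the mod $2$ cohomology ring $H^*(BG;\Z/2)=\Z/2[x_1,x,y]/(x^2)$ with $BG = B\Z/2\times B\Z/2^k$, together with the Steenrod values $\Sq^1 x_1=x_1^2$, $\Sq^1 x=\Sq^1 y=0$, $\Sq^2 x_1=\Sq^2 x=0$, $\Sq^2 y=y^2$, which combined with the Cartan formula determine $\Sq^2_{x_1,y}$ on every monomial.

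First I would tabulate the $E_2$-page in total degrees $3$ through $6$. The rows $j=1,2$ are copies of $H^*(BG;\Z/2)$, so bases are monomials in $x_1,x,y$. For the row $j=0$, namely $H^*(BG;\C^\times_{x_1})$, I would use the exponential sequence $0\to\Z\to\R\to\C^\times\to 0$ to reduce to computing $H^*(BG;\Z_{x_1})$, then apply the Künneth theorem with $H^*(B\Z/2;\Z_{x_1})\cong\Z/2$ concentrated in odd degrees and the standard computation of $H^*(B\Z/2^k;\Z)$. Next I would evaluate the $d_2$'s from~\eqref{dform} on every monomial of degree $\le 4$; this pins down the kernel of every outgoing $d_2$ and the image of every incoming $d_2$ at each spot $E_2^{p,q}$ with $p+q\in\{4,5\}$. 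Part (1) is then the resulting count. For part (2), I would extract explicit representatives among the surviving classes: $(-1)^{x_1^4x}$ and $(-1)^{xy^2}$ as Dijkgraaf--Witten representatives in $E_3^{5,0}$, $x_1^3x$ in the Gu--Wen row $E_3^{4,1}$, and $x_1^3+x_1y$, $xy$ in the Majorana row $E_3^{3,2}$, with the layer labelling read off directly from the row $j\in\{0,1,2\}$ each representative inhabits.

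For part (3), the key observation is that the AHSS for $\mho^*_\Spin$ has an additional row $E_2^{p,3}=H^p(BG;\Z/2)$ coming from $\pi_{-3}(I_\Z M\Spin)\cong\Z/2$, so there is a new differential $d_2\colon E_2^{1,3}\to E_2^{3,2}$ with no analogue in the supercohomology AHSS. Its formula is the twisted operation $\Sq^2_{x_1,y}$, because the $k$-invariant of the Postnikov tower of $I_\Z M\Spin$ between $\pi_{-2}$ and $\pi_{-3}$ is $\Sq^2$ (Anderson dual to the corresponding $k$-invariant of $M\Spin$~\cite[Lemma 5.6]{ABP67}). Direct computation gives $\Sq^2_{x_1,y}(x_1)=0+x_1\cdot x_1^2+y\cdot x_1=x_1^3+x_1y$, exhibiting $x_1^3+x_1y$ as a $d_2$-image and completing part (3).

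The hardest step is the $\C^\times_{x_1}$-coefficient bookkeeping in the Dijkgraaf--Witten row for part (1): the reduction map $H^{p+2}(BG;\Z/2)\to H^{p+2}(BG;\C^\times_{x_1})$ is not injective, so some apparent $d_2$-images into $E_2^{p+2,0}$ actually vanish. I would control this kernel using the short exact sequence $0\to\Z/2\to\C^\times_{x_1}\xrightarrow{z\mapsto z^2}\C^\times_{x_1}\to 0$, which identifies it with the image of a Bockstein from $H^{p+1}(BG;\C^\times_{x_1})$; reconciling this with the explicit $\Sq^2_{x_1,y}$ outputs is what makes the count of $16$ classes in part (1) come out exactly as stated.
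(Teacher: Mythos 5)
Your treatment of parts (1) and (2) is the same as the paper's: tabulate the $E_2$-page in rows $j=0,1,2$, evaluate the twisted differentials $\Sq^2_{x_1,y}$ on all monomials in the relevant degrees (your Steenrod inputs and the Cartan-formula bookkeeping are exactly what the paper does implicitly), and read the count and the layer labels off the resulting $E_3$-page. The $\C^\times_{x_1}$-coefficient bookkeeping you flag as the hard step is handled the same way in the paper, which simply records the surviving $j=0$ entries as $(-1)^{(\cdot)}$ classes.

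Part (3) contains a concrete error. The extra row in the $\mho_\Spin^*$ Atiyah--Hirzebruch spectral sequence comes from $\pi_{-3}(I_\Z M\Spin)\cong\Z$, not $\Z/2$ --- this is the ``$p+ip$'' layer valued in \emph{integral} cochains, and since $s=x_1\ne 0$ the coefficients are moreover twisted, so $E_2^{p,3}\cong H^p(B\Z/2\times B\Z/2^k;\Z_{x_1})$. Consequently the differential $d_2\colon E_2^{1,3}\to E_2^{3,2}$ is not ``apply $\Sq^2_{x_1,y}$ to $x_1$'' but rather ``twisted mod $2$ reduction followed by $\Sq^2_{x_1,y}$,'' and to conclude that $x_1^3+x_1y$ is in its image you must first exhibit a class in $H^1(BG;\Z_{x_1})$ whose mod $2$ reduction is $x_1$. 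This is true but is an input you have not supplied: the paper uses the twisted Euler class $\widetilde e$ of the tautological line bundle $\sigma_1$ over $B\Z/2$, which satisfies $\widetilde e\bmod 2 = w_1(\sigma_1) = x_1$, and then computes $d_2(\widetilde e)=\Sq^2_{x_1,y}(x_1)=x_1^3+x_1y$ exactly as you do. With that lift supplied, your computation of $\Sq^2_{x_1,y}(x_1)$ is correct and the argument closes; without it, the claim that $x_1$ is hit by the reduction map from the integral (twisted) row is unjustified.
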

\begin{proof}
As usual, the twisted $d_2$ differentials are given by
\begin{align}
   d_2 &\colon E^{i,2}_2  \to E^{i+2,1}_2 && X \mapsto \Sq^2_{x_1,y}(X) \coloneqq \Sq^2 X+x_1\Sq^1 X + y X \,, \\ 
     d_2 &\colon E^{i,1}_2 \to E^{i+2,0}_2 \quad&& X \mapsto (-1)^{\Sq^2 X +x_1\Sq^1 X  + y X} \,.
\end{align}
We assemble these ingredients and give the $E_2$-page as follows:
{\scriptsize \begin{equation*}
E^{i,j}_2=\begin{array}{c|cccccccc}
     j\\
     \\
    2 & 1 & x_1, x & x_1^2, x_1 x, y & x_1^3, x_1^2 x, x_1 y, x y &   \ldots \\
    1 & 1 & x_1, x & x_1^2, x_1 x, y & x_1^3, x_1^2 x, x_1 y, x y & x_1^4, x_1^3 x, x_1^2 y, x_1 x y, y^2 &  \ldots \\
      0 & -1 & (-1)^{x} & (-1)^{x_1^2}, (-1)^{y} & (-1)^{x_1^2 x}, (-1)^{x y} & (-1)^{x_1^4}, (-1)^{x_1^2 y}, (-1)^{y^2} & (-1)^{x_1^4 x}, (-1)^{x_1^2 x y}, (-1)^{x y^2} & \ldots \\
     \hline 
     & 0 & 1 & 2 & 3 & 4 & 5  & i\,
\end{array}
\end{equation*}}
After resolving the $d_2$ differentials, the $E_3$-page is given by:
{\footnotesize \begin{equation*}\label{eq:E3timerev}
E^{i,j}_3=\begin{array}{c|ccccccccc}
     j\\
     \\
    2 & 0 & 0 & y & x_1^3 + x_1 y,  x y &  \ldots \\
    1 & 0 & x_1 & x_1 x &   x_1^3 &  x_1^3 x &  \ldots \\
      0 & -1 & (-1)^{x} & (-1)^{x_1^2} & (-1)^{x_1^2 x} &  (-1)^{x_1^4}, (-1)^{y^2} & (-1)^{x_1^4 x}, (-1)^{x y^2} & (-1)^{x_1^6}, (-1)^{x_1^2y^2}\ldots \\
     \hline 
     & 0 & 1 & 2 & 3 & 4 & 5  & 6 & i\,
\end{array}
\end{equation*}}
This proves items~\eqref{4_16} and~\eqref{E3_5} of the lemma statement. There could be nontrivial $d_3$ differentials $d_3\colon E_3^{2, 2} \rightarrow E_3^{5, 0}$ and  $d_3\colon E_3^{3, 2} \rightarrow E_3^{6, 0}$, as well as potentially a hidden extension between different layers in degree $5$; we will in a moment turn to the HASS to solve these problems.

Lastly we prove part~\eqref{mspin_d3}. In this spectral sequence, $d_2\colon E_2^{i,3}\to E_2^{i+2, 2}$ is identified with the map $H^i(\bl;\Z)\to H^{i+2}(\bl;\Z/2)$ which is reduction modulo $2$ followed by $\Sq^2$~\cite{Bot69}. (See also Footnote~\ref{AHSS_diffs}.)

Let $\widetilde e\in H^1(B\Z/2\times B\Z/2^k; \Z_{x_1})$ be the twisted Euler class of $\sigma_1$, the tautological line bundle over $B\Z/2$, pulled back to the product $B\Z/2\times B\Z/2^k$ (see~\cite[Lemma 1]{Cad99}). Then $\widetilde e\bmod 2 = w_1(\sigma_1) = x_1$ (\textit{ibid.}), so 
\begin{equation}
    d_2(\widetilde e) = \Sq^2_{x_1,y}(x_1) = x_1^3 + x_1y,
\end{equation}
which proves part~\eqref{mspin_d3}.
\end{proof}
\begin{lem}
\label{long_HASS_analysis}
The following facts hold for the $E_\infty$-page of the HASS computing $(B\Z/2\times B\Z/2^k, x_1, y)$-twisted $\tau_{\le 2}\ko$-homology.
\begin{enumerate}
    \item\label{Adams_4line} There are exactly $16$ classes in topological degree $4$.
    \item\label{Adams_5line} There are classes $b$, $c$, and $e$ in topological degree $5$ such that $\set{b,c,h_0c, e}$ is a basis for topological degree $5$.
    \item\label{cokernel} The cokernel of the map of $E_\infty$-pages from the $\ko$-Adams SS to the $\tau_{\le 2}\ko$-HASS in topological degree $5$ is $\Z/2$, spanned by $e$.
    \item\label{no_ext} There are no hidden extensions in topological degree $5$, so $\tau_{\le 2}\ko_5(B\Z/2\times B\Z/2^k, x_1, y)\cong\Z/4\oplus (\Z/2)^{\oplus 2}$.
\end{enumerate}
\end{lem}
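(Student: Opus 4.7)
The plan is to run the hastened Adams spectral sequence (HASS) of Theorem~\ref{cQthm} directly, in the same spirit as the earlier HASS calculations in this appendix but applied to the product space. First I would identify the $\cA(1)$-module $M \coloneqq H^*_{x_1, y}(B\Z/2 \times B\Z/2^k; \Z/2)$ via the Künneth formula as $M_1 \otimes M_2$, where $M_1 = H^*_{x_1, 0}(B\Z/2; \Z/2)$ is the mod-$2$ cohomology of the Thom spectrum $(B\Z/2)^{\sigma - 1}$, and $M_2 = H^*_{0, y}(B\Z/2^k; \Z/2)$ was already decomposed through degree~$7$ as $C\eta \oplus \Sigma C\eta \oplus \Sigma^4 C\eta \oplus \Sigma^5 C\eta \oplus F$ in~\eqref{spZ8_A1}. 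The $\cA(1)$-module structure on $M_1$ is standard (the twisted $\Sq^1$ action glues consecutive cells), and a short computation decomposes it into a direct sum of suspensions of $C\eta$ and a few small cells through the range of degrees needed.

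Next, I would compute $\mathcal{Q}(M)$ through topological degree~$5$. Since $\mathcal{Q}$ commutes with direct sums and suspensions by Theorem~\ref{cQthm}, this reduces to $\mathcal{Q}$ of a small list of summands built from $C\eta$, each of which appears in the computations of~\cite{DYY2}; the long exact sequence~\eqref{cQLES} fills in any remaining pieces and separates contributions coming from the $\ko$-side from those in the cokernel of $g_4$. Counting bidegrees through degree~$5$ would give parts~\eqref{Adams_4line} and~\eqref{Adams_5line}, and singling out the classes not in the image of $g_4$ would produce the class $e$ claimed in part~\eqref{cokernel}.

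Differentials in topological degree~$5$ would then be controlled by comparison. The natural map $\ko \to \tau_{\le 2}\ko$ induces a map of spectral sequences from the ordinary Adams SS to the HASS; the ordinary Adams SS for $\ko_\ast(B\Z/2 \times B\Z/2^k, x_1, y)$ can be handled by Künneth from the $C\eta$ decomposition or via a Smith long exact sequence in the spirit of \S\ref{subsub:cyclicfermion}. By naturality, any potentially nonzero differential on a class $b$ or $c$ in the image of $g_4$ is pulled back from the $\ko$-side, and these are easily seen to vanish on the short list of classes reaching degree~$5$; the cokernel class $e$ has no room for a nonzero outgoing differential by degree considerations.

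Finally, for part~\eqref{no_ext}, I would read the group structure off from the $h_0$-towers on the $E_\infty$-page, since vertical lines represent multiplication by~$2$ in $\tau_{\le 2}\ko_5(B\Z/2 \times B\Z/2^k, x_1, y)$: the tower containing $c$ produces the $\Z/4$ factor while $b$ and $e$ contribute the two $\Z/2$ factors. A consistency check against Pontryagin dualization recovers $\SH^5(B\Z/2 \times B\Z/2^k, x_1, y) \cong \Z/4 \oplus \Z/2 \oplus \Z/2$ from Proposition~\ref{prop:22kxy}. The main obstacle I anticipate is pinning down the $\cA(1)$-module decomposition of $M$ through degree~$7$: the $x_1$-twisted $\Sq^1$ creates $\cA(1)$-extensions between cells that are not present in the untwisted case, so the decomposition of $M_1 \otimes M_2$ into $\cA(1)$-summands must be done carefully before $\mathcal{Q}$ can be applied summand by summand and the cokernel class $e$ identified unambiguously.
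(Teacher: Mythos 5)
Your setup matches the paper's: the twist is a vector bundle twist, the K\"unneth/smash decomposition gives $H^*_{x_1,y}(B\Z/2\times B\Z/2^k;\Z/2)\cong P\otimes(C\eta\oplus\Sigma C\eta\oplus\Sigma^4C\eta\oplus\Sigma^5C\eta\oplus F)$ with $P = H^*((B\Z/2)^{\sigma-1};\Z/2)$, one computes $\mathcal Q(P\otimes C\eta)$ summand by summand, and part~\eqref{cokernel} follows by observing that $e$ is not in the image of the map from $\Ext_{\cA(1)}$ induced by $\tau_{\le 2}$. (A minor correction: $P$ does not decompose into suspensions of $C\eta$; it is built from indecomposable four-cell pieces, which is why the paper works with $R_6 = P\otimes C\eta$ as a unit.)

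The genuine gap is your claim that the relevant differentials ``are easily seen to vanish.'' They are not, and one of them does not vanish: the differential $d_2\colon E_2^{0,6}\to E_2^{2,7}$ out of topological degree $6$ hits topological degree $5$ and satisfies $d_2(g) = h_0^2c\ne 0$. This is exactly what truncates the $h_0$-tower over $c$ from length $3$ to length $2$, turning a would-be $\Z/8$ into the $\Z/4$ of part~\eqref{no_ext} and cutting the count in degree $5$ from $32$ to $16$. Your proposed mechanism --- naturality from the $\ko$-Adams spectral sequence --- only constrains $d_2(f)$ and $d_2(g)$ to lie in $\{0, h_0^2c\}$ (the $e$-component must vanish because $e$ is not in the image), but cannot decide between the two options; if you take them to be zero you get $\mathcal M_5\cong\Z/8\oplus(\Z/2)^{\oplus 2}$, contradicting parts~\eqref{Adams_5line} and~\eqref{no_ext}. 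The paper's proof spends essentially all of its effort establishing this nonvanishing, via the Smith long exact sequence relating $(\tau_{\le 2}\ko)_*((B\Z/2)^{\sigma-1}\wedge(B\Z/2^k)^{V-2})$ to $(\tau_{\le 2}\ko)_*((B\Z/2)^{\sigma-1}\wedge(B\Z/2^k)_+)$, auxiliary computations of both sides (\cref{pinm_SH}, \cref{cMSH}, \cref{cNSH}), and --- inside \cref{cNSH} --- a geometric argument using a Smith isomorphism to reduced $\Pin^-$ bordism of $B\Z/2^k$ and a Wu-formula integral to show a certain filtration-zero class dies. An exactness count in the resulting long exact sequence then forces $d_2(b)=d_2(c)=0$ and $|\mathcal M_5| = 16$. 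Your proposal is missing this entire mechanism, and without it the computation does not close.
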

We will postpone the proof of \cref{long_HASS_analysis} in order to first see how it helps us.
\begin{proof}[Proof of \cref{prop:22kxy}, assuming \cref{long_HASS_analysis}]
Comparing \cref{AHSS_analysis,long_HASS_analysis} in total degree $5$, there appears to be a discrepancy: there are $32$ classes in $E_3$ of the AHSS and $16$ in $E_\infty$ of the HASS. (These two spectral sequences compute $\SH$-cohomology, resp.\ $\tau_{\ge 2}\ko$-homology, which are Pontryagin dual and therefore abstractly isomorphic whenever they are finite.) This means that there must be a $d_r$, $r\ge 3$, in the AHSS that kills some class in total degree $5$. Since the numbers of elements in total degree $4$ match between these two spectral sequences, this differential must go from total degree $5$ to total degree $6$. The only option for this differential is $d_3\colon E_3^{3,2}\to E_3^{6,0}$. Moreover, this differential will be preserved by the map into the $\mho_\Spin^*$-AHSS, where it must vanish on $x_1^3 + x_1y$, so that $d_3^2 = 0$; thus $d_3(xy)\ne 0$. For degree reasons there can be no more nonzero differentials in total degree $5$ for the AHSS, so we know that the $E_\infty$-page is spanned by $(-1)^{x_1^4x}$, $(-1)^{xy^2}$, $x_1^3x$, and $x_1^3 + x_1y$, in the DW, DW, GW, and Majorana layers respectively.

To finish, we resolve the extensions on the $E_\infty$-page of the AHSS. The HASS calculations in \cref{long_HASS_analysis} imply we must answer the following two questions,
\begin{enumerate}
    \item\label{AHSS_extn} $\SH^5(B\Z/2\times B\Z/2^k, x_1, y)\cong\Z/4\oplus (\Z/2)^{\oplus 2}$, but the $E_\infty$-page of the AHSS has four $\Z/2$ summands in total degree $5$. Where is the hidden extension?
    \item\label{which_killed} What is the filtration in the AHSS of the class that is killed when one maps to $\mho_\Spin^*$?
\end{enumerate}
Question~\eqref{which_killed} is easier: by the previous paragraph, $x_1^3 + x_1y\in E_\infty^{3,2}$ is killed by the map of AHSSes to $\mho_\Spin^*$. This class lifts to a class $\alpha_\mathrm{Maj}$ which is killed when one passes to $\mho_\Spin^*$.

Now~\eqref{AHSS_extn}. The HASS analysis implies that the hidden extension is between two classes that are not in the kernel of the map to $\mho_\Spin^*$, and these two classes are necessarily in two different layers of the AHSS filtration. This uniquely forces it to be an extension of a $\Z/2$ subgroup of the Dijkgraaf--Witten layer by the unique $\Z/2$ in the Gu--Wen layer (spanned by $x_1^3 x$), giving a generator $\alpha_{\mathrm{GW}}$ in the Gu--Wen layer generating a $\Z/4$. A complementary subgroup to the image of $2\alpha_{\mathrm{GW}}$ in $E_\infty^{5,0}$ lifts to the generator $\alpha_{\mathrm{DW}}$.
\end{proof}

\begin{proof}[Proof of \cref{long_HASS_analysis}]

As in the previous example (and all examples in this paper), the twist $(x_1, y)$ is a vector bundle twist: it is $(w_1(W), w_2(W))$ for the vector bundle $W\coloneqq \sigma_1\boxplus V$, where $\sigma_1$ is the tautological bundle over $B\Z/2$ and $V$ is the bundle associated to the standard representation of $\Z/2^k$ on $\C$. The notation $\boxplus$ denotes external direct sum, i.e.\ pull these bundles back to the product $B\Z/2\times B\Z/2^k$, then direct sum them. Thus $H_{x_1, y}^*(B\Z/2\times B\Z/2^k; \Z/2)\cong H^*((B\Z/2\times B\Z/2^k)^{W - 3};\Z/2)$, like in the previous example.

The Thom spectrum associated to an external direct sum splits as a smash product, so the Künneth formula calculates its cohomology:
\begin{equation}
\label{align_mess}
    \begin{aligned}
        H_{x_1, y}^*(B\Z/2\times B\Z/2^k; \Z/2)
            &\cong H^*((B\Z/2\times B\Z/2^k)^{W - 3};\Z/2)\\
            &\cong H^*((B\Z/2)^{\sigma-1}\wedge (B\Z/2^k)^{V-2};\Z/2)\\
            &\cong H^*((B\Z/2)^{\sigma-1};\Z/2)\otimes H^*((B\Z/2^k)^{V-2};\Z/2)\\
            &\underset{\eqref{spZ8_A1}}{\cong} P\otimes \left(\textcolor{BrickRed}{C\eta} \oplus
        \textcolor{RedOrange}{\Sigma C\eta} \oplus
        \textcolor{Goldenrod!67!black}{\Sigma^4 C\eta} \oplus
        \textcolor{Green}{\Sigma^5 C\eta} \oplus F\right).
    \end{aligned}
\end{equation}
Here $P\coloneqq H^*((B\Z/2)^{\sigma-1};\Z/2)$. Letting $R_6\coloneqq P\otimes C\eta$,
\begin{equation}
\label{nolonger_aligned}
        H_{x_1, y}^*(B\Z/2\times B\Z/2^k; \Z/2)
        \cong \textcolor{BrickRed}{R_6} \oplus
        \textcolor{RedOrange}{\Sigma R_6} \oplus
        \textcolor{Goldenrod!67!black}{\Sigma^4 R_6} \oplus
        \textcolor{Green}{\Sigma^5 R_6} \oplus F'
\end{equation}
for some $\cA(1)$-module $F'$ concentrated in degrees $8$ and above. We compute $\mathcal Q(R_6)$ in~\cite{DYY2} (compare $\Ext_{\cA(1)}(R_6)$ in \cite[Figure 41]{BC18}) and draw the result in \cref{R6_example}, left. Using this, we draw the $E_2$-page of the HASS for $(B\Z/2\times B\Z/2^k, x_1, y)$-twisted $\tau_{\le 2}\ko$-homology in \cref{R6_example}, center. The differentials $d_2\colon E_2^{0,5}\to E_2^{2,6}$ and $d_2\colon E_2^{0,6}\to E_2^{2,7}$ could be nonzero; all other differentials in range vanish because their source or target is the zero group. To describe the differentials more carefully, we name the following classes.
\begin{enumerate}
    \item $\mathcal Q^{s,t}(\textcolor{BrickRed}{R_6})\cong\Z/2$ for each of $(s,t) = (0, 4)$, $(2, 7)$, and $(0, 6)$; let $a$, $e$, and $f$ be the nonzero elements of each of these groups, respectively. Thus, through the split inclusion $\textcolor{BrickRed}{R_6}\hookrightarrow H_{x_1, y}^*(B\Z/2\times B\Z/2^k; \Z/2)$ in~\eqref{nolonger_aligned}, we obtain classes $a$, $e$, and $f$ in $E_2^{0,4}$, $E_2^{2,7}$, and $E_2^{0,6}$, respectively.
    \item Repeat this procedure to define $c\in\mathcal Q^{0,5}(\textcolor{RedOrange}{\Sigma R_2})\hookrightarrow E_2^{0,5}$, $g\in\mathcal Q^{0,6}(\textcolor{Goldenrod!67!black}{\Sigma^4 R_2})\hookrightarrow E_2^{0,6}$, and $b\in\mathcal Q^{0,5}(\textcolor{Green}{\Sigma^5 R_2})\hookrightarrow E_2^{0,5}$ as the unique nonzero elements in their respective $\mathcal Q^{s,t}$ groups, then included into the $E_2$-page.
\end{enumerate}
These classes are labeled in \cref{R6_example}, center.

\begin{figure}[H]
\begin{subfigure}[c]{0.32\textwidth}
\begin{sseqdata}[name=QR6, classes = fill, scale=0.58, xrange={0}{6}, yrange={0}{4},
x tick step = 2,
y tick step = 2,
x tick gap = 0.25cm,
y tick gap = 0.25cm,
x axis tail = 0.2cm,
y axis tail = 0.2cm,
y axis clip padding = 0cm,
x axis clip padding = 0cm,
class labels = {below = 0.07em, font=\tiny },
]
\class["q_0"](0, 0)
\class["q_1"](2, 0)
\class(2, 1)\structline
\class["q_2"](4, 0)
\class(4, 1)\structline
\class(4, 2)\structline
\class["q_0'"](5, 2)
\class["q_3"](6, 0)
\class(6, 1)\structline
\class(6, 2)\structline
\class["q_1'"](7, 0)
\class["q_4"](8, 0)
\class(8, 1)\structline
\class(8, 2)\structline
\end{sseqdata}
\printpage[name=QR6, page=2]
\end{subfigure}
\begin{subfigure}[c]{0.33\textwidth}
\begin{sseqdata}[name=QR6use, classes = fill, scale=0.6, xrange={0}{6}, yrange={0}{4}, >=stealth,
x tick step = 2,
y tick step = 2,
x tick gap = 0.25cm,
y tick gap = 0.25cm,
x axis tail = 0.2cm,
y axis tail = 0.2cm,
y axis clip padding = 0cm,
x axis clip padding = 0cm,
class labels = {below = 0.07em, font=\small },
]
\begin{scope}[BrickRed]
    \class(0, 0)
    \class(2, 0)
    \class(2, 1)\structline
    \class["a"](4, 0)
    \class(4, 1)\structline
    \class(4, 2)\structline
    \class["e"](5, 2)
    \class["f"](6, 0)
    \class(6, 1)\structline
    \class(6, 2)\structline
\end{scope}
\begin{scope}[draw=none, fill=none]
    \class(4, 1)\class(4, 2)
    \class(5, 1)
\end{scope}
\class[class labels = {above = 0.07em}, "b", Green](5, 0)
\begin{scope}[RedOrange]
    \class(1, 0)
    \class(3, 0)
    \class(3, 1)\structline
    \class["c"](5, 0)
    \class(5, 1)\structline
    \class(5, 2)\structline
    \class(6, 2)
\end{scope}
\begin{scope}[Goldenrod!67!black]
    \class(4, 0)
    \class[class labels = {right = 0.07em}, "g"](6, 0)
    \class(6, 1)\structline
\end{scope}
\begin{scope}[gray]
    \d2(6, 0, -1)(5, 2, -1)
\end{scope}
\end{sseqdata}
\printpage[name=QR6use, page=2]
\end{subfigure}
\begin{subfigure}[c]{0.33\textwidth}
    \printpage[name=QR6use, page=3]
\end{subfigure}
\caption{Left: $\mathcal Q(R_6)$, computed in~\cite{DYY2}. Center: the $E_2$-page of the HASS computing $\tau_{\le 2}\ko_*(B\Z/2\times B\Z/2^k, x_1, y)$. We calculate the $d_2$s in range in \cref{d265}. Right: the $E_\infty$-page.}
\label{R6_example}
\end{figure}

Thus $d_2\colon E_2^{0,5}\to E_2^{2,6}$ sends $b$, $c$, or both to $0$ or $h_0^2a$, and $d_2\colon E_2^{0,6}\to E_2^{5,7}$ sends $f$ and $g$ to elements of $\{0,e,h_0^2c, e+h_0^2c\}$.

The map $\tau_{\le 2}\colon \ko\to\tau_{\le 2}\ko$ induces a map of Adams spectral sequences; $a$, $b$, $c$, $f$, and $g$ are in the image of this map, so their differentials are as well, but $e$ is \emph{not} in the image of this map, as can be seen by comparing $\Ext_{\cA(1)}(R_6)$ (see~\cite[Figure 41]{BC18}) and $\mathcal Q(R_6)$. This proves part~\eqref{cokernel} of the lemma statement. Thus $d_2(f)$ and $d_2(g)$ are either $0$ or $h_0^2c$.

To finish the proofs of parts~\eqref{Adams_4line}, \eqref{Adams_5line}, and~\eqref{no_ext} of the lemma statement, we prove the following lemma.
\begin{lem}
\label{d265}
$d_2(b) = d_2(c) = 0$, $d_2(g) = h_0^2c$, and $d_2(f) = \lambda h_0^2c$ for some $\lambda\in\Z/2$. Equivalently, $(\tau_{\le 2}\ko)_n(B\Z/2\times B\Z/2^k, x_1, y)$ is isomorphic to $\Z/2\oplus\Z/8$ for $n = 4$ and $\Z/4\oplus (\Z/2)^{\oplus 2}$ for $n = 5$.
\end{lem}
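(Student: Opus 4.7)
The plan is to leverage the compatibility noted immediately before the lemma: under the map $\tau_{\le 2}\colon \ko \to \tau_{\le 2}\ko$, the classes $b, c, f, g$ (but not $e$) lift to the ordinary $\ko$-Adams spectral sequence for $(B\Z/2\times B\Z/2^k, x_1, y)$-twisted $\ko$-homology, so their HASS differentials equal the images of the $\ko$-Adams differentials on the corresponding preimages. This already forces $d_2(f), d_2(g) \in \{0, h_0^2c\}$, and reduces the remaining task to showing $d_2(b) = d_2(c) = 0$ and establishing that $d_2(g)$ specifically is nonzero.

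The main computational input will be the ordinary $\ko$-homology $\ko_*(B\Z/2\times B\Z/2^k, x_1, y)$ in topological degrees $4$--$6$. By the Thom isomorphism and Künneth this is determined by $\ko_*((B\Z/2)^{\sigma-1})$ (Mahowald) and $\ko_*((B\Z/2^k)^{V-2})$, the twisted $\ko$-homology underlying \cref{prop:2kytwist}, and the relevant Adams-level data lives in $\Ext_{\cA(1)}(R_6)$ and the Adams differentials of \cite[Figure 41]{BC18}. Having identified the target in each degree, dimension counting on the $\ko$-Adams $E_\infty$-page forces $d_2(b) = d_2(c) = 0$ (any nonzero differential would shrink $\ko_4$ below its known order) and forces at least one of $d_2(f), d_2(g)$ to be nonzero (otherwise $\ko_5$ would be too large).

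To pin down that $g$ specifically supports the nonzero differential, we use the summand structure: $g$ arises from the $\Sigma^4 R_6$ summand in \eqref{nolonger_aligned}, and the corresponding class in $\Ext_{\cA(1)}(R_6)$ is known (see \cite[Figure 41]{BC18}) to support a nontrivial Adams $d_2$ hitting an $h_0^2$-class in the appropriate bidegree; this transports through the Künneth decomposition to $d_2(g) = h_0^2 c$. The corresponding calculation for $f$ (originating from $R_6$ directly) is more subtle and its value may depend on $k$, which is why we leave $\lambda$ undetermined; notably, $\lambda$ does not affect the group structure in degree $5$ since $h_0^2 c$ is killed regardless.

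With the differentials established, the $E_\infty$-page of the HASS in topological degree $4$ consists of the length-$3$ $h_0$-tower on $a$ (a $\Z/8$) and the $\Z/2$ summand from $\Sigma^4 R_6$ at bidegree $(4, 0)$; there are no hidden extensions because the two summands live in $h_0$-towers of distinct depth, yielding $\Z/2 \oplus \Z/8$. In degree $5$ the survivors are $b$, $e$, and the truncated length-$2$ tower $c, h_0 c$, yielding $\Z/4 \oplus (\Z/2)^{\oplus 2}$. The main obstacle is tracking the Künneth origins of $f$ and $g$ carefully enough to distinguish their $d_2$ behaviors: dimension counting alone does not separate $f$ from $g$, but the explicit summand decomposition \eqref{nolonger_aligned} together with the Adams differentials in \cite[Figure 41]{BC18} provides the needed granularity.
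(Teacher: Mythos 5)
There is a genuine gap, and it sits exactly where the real difficulty of this lemma lies. Your strategy is to determine the differentials by first computing the target groups $\ko_*(B\Z/2\times B\Z/2^k, x_1, y)$ in degrees $4$--$6$ "by the Thom isomorphism and Künneth" and then dimension-counting. But there is no Künneth formula computing $\ko_*(X\wedge Y)$ from $\ko_*(X)$ and $\ko_*(Y)$: $\ko$ is not a field spectrum, and the Künneth spectral sequence for $\ko$-homology of a smash product has nontrivial higher $\mathrm{Tor}$ terms, differentials, and extension problems, so it is not easier than the problem you are trying to solve. The smash-product decomposition only gives you the mod $2$ cohomology as an $\cA(1)$-module (that is what~\eqref{nolonger_aligned} records), hence only the $E_2$-page. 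Your second mechanism for pinning down $d_2(g)$ fails for the same reason: \cite[Figure 41]{BC18} is a chart of $\Ext_{\cA(1)}(R_6)$, i.e.\ an $E_2$-page, and carries no differentials; whether a class supports an Adams $d_2$ is a property of the particular spectrum being resolved, not of the abstract $\cA(1)$-module, and in any case the differential in question runs from the $\Sigma^4 R_6$ summand to the $\Sigma R_6$ summand, so it cannot be "internal to $R_6$" and cannot transport through the direct-sum decomposition. The parts of your argument that do work --- $b,c,f,g$ lift along $\tau_{\le 2}$ so their differentials land in the image from the $\ko$-Adams spectral sequence, $e$ does not lift so $h_0^2c$ is the only possible nonzero target, and $\lambda$ is irrelevant to the degree-$5$ group --- are exactly the observations the paper makes \emph{before} the lemma; they reduce the problem but do not solve it.

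What is needed is an independent computation of the groups (or of enough of them to force the differentials), and this is what the paper supplies via a Smith long exact sequence in $\tau_{\le 2}\ko$-homology: it relates $\mathcal M_n = (\tau_{\le 2}\ko)_n((B\Z/2)^{\sigma-1}\wedge(B\Z/2^k)^{V-2})$ to $(\tau_{\le 2}\ko)_*((B\Z/2)^{\sigma-1})$ and to $\mathcal N_* = (\tau_{\le 2}\ko)_*((B\Z/2)^{\sigma-1}\wedge(B\Z/2^k)_+)$. Computing $\mathcal N_3$ is itself nontrivial and is done in \cref{cNSH} by identifying a filtration-$0$ class with a characteristic-number obstruction, pulling it back across a Smith isomorphism to \pinm bordism, and killing it with the Wu formula; exactness of the resulting sequence then forces $\mathcal M_4\cong\Z/8\oplus\Z/2$ (hence $d_2(b)=d_2(c)=0$) and an order count on the boundary maps forces $\abs{\mathcal M_5}\le 16$ (hence $d_2(g)\ne 0$). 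None of this geometric input appears in your outline, and without it the dimension count you invoke has no numbers to count.
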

\begin{proof}
Rather than directly compute these differentials, we will use a different technique, the \term{Smith long exact sequence}, to compute these twisted $\tau_{\le 2}\ko$-homology groups. See~\cite{HS13, Guo:2018vij, DL23, DDHM24, debray2024smith, Deb24, DNT24, debray2025smith, Debray:2023iwf, Debray:2025iqs, JTVP25} for more examples of this technique.

\begin{thm}[James]
\label{Smith_LES}
Let $V, W\to X$ be vector bundles of ranks $r_V$, $r_W$, respectively, and $p\colon S(W)\to X$ be the sphere bundle of $W$. For any generalized homology theory $E_*$, there is a long exact sequence
\begin{equation}\label{eq:LESsmith}
	\dotsb \rightarrow E_k(S(W)^{p^*V-r_V}) \overset{p_*}{\rightarrow} E_k(X^{V-r_V})
	\overset{\mathrm{sm}_W}{\rightarrow} E_{k-r_W}(X^{V\oplus W -(r_V+r_W)})\rightarrow
	E_{k-1}(S(W)^{p^*V-r_V})\rightarrow \dotsb
\end{equation}
\end{thm}
\begin{thm}[\cite{debray2024smith}]
\label{LES_is_Smith}
With notation as in \cref{Smith_LES}, suppose $E = \Omega^\xi$ is a bordism homology theory for a tangential structure $\xi$. Then, under the identification of $\Omega_k^\xi(X^{V-r_V})$ as the abelian group of bordism classes of $(X, V)$-twisted $n$-dimensional $\xi$-manifolds,\footnote{Given a vector bundle $V\to X$, an \term{$(X, V)$-twisted $\xi$-structure}~\cite[\S 4]{Hason:2020yqf} on a vector bundle $E\to M$ is the data of a map $f\colon M\to X$ and a $\xi$-structure on $E\oplus f^*(V)$. The bordism groups of manifolds whose tangent bundles have $(X, V)$-twisted $\xi$-structures are naturally isomorphic to the $\xi$-bordism groups of the Thom spectrum $X^{V - \mathrm{rank}(V)}$~\cite[Corollary 10.19]{DDHM24}.\label{shfoot}} $\mathrm{sm}_W$ is the \term{Smith homomorphism}, which sends the bordism class of an $(X, V)$-twisted $\xi$-manifold $(M, f\colon M\to X)$ to the bordism class of the Poincaré dual of the Euler class of $f^*(W)$.\footnote{It is true, yet nontrivial, that the Poincaré dual carries a canonical $(X, V\oplus W)$-twisted $\xi$-structure and that its bordism class does not depend on the choice of $M$.}\textsuperscript{,}\footnote{Depending on $\xi$, one may have to use a generalized cohomology Euler class in \cref{LES_is_Smith}; see~\cite[Appendix B]{debray2024smith}. This detail will not play a role in this paper.}
\end{thm}
See also \cite{COSY20, Hason:2020yqf, DNT24, Debray:2023ior,Debray:2025iqs, JTVP25} for applications and interpretations of the Smith long exact sequence in quantum physics. 

To apply \cref{Smith_LES}, let $E_* = \tau_{\le 2}\ko_*((B\Z/2)^{\sigma-1}\wedge\text{---})$, $X = B\Z/2^k$, and both $V$  and $W$ be the complex line bundle associated to the rotation representation of $\Z/2^k$. By~\cite[Example 7.28]{debray2024smith}, the map $S(V)\to B\Z/2^k$ can be identified up to homotopy with the modulo $2^k$ reduction map $S^1\simeq B\Z\to B\Z/2^k$. For any generalized homology theory $E$, $E_n(S^1)\cong E_n\oplus E_{n-1}$, as can be shown by using the Atiyah--Hirzebruch spectral sequence for the reduced $E$-homology of $S^1$. Letting $M\coloneqq (B\Z/2)^{\sigma-1}$ for brevity, we have the following long exact sequence:
\begin{equation}\label{eq:firstsmith}
\ldots\rightarrow 
(\tau_{\le 2}\ko)_n(M) \oplus
(\tau_{\le 2}\ko)_{n-1}(M) \to
(\tau_{\le 2}\ko)_n(M\wedge (B\Z/2^k)^{V-2}) \to
(\tau_{\le 2}\ko)_{n-2}(M\wedge (B\Z/2^k)_+) \overset\partial \to \dotsb
\end{equation}
\begin{lem}
\label{pinm_SH}
$(\tau_{\le 2}\ko)_n((B\Z/2)^{\sigma-1})$ is isomorphic to $\Z/2$ for $n = 0,1,5$, $\Z/8$ for $n = 2,6$, and $0$ for $n = 3,4,7$.
\end{lem}
Wang--Gu~\cite[Table III]{PhysRevX.10.031055} study the corresponding supercohomology groups in degrees $4$ and below.
\begin{proof}[Proof sketch]
This can be computed using the HASS in the same way as we computed $(\tau_{\le 2}\ko)_*(B\Z/2, 0, x^2)$ in \S\ref{subsub:twistedappendix}. See \cref{pinm_HASS}, left, for a picture of the $\cA(1)$-module structure on $P\coloneqq H^*((B\Z/2)^{\sigma-1};\Z/2)$ and \cref{pinm_HASS}, right, for $E_2 = \mathcal Q(P)$, which is calculated in~\cite{DYY2}. The spectral sequence collapses, mostly for degree reasons. The only remaining differential is the $d_2$ from degree $6$ to degree $5$. This differential is in the image of the map of Adams spectral sequences induced by $\ko\to\tau_{\le 2}\ko$: in the Adams spectral sequence for $\ko_*((B\Z/2)^{\sigma-1})$, whose $E_2$-page is calculated in~\cite[\S 2]{GMM68}, this differential does vanish, so we are done. We draw the $E_2 = E_\infty$-page of the Adams spectral sequence for $\ko_*((B\Z/2)^{\sigma-1})$ in \cref{pinm_HASS}, center.
\end{proof}

\begin{figure}[H]
\begin{subfigure}[c]{0.15\textwidth}
\begin{tikzpicture}[scale=0.4]
        \foreach \y in {0, ..., 8} {
                \tikzpt{0}{\y}{}{};
        }
        \foreach \y in {0, 2, 4, 6} {
                \sqone(0, \y);
        }
        \sqtwoL(0, 1);
        \sqtwoR(0, 2);
        \sqtwoL(0, 5);
        \sqtwoR(0, 6);
        \begin{scope}
                \clip (-1, 6.8) rectangle (1, 8.5);
                \sqone(0, 8);
        \end{scope}
\end{tikzpicture}\end{subfigure}
\begin{subfigure}[c]{0.35\textwidth}
\begin{sseqdata}[name=pinm, classes = fill, scale=0.4, xrange={0}{8}, yrange={0}{4},
x tick step = 2,
y tick step = 2,
x tick gap = 0.25cm,
y tick gap = 0.25cm,
x axis tail = 0.2cm,
y axis tail = 0.2cm,
y axis clip padding = 0cm,
x axis clip padding = 0cm,
class labels = {below = 0.07em, font=\tiny },
]
\class["\kappa"](0, 0)
\class(1, 1)\structline
\class(2, 2)\structline
\class(2, 1)\structline
\class["k_1"](2, 0)\structline

\class["k_2"](6, 0)
\class(6, 1)\structline
\class(6, 2)\structline
\class(6, 3)\structline

\class["w\kappa"](8, 4)
\class(9, 5)\structline
\end{sseqdata}
\printpage[name=pinm, page=2]
\end{subfigure}
\begin{subfigure}[c]{0.35\textwidth}
\begin{sseqdata}[name=QP, classes = fill, scale=0.4, xrange={0}{8}, yrange={0}{4},
x tick step = 2,
y tick step = 2,
x tick gap = 0.25cm,
y tick gap = 0.25cm,
x axis tail = 0.2cm,
y axis tail = 0.2cm,
y axis clip padding = 0cm,
x axis clip padding = 0cm,
class labels = {below = 0.07em, font=\tiny },
]
\class["\kappa"](0, 0)
\class(1, 1)\structline
\class(2, 2)\structline
\class(2, 1)\structline
\class["k_1"](2, 0)\structline

\class["k_1'"](5, 2)
\class["k_2"](6, 0)
\class(6, 1)\structline
\class(6, 2)\structline
\end{sseqdata}
\printpage[name=QP, page=2]
\end{subfigure}
\caption{Left: the $\cA(1)$-module structure on $P\coloneqq H^*((B\Z/2)^{\sigma-1};\Z/2)$.
Center: $\Ext_{\cA(1)}(P)$, the $E_2$-page of the Adams spectral sequence computing $\ko_*((B\Z/2)^{\sigma-1})$.
Right: $\mathcal Q(P)$, the $E_2$-page of the HASS computing $(\tau_{\le 2}\ko)_*((B\Z/2)^{\sigma-1})$. The classes $\kappa$, $k_1$, and $k_2$ are in the image of the map of $E_2$-pages induced by the truncation $\ko\to\tau_{\le 2}\ko$. We use this in the proof of \cref{pinm_SH}.}
\label{pinm_HASS}
\end{figure}

\begin{lem}
\label{cMSH}
$\mathcal M_n\coloneqq (\tau_{\le 2}\ko)_n((B\Z/2)^{\sigma-1}\wedge (B\Z/2^k)^{V-2})$ is isomorphic to $\Z/2$ for $n = 0,1$ and $\Z/4$ for $n = 2,3$. In higher degrees:
\begin{itemize}
    \item $\mathcal M_4$ is isomorphic to either $\Z/8\oplus\Z/2$, if $d_2(b) = d_2(c) = 0$, or to $\Z/4\oplus\Z/2$, if at least one of $d_2(b)$ or $d_2(c)$ is nonzero.
    \item If $d_2(b) = d_2(c) = 0$, then $\mathcal M_5$ is isomorphic to either $\Z/8\oplus (\Z/2)^{\oplus 2}$, if $d_2(f) = d_2(g) = 0$, or to $\Z/4\oplus (\Z/2)^{\oplus 2}$.
\end{itemize}
\end{lem}
\begin{proof}
These follow from the computation of the $E_2$-page of the HASS for these groups in \cref{R6_example}, as well as the observation we made that $e\not\in\mathrm{Im}(d_2)$. In principle, there could be a hidden extension from $b$ or $h_0c$ to $e$ in degree $5$, but because $b$ and $c$ are in the image of the map of spectral sequences induced by $\tau_{\le 2}$, and $e$ is not, this cannot occur.
\end{proof}
\begin{lem}
\label{cNSH}
Let $\mathcal N_n\coloneqq (\tau_{\le 2}\ko)_n((B\Z/2)^{\sigma-1}\wedge (B\Z/2^k)_+)$. Then $\mathcal N_0\cong\Z/2$ and $\mathcal N_1\cong (\Z/2)^{\oplus 2}$. In higher degrees:
\begin{itemize}
    \item $\mathcal N_2$ is isomorphic to either $\Z/8\oplus (\Z/2)^{\oplus 2}$ or $\Z/8\oplus\Z/4$.
    \item $\mathcal N_3\cong \Z/4\oplus\Z/2$.
\end{itemize}
\end{lem}
Our proof is an adaptation of the ideas in~\cite[\S 7.2.2]{Guo:2018vij}, which are used there to compute $\Omega_*^{\Pin^-}(B\Z/4)$ in low degrees. We replace $B\Z/4$ with $B\Z/2^k$ and truncate spin bordism to $\tau_{\le 2}\ko$, but the outline of the proof is not very different.
\begin{proof}
For any spaces $X$ and $Y$ and generalized cohomology theory $E$, there is a natural isomorphism
\begin{equation}
    E_*(X\wedge Y_+)\overset\cong\longrightarrow
        \widetilde E_*(X)\oplus \widetilde E_*(X\wedge Y),
\end{equation}
which is exactly the splitting of the $E_*(X\wedge\text{--})$-homology of $Y$ into the $E_*(X\wedge\text{--})$-homology of a point and the reduced $E_*(X\wedge\text{--})$-homology of $Y$. Therefore $\mathcal N_n$ is the direct sum of $(\tau_{\le 2}\ko)_n((B\Z/2)^{\sigma-1})$, which we computed in \cref{cMSH}, and $\widetilde{\mathcal N}_n\coloneqq (\tau_{\le 2}\ko)_n((B\Z/2)^{\sigma-1}\wedge (B\Z/2^k))$. We will focus on the latter, then implicitly direct-sum on $(\tau_{\le 2}\ko)_n((B\Z/2)^{\sigma-1})$ to obtain the groups in the lemma statement.

We attack $\widetilde{\mathcal N}_n$ with the hastened Adams spectral sequence. The $E_2$-page is $\mathcal Q$ applied to the $\cA(1)$-module
\begin{equation}
    \widetilde H^*((B\Z/2)^{\sigma-1}\wedge B\Z/2^k; \Z/2) \cong
    \widetilde H^*((B\Z/2)^{\sigma-1};\Z/2) \otimes_{\Z/2}
    \widetilde H^*(B\Z/2^k;\Z/2).
\end{equation}
There is an isomorphism
\begin{equation}
    \widetilde H^*(B\Z/2^k;\Z/2)\cong \textcolor{Green}{\Sigma \Z/2} \oplus \textcolor{MidnightBlue}{\Sigma^2 C\eta} \oplus \textcolor{Fuchsia}{\Sigma^3 C\eta}\oplus \overline F,
\end{equation}
where $\overline F$ is concentrated in degrees $6$ and above (see, e.g., \cite[Figure 7.5]{Cam17} or \cite[Proposition 13.20]{BDDHM25}). Recalling from around~\eqref{align_mess} that $R_6\coloneqq P\otimes C\eta$, we get
\begin{equation}
\widetilde H^*((B\Z/2)^{\sigma-1}\wedge B\Z/2^k; \Z/2) \cong
    \textcolor{Green}{\Sigma P} \oplus
    \textcolor{MidnightBlue}{\Sigma^2 R_6} \oplus
    \textcolor{Fuchsia}{\Sigma^3 R_6} \oplus F,
\end{equation}
where $F$ is concentrated in degrees $6$ and above (and so we can ignore it). We obtained $\mathcal Q(\textcolor{Green}{P})$ in \cref{pinm_HASS} and $\mathcal Q(R_6)$ in \cref{R6_example}, left, so we can draw the HASS $E_2$-page in \cref{R6_example_2}, left. For degree reasons, there is only one possible nonzero differential in this range, $d_2\colon E_2^{0,4}\to E_2^{2,5}$. Moreover, by inspecting the $E_2$-page, the value of $\mathcal N_3$ claimed in the lemma statement is equivalent to the claim that the differential in question is nonzero.

\begin{figure}[H]
\begin{subfigure}[c]{0.3\textwidth}
\begin{sseqdata}[name=QR6use2, classes = fill, scale=0.6, xrange={0}{5}, yrange={0}{4}, Adams grading, >=stealth,
x tick step = 2,
y tick step = 2,
x tick gap = 0.25cm,
y tick gap = 0.25cm,
x axis tail = 0.2cm,
y axis tail = 0.2cm,
y axis clip padding = 0cm,
x axis clip padding = 0cm,
class labels = {below = 0.07em, font=\tiny },
]
\begin{scope}[Green]
    \class(1, 0)
    \class(2, 1)\structline
    \class(3, 2)\structline
    \class(3, 1)\structline
    \class(3, 0)\structline
\end{scope}
\begin{scope}[draw=none, fill=none]
    \class(3, 1)\class(3, 2)
\end{scope}
\begin{scope}[MidnightBlue]
    \class(2, 0)
    \class["\phi"](4, 0)
    \class(4, 1)\structline
\end{scope}
\begin{scope}[Fuchsia]
    \class(3, 0)
    \class(5, 0)
    \class(5, 1)\structline
\end{scope}
\d[gray]2(4, 0)
\end{sseqdata}
\printpage[name=QR6use2, page=2]
\end{subfigure}
\begin{subfigure}[c]{0.3\textwidth}
    \printpage[name=QR6use2, page=3]
\end{subfigure}
\caption{Left: the $E_2$-page of the HASS computing $(\tau_{\le 2}\ko)_*((B\Z/2)^{\sigma-1}\wedge (B\Z/2^k)^{V-2})$. We use this spectral sequence in the proof of \cref{cNSH}, where we show that the pictured $d_2$ is nonzero. Right: the $E_3 = E_\infty$-page.}
\label{R6_example_2}
\end{figure}

Looking at \cref{R6_example_2}, left, the source of this differential, $E_2^{0,4}$, is isomorphic to $\Z/2$. Let $\phi\in E_2^{0,4}$ be the nonzero element. If $M$ is an $\ell$-connected $\cA(1)$-module (i.e.\ it vanishes in degrees $\ell$ and below), exactness of~\eqref{cQLES} implies the map $t\colon \Ext_{\cA(1)}(M)\to\mathcal Q(M)$ is an isomorphism for $t-s\le 4+\ell$; therefore, since $\widetilde H^*((B\Z/2)^{\sigma-1}\wedge B\Z/2^k; \Z/2)$ is $0$-connected, all classes in topological degree $\le 4$ are in the image of $t$. This includes $\phi$ and all possible values of $d_2(\phi)$, so if $\widetilde\phi$ is the unique preimage of $\phi$ in $\Ext_{\cA(1)}^{0,4}$, then $d_r(\phi)\ne 0$ if and only if $d_r(\widetilde\phi)\ne 0$ for all $r\ge 2$. Thus, it suffices to show $\widetilde\phi$ does not survive to the $E_\infty$-page in the Adams spectral sequence for $\ko$-homology: since $\widetilde\phi$ is in filtration $0$, it cannot be in the image of a differential, and the only differential it could possibly support is a $d_2$, for degree reasons.

Since $\widetilde\phi$ is in filtration $0$, it corresponds uniquely to an $\cA(1)$-module homomorphism
\begin{equation}
\Phi\colon \widetilde H^*((B\Z/2)^{\sigma-1}\wedge B\Z/2^k; \Z/2) \longrightarrow \Z/2.
\end{equation}
Since $E_2^{0,4}\cong\Z/2$, there must be a unique nonzero such $\cA(1)$-module homomorphism, and a straightforward calculation shows that such a homomorphism is nonzero on $Ux_1^2 y$.

The behavior of filtration-$0$ classes in an Adams spectral sequence for bordism is standard: see~\cite[\S 8.4]{FH21}. In particular, the following are equivalent.
\begin{enumerate}
    \item $\widetilde\phi$ survives to the $E_\infty$-page.
    \item\label{phi2} There is a closed, $4$-dimensional $(B\Z/2\times B\Z/2^k, \sigma \boxplus V)$-twisted spin manifold (see Footnote~\ref{shfoot}) $N$ with $\int_N x_1^2y\ne 0$.
\end{enumerate}
Moreover, using the Whitney sum formula and the definition of an $(X, V)$-twisted spin structure, one can show that the notion of twisted spin structure appearing in item~\ref{phi2} above is the data of a \pinm structure and a principal $\Z/2^k$-bundle $P\to N$, and that $x = w_1(N)$. Thus $\int_N x_1^2y = \int_N w_1(N)^2 y(P)$. Since we want to show that $d_2(\phi)\ne 0$ to finish the proof of the lemma, it will therefore suffice to show that there is no closed \pinm $4$-manifold $N$ with principal $\Z/2^k$-bundle $P\to N$ with $\int_N w_1(N)^2y(P)\ne 0$.

Now consider the Smith homomorphism from \cref{LES_is_Smith} associated to the data $\xi = \Spin$, $X = B\Z/2\times B\Z/2^k$, $V = 0$, and $W = \sigma$. Because the sphere bundle of $\sigma\to B\Z/2$ is contractible, the long exact sequence in \cref{Smith_LES} simplifies to an isomorphism
\begin{equation}
\label{smith_iso_GOPWW}
    \mathrm{sm}_\sigma\colon \widetilde \Omega_k^\Spin(B\Z/2\wedge B\Z/2^k) \longrightarrow \widetilde\Omega_{k-1}^{\Pin^-}(B\Z/2^k).
\end{equation}
Thus this map is called a \term{Smith isomorphism}. It is a special case of a general family of Smith isomorphisms discussed in~\cite[\S 7.1]{debray2024smith}; other examples in this family include the Smith isomorphisms discussed in~\cite{CF64, ABP69, Sto69, Uch70, Kom72, BG87a, BG87b, Guo:2018vij, COSY20, Hason:2020yqf, debray2025smith}. The example in~\eqref{smith_iso_GOPWW} was first studied in~\cite[\S 7.2.2]{Guo:2018vij}.

Recall that we have reduced the proof of the lemma to the assertion that there is no closed \pinm $4$-manifold $N$ and principal $\Z/2^k$-bundle $P\to N$ such that $\int_N w_1(N)^2y(P)\ne 0$. We can pull this back across~\eqref{smith_iso_GOPWW}: it suffices to show that there is no closed, spin $5$-manifold $W$ with principal $\Z/2$-bundle $Q_1\to W$ and principal $\Z/2^k$-bundle $Q_2\to W$ such that $\int_{\mathrm{sm}_\sigma(W)} w_1^2 y\ne 0$. By \cref{LES_is_Smith}, any smooth submanifold representative of the Poincaré dual of $x(Q_1)$ (i.e.\ the Euler class of the line bundle associated to $Q_1$) represents the bordism class $\mathrm{sm}_\sigma(W)$. That is, we want to show that for all $(W, Q_1, Q_2)$ as above,
\begin{equation}
\label{PD_int}
    \int_{\mathrm{PD}(x(Q_1))} w_1(\mathrm{PD}(x_1(Q_1))^2 \cdot y(Q_2|_{\mathrm{PD}(x_1(Q_1))}) = 0.
\end{equation}
where $\mathrm{PD}$ means any choice of submanifold representative of the Poincaré dual; the integral does not depend on this choice.

It is standard that if $i\colon N\hookrightarrow M$ is a smooth representative of the Poincaré dual of the Euler class $e(E)$ of a vector bundle $E\to M$, then the normal bundle $\nu$ of $N\subset M$ is isomorphic to $E|_N$, and that if $z\in H^*(M;\Z/2)$, then
\begin{equation}
\label{integral_transform}
    \int_N i^*(z) = \int_M e(E) i^*(z).
\end{equation}
In the situation at hand, $M$ is oriented, so $w_1(TN) = w_1(\nu)$ by the Whitney sum formula. Thus, applying~\eqref{integral_transform} to~\eqref{PD_int}, we obtain
\begin{equation}
    \int_{\mathrm{PD}(x_1(Q_1))} w_1(\mathrm{PD}(x_1(Q_1))^2 \cdot y(Q_2|_{\mathrm{PD}(x_1(Q_1))}) = \int_W x_1(Q_1)^3 y(Q_2).
\end{equation}
To finish the proof of the lemma, we will show this vanishes. Since $W$ is a closed, oriented $5$-manifold,
the Wu formula implies
\begin{equation}
    \int_W x_1(Q_1)^3 y(Q_2) = \int_W \Sq^1( x_1(Q_1)^2 y(Q_2)) = \int_W w_1(W) x_1(Q_1)^2 y(Q_2) = 0.
\end{equation}
We draw the $E_\infty$-page of this HASS in \cref{R6_example_2}, right.
\end{proof}

\begin{rem}
There is a potential hidden extension by $2$ in degree $2$ which our proof does not address; this is why $\mathcal N_2$ is left ambiguous in the statement of \cref{cNSH}. It is possible to show that this extension splits, so that $\mathcal N_2\cong\Z/8\oplus (\Z/2)^{\oplus 2}$. For $k = 2$ this follows from~\cite[Theorem 17]{Guo:2018vij}. One way to prove that the extension splits for all $k$ is to pull back across $\tau_{\le 2}$ and answer the equivalent question in $\ko$-homology, using that the multiplication-by-$2$ map factors as
\begin{equation}
    \ko_n(X) \overset{c}{\longrightarrow}
    \ku_n(X) \overset{b}{\longrightarrow}
    \ku_{n+2}(X) \overset{R}{\longrightarrow}
    \ko_n(X),
\end{equation}
where $c$ is complexification, $b$ is the complex Bott periodicity map, and $R$ is obtained from the realification map (see~\cite[Theorem 1]{Bru12}). By studying the effects of $c$, $b$, and $R$ on the corresponding Adams spectral sequences, one can show that their composition must vanish, so that $\widetilde{\mathcal N}_2$ contains no elements of order $4$.
\end{rem}
Using \cref{pinm_SH,,cMSH,,cNSH}, we write down the Smith long exact sequence~\eqref{eq:firstsmith} in low degrees in \cref{fig:LESinPin-}. Some of the maps are determined up to isomorphism by exactness; we also depict those in \cref{fig:LESinPin-}. These maps are calculated starting in degrees $0$ and $1$, and then propagating that information upwards in order to degrees $2$, $3$, and $4$ using exactness of the sequence.

\begin{figure}[ht]
 \hspace*{-2cm} 
{\scriptsize \begin{tikzcd}
	{k} & {(\tau_{\le 2}\ko)_n(M) \oplus
(\tau_{\le 2}\ko)_{n-1}(M)} & {(\tau_{\le 2}\ko)_n(M\wedge (B\Z/2^k)^{V-2})} & {(\tau_{\le 2}\ko)_{n-2}(M\wedge (B\Z/2^k)_+)} \\
    0 & \Z/2\arrow["\cong", r] & {\Z/2} & 0 \\
	1 & {\Z/2\oplus  \Z/2}
        \arrow["{[0\ 1]}", r, ->>] & \Z/2 & 0 \\
	2 & {\Z/8 \oplus \Z/2} \arrow["{[1\ 0]}", r, ->>]
    & \Z/4\arrow["0", r] & {\Z/2} \\
	3 & {\Z/8}\ar["1", r, ->>] & \Z/4\ar[r, "0"] & {(\Z/2)^{\oplus2}} \\
	4 & {0}
    & {\mathcal M_4}\ar[r, "\phi_2", hookrightarrow] & {\mathcal N_2} \\
	5 & {\Z/2}\ar[r, "\phi_3"] & {\mathcal M_5}\ar[r, "\phi_4", ->>] & {\Z/4\oplus\Z/2}\\
	\arrow["{\begin{bsmallmatrix}1\\0\end{bsmallmatrix}}"{pos=0.92}, from=4-4, to=3-2, in=-170, out=10, hookrightarrow]
    	\arrow["{\begin{bsmallmatrix}4 \, 0\\0\,1\end{bsmallmatrix}}"{pos=0.92}, from=5-4, to=4-2, in=-170, out=10, hookrightarrow]
        \arrow["\phi_1"{pos=0.92}, from=6-4, to=5-2, in=-170, out=10]
\end{tikzcd}}
    \caption{The long exact sequence~\eqref{eq:firstsmith}. We calculated the $\tau_{\le 2}\ko$-homology groups appearing in this sequence in \cref{pinm_SH,,cMSH,,cNSH}; $\mathcal N_2$, $\mathcal M_4$, and $\mathcal M_5$ were not completely determined by those lemmas. We use this long exact sequence in the proof of \cref{d265}.}
    \label{fig:LESinPin-}
\end{figure}

Since $\mathrm{Im}(\phi_1) = \ker(1\colon\Z/8\to\Z/4) = 4\Z/8\cong\Z/2$, we obtain a short exact sequence
\begin{equation}
\label{1st_smith_exact}
    \shortexact[\phi_2][\phi_1]{\mathcal M_4}{\mathcal N_2}{\mathrm{Im}(\phi_1)\cong \Z/2}.
\end{equation}
Recall from \cref{cMSH} that $\mathcal M_4$ is isomorphic to either $\Z/8\oplus\Z/2$ or $\Z/4\oplus\Z/2$, and from \cref{cNSH} that $\mathcal N_2$ is isomorphic to either $\Z/8\oplus (\Z/2)^{\oplus 2}$ or $\Z/8\oplus \Z/4$. Of the four possible options, only the two with $\mathcal M_4\cong\Z/8\oplus\Z/2$ are compatible with exactness of~\eqref{1st_smith_exact}. \Cref{cMSH} then tells us that $d_2(b) = d_2(c) = 0$ and that $\mathcal M_5$ is isomorphic to one of $\Z/8\oplus (\Z/2)^{\oplus 2}$ or $\Z/4\oplus (\Z/2)^{\oplus 2}$. In particular, $N\coloneqq |\mathcal M_5|$ is either $16$ or $32$. Since $\phi_4$ is surjective, $\mathrm{Im}(\phi_4)$ has order $8$, so $\ker(\phi_4) = \mathrm{Im}(\phi_3)$ has order $N/8$. Since the domain of $\phi_3$ is $\Z/2$, $\mathrm{Im}(\phi_3)$ has order at most $2$, so $N/8\le 2$, or $N\le 16$, implying $\mathcal M_4\cong \Z/4\oplus (\Z/2)^{\oplus 2}$.
\end{proof}

Dualizing the results of this lemma, we get twisted supercohomology groups:
\begin{itemize}
    \item $\SH^4(B\Z/2\times B\Z/2^k, x_1, y)\cong\Z/8\oplus\Z/2$.
    \item $\SH^5(B\Z/2\times B\Z/2^k, x_1, y) = \Z/2\oplus \Z/2 \oplus \Z/4$.
\end{itemize}

Thus the $d_3\colon E^{2,2}_3 \rightarrow E^{5,0}_3$ in the AHSS of \Cref{eq:E3timerev} vanishes. Consulting the $E_\infty$-page of the same AHSS that computes $\SH^5(B\Z/2\times B\Z/2^k, x_1, y)$, we see that there must be a generator of $\Z/2$ that resides in the DW layer. The class $e$ in \Cref{R6_example} does not appear in the analogous twisted spin cobordism computation, and so this $\Z/2$ generator must be in the Majorana layer as that is the only layer that can differ between supercohomology and spin cobordism. Therefore, the generator for $\Z/4$ must be in the Gu--Wen layer. This establishes \cref{prop:22kxy}.
\end{proof}
\begin{rem}
Essentially the same argument, just using $\Ext$ instead of $\mathcal Q$, can be used to show $\Omega_5^\Spin(B\Z/2\times B\Z/2^k, x_1, y)\cong\textcolor{RedOrange}{\Z/4} \oplus \textcolor{Green}{\Z/2}$.
\end{rem}

\subsubsection{Consequences for symmetry extension}
\label{subsubsec:cons}
In this example the symmetry algebra not only includes fermion parity and a $\Z/2^{k+1}$ unitary symmetry in which the generator $g$ satisfies $g^k = (-1)^F$, but also a $\Z/2$ time-reversal symmetry, which reverses the orientation of the background manifold. This corresponds to a $G$-structure for the group $G = \Pin^+\times_{\set{\pm 1}}\Z/2^{k+1}$.\footnote{This structure is equivalent to $\Pin^-\times_{\set{\pm 1}}\Z/2^{k+1}$ via an automorphism of $\Z/2\times\Z/2^{k+1}$, analogously to how $\Pin^c$ is isomorphic to both $\Pin^+\times_{\set{\pm 1}}\U_1$ and $\Pin^-\times_{\set{\pm 1}}\U_1$. Thus, depending on one's choice of generator $T$ for the time-reversal symmetry, one could have $T^2 = 1$ or $T^2 = (-1)^F$.} When $k = 1$, this is the ``pin-$\Z/4$ structure'' studied by Montero--Vafa~\cite{MV21} and Krulewski--Stehouwer~\cite{KS}; in general, this structure is analogous to a pin\textsuperscript{$c$} structure, with $\U_1$ replaced by $\Z/2^{k+1}$. Thus, analogously to how a pin\textsuperscript{$c$} structure is equivalent to a $(B\Z/2\times B\U_1, x_1, c_1)$-twisted spin structure~\cite[\S 10]{FH}, where $x_1\in H^1(B\Z/2;\Z/2)$ and $c_1\in H^2(B\U_1;\Z/2)$ are the generators, $\Pin^+\times_{\set{\pm 1}}\Z/2^{k+1}$ structures are equivalent to $(B\Z/2\times B\Z/2^k, x_1, y)$-twisted spin structures.

For the rest of this subsection, assume $k>1$. By \Cref{prop:22kxy}, there is an isomorphism $\SH^5(B\Z/2 \times B\Z/2^k,x_1,y) \cong \Z/2 \oplus \Z/2 \oplus \Z/4$, and we may choose the isomorphism such that
\begin{itemize}
    \item the class $\alpha_{\mathrm{Maj}} \coloneqq (1, 0, 0)$ is in the Majorana layer,
    \item the class $\alpha_{\mathrm{DW}} \coloneqq(0, 1, 0)$ is in the Dijkgraaf--Witten layer, and
    \item the class $\alpha_{\mathrm{GW}} \coloneqq(0, 0, 1)$ is in the Gu--Wen layer.
\end{itemize}
Moreover, it follows from \cref{long_HASS_analysis}, part~\eqref{cokernel} that $\alpha_{\mathrm{Maj}}$ generates the kernel of the map to $\mho_\Spin^5(B\Z/2\times B\Z/2^{k}, x_1, y)$, so we will focus on $\alpha_{\mathrm{DW}}$ and $\alpha_{\mathrm{GW}}$.

\begin{proof}[Proof of \Cref{thm:timerev}]
Since we do not know which of $(-1)^{x_1^4x}$, $(-1)^{xy^2}$ corresponds to $\alpha_{\mathrm{DW}}$, we will trivialize all of the classes on the $E_\infty$-page that could correspond to $\alpha_{\mathrm{DW}}$ and $\alpha_{\mathrm{GW}}$ by pulling back to $\textcolor{BrickRed}{B\Z/8}\times \textcolor{MidnightBlue}{B\Z/2^{k+1}}$. These classes are $x_1^3x\in E_\infty^{4,1}$, $(-1)^{x_1^4x}\in E_\infty^{5,0}$, $(-1)^{xy^2}\in E_\infty^{5,0}$, and linear combinations of them. Thus it suffices to trivialize these three classes.

To trivialize ${x_1^3 x}$ and $(-1)^{x_1^4x}$, first pull back to $\SH^5(B\Z/4\times B\Z/2^k,x_1,y)$, so $x_1^2\mapsto 0$. This implies that for the Dijkgraaf--Witten layer, $(-1)^{x_1^4x}\mapsto 0$ as well, but it does not suffice to trivialize $\alpha_{\mathrm{GW}}$ (corresponding to $x_1^3x$) -- all we know is that it pulls back to some class in the Dijkgraaf--Witten layer.

Thus, to trivialize $\alpha_{\mathrm{GW}}$, we may pull back to $B\Z/4 \times B\Z/2^k$, then work in $x_1$-twisted $\C^\times$-cohomology. 
\begin{lem}
\label{2is0}
For $p,q\ge 2$, $2 = 0$ in $H^*(B\Z/2^p\times B\Z/2^q;\C^\times_{x_1})$.
\end{lem}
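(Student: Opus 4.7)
My plan is to reduce the claim to showing that the twisted integral cohomology $H^{\ge 1}(X;\Z_{x_1})$ is annihilated by $2$ (where $X = B\Z/2^p\times B\Z/2^q$), and then to exploit the especially simple form that $H^*(B\Z/2^p;\Z_{x_1})$ takes. I do not actually need the hypothesis $p,q\ge 2$ for this argument; finiteness of both groups (of $2$-power order) is enough.

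For the reduction, I would use that $\C^\times\cong \U(1)\times \R_{>0}$ as topological $\Z/2$-modules (conjugation inverts the circle and fixes $\R_{>0}$), so $\C^\times_{x_1}\cong \U(1)_{x_1}\oplus \R$ and hence
\begin{equation*}
H^*(X;\C^\times_{x_1}) \cong H^*(X;\U(1)_{x_1})\oplus H^*(X;\R).
\end{equation*}
Since $X$ is the classifying space of a finite group, Maschke's theorem makes $H^{>0}(X;\R)$ and $H^*(X;\R_{x_1})$ both vanish, so the exponential short exact sequence $0\to\Z_{x_1}\to\R_{x_1}\to\U(1)_{x_1}\to 0$ yields the degree-shift isomorphism $H^n(X;\U(1)_{x_1})\cong H^{n+1}(X;\Z_{x_1})$. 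Thus the lemma reduces to showing that $H^{\ge 1}(X;\Z_{x_1})$ is $2$-torsion.

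Next, I would compute $H^*(B\Z/2^p;\Z_{x_1})$ directly from the standard periodic free resolution of $\Z$ over $\Z[\Z/2^p]$. Applying $\mathrm{Hom}_{\Z[\Z/2^p]}(-,\Z_{x_1})$: since the generator $T$ acts on $\Z_{x_1}$ as $-1$, the differentials alternate between multiplication by $-2$ (from $T-1$) and by $0$ (from the norm $N$, because $\sum_{i=0}^{2^p-1}(-1)^i = 0$). The resulting cochain complex $\Z\xrightarrow{-2}\Z\xrightarrow{0}\Z\xrightarrow{-2}\cdots$ has cohomology $\Z/2$ in odd positive degrees and $0$ otherwise; in particular, every group is $2$-torsion.

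Finally, because $\Z_{x_1}$ on $X$ is pulled back from $B\Z/2^p$, the twisted cellular cochain complex of $X$ factors as $C^*(B\Z/2^p;\Z_{x_1})\otimes_\Z C^*(B\Z/2^q;\Z)$, and the algebraic K\"unneth theorem exhibits $H^n(X;\Z_{x_1})$ as a (split) direct sum of $\otimes_\Z$ and $\mathrm{Tor}^\Z_1$ pieces of $H^*(B\Z/2^p;\Z_{x_1})$ against $H^*(B\Z/2^q;\Z)$. Each such piece is $2$-torsion because one of its factors already is, so $H^n(X;\Z_{x_1})$ is $2$-torsion; combined with the reduction this proves the lemma. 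The one delicate ingredient is the K\"unneth step, which requires the coefficient local system to be an external product of local systems on the two factors; this is the technical point on which the argument rests, but it holds here because the twist $x_1$ depends only on the first factor.
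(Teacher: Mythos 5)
Your proof is correct and follows essentially the same route as the paper's: reduce to $\Z_{x_1}$-coefficients via the exponential sequence (using that the real cohomology of the classifying space of a finite group vanishes where needed), then apply the K\"unneth formula for a product whose twist is pulled back from one factor, using that $H^*(B\Z/2^p;\Z_{x_1})$ is everywhere $2$-torsion. The only difference is that you compute $H^*(B\Z/2^p;\Z_{x_1})$ explicitly from the periodic resolution instead of citing it, and that computation (together with your observation that $p,q\ge 2$ is not actually needed) is correct.
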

\begin{proof}
Use the long exact sequence associated to $0\to\Z\to\R\to\C^\times\to 0$ as usual to reduce to the analogous claim with $\Z_{x_1}$ coefficients.
The result then follows from the Künneth formula for twisted cohomology and the calculations of $H^*(B\Z/2^p;\Z)$ and $H^*(B\Z/2^p;\Z_x)$, which can be found in \cref{p_on_coh} and \cite[Lemma A.12]{Debray:2023iwf}, respectively.
\end{proof}
\begin{lem}
\label{mod_2_twisted}
Let $\alpha \in H^6(B\Z/4\times B\Z/2^k;\Z/2)$ be a class in the image of the twisted mod $2$ reduction map $\widetilde r_2\colon H^6(\bl;\Z_{x_1})\to H^6(\bl;\Z/2)$. Then the pullback of $\alpha$ to $H^6(\textcolor{BrickRed}{B\Z/8}\times \textcolor{MidnightBlue}{B\Z/2^{k+1}};\Z/2)$ vanishes.
\end{lem}
\begin{proof}
The set 
$\{y_1^3, y_1^2y_2, y_1y_2^2, y_2^3, x_1x_2y_1^2, x_1x_2 y_1y_2, x_1x_2y_2^2\}$ is a basis for $H^6(B\Z/4\times B\Z/2^k;\Z/2)$, where $x_1$ and $y_1$ come from $B\Z/4$ and $x_2$ and $y_2$ come from $B\Z/2^k$. Thus every class is either $y_1$ or $y_2$ times some degree-$4$ class. But $y_1$ and $y_2$ pull back to $0$ for $\textcolor{BrickRed}{\Z/8}\times \textcolor{MidnightBlue}{\Z/2^{k+1}}$, as follows from \cref{p_on_coh} after mod $2$ reduction, so $\alpha\mapsto 0$.
\end{proof}

By \cref{mod_2_twisted}, when we pull $\alpha_{\mathrm{GW}}$ back to $H^5(\textcolor{BrickRed}{B\Z/8}\times \textcolor{MidnightBlue}{B\Z/2^{k+1}};\C^\times_{x_1})$, its mod $2$ reduction vanishes, but by \cref{2is0}, this implies the pullback of $\alpha_{\mathrm{GW}}$ is $0$.

This leaves $(-1)^{xy^2}$. Pull back to $\SH^5(B\Z/2 \times \textcolor{MidnightBlue}{B\Z/2^{k+1}},x_1,0)$, in which $y$ trivializes. Thus if we pull back to $\textcolor{BrickRed}{B\Z/8}\times \textcolor{MidnightBlue}{B\Z/2^{k+1}}$, all three of these classes map to $0$.

\end{proof}

\bibliography{reference}
\bibliographystyle{alpha}

\end{document}